\documentclass[11pt, draftclsnofoot, onecolumn,article,romanappendices]{IEEEtran}
\hsize=6.5in
\usepackage{setspace}
\doublespacing

\usepackage{cite}
\ifCLASSINFOpdf
\usepackage[pdftex]{graphicx}
\else
\usepackage[dvips]{graphicx}
\fi

\usepackage{subfigure}
\usepackage{hyperref}
\usepackage[cmex10]{amsmath}
\usepackage{amssymb}
\usepackage{amsfonts}
\usepackage{amsthm}
\usepackage{array}
\usepackage{dsfont}
\usepackage{epstopdf}

\newtheorem{lemma}{Lemma}
\newtheorem{theorem}{Theorem}

\newtheorem{corollary}{Corollary}

\newtheorem{proposition}{Proposition}

\long\def\symbolfootnote[#1]#2{\begingroup%
\def\thefootnote{\fnsymbol{footnote}}\footnote[#1]{#2}\endgroup}



\begin{document}

\title{Distortion Exponent in MIMO Fading Channels with Time-Varying Source Side Information}

%
\author{
  \IEEEauthorblockN{I\~naki Estella Aguerri and Deniz G\"{u}nd\"{u}z\footnote{
  I\~naki Estella Aguerri is with the Mathematical and Algorithmic Sciences Lab, France Research Center, Huawei Technologies Co. Ltd., Boulogne-Billancourt, France. E-mail: {\tt inaki.estella@huawei.com \tt}.
  Deniz G\"{u}nd\"{u}z is with the Department of Electrical and Electronic Engineer at Imperial College London, London, UK. E-mail: {\tt  d.gunduz}{\tt @imperial.ac.uk}\\
    Part of the reseach was done during the Ph.d. studies of I\~naki Estella Aguerri at Imperial College London.  This paper was presented in part at the  IEEE International Conference on
Communications, Kyoto, Japan, Jun. 2011  \cite{estella2011icc}, at the IEEE International Symposium
on Information Theory, St. Petersburg, Russia, Aug. 2011 \cite{ Estella2011DistExponent}, and at the  
IEEE Global Conference on Signal and Information Processing (GlobalSIP),Texas, U.S., Dec.  2013\cite{Estella2013DistSideInfo}
.}\\}
}

\maketitle

\vspace{-1.5cm}
\begin{abstract}
Transmission of a Gaussian source over a time-varying multiple-input multiple-output (MIMO) channel is studied under strict delay constraints. Availability of  a correlated side information at the receiver is assumed, whose quality, i.e., correlation with the source signal, also varies over time. A block-fading model is considered for the states of the time-varying channel and the time-varying side information; and perfect
state information at the receiver is assumed, while the transmitter
knows only the statistics. The high SNR performance, characterized by the \textit{distortion exponent}, is studied for this joint source-channel coding problem. An upper bound is derived and compared with lowers based on list decoding, hybrid digital-analog transmission, as well as  multi-layer schemes which transmit successive refinements of the source, relying on progressive and superposed transmission with list decoding.
The optimal distortion exponent is characterized for the single-input multiple-output (SIMO) and multiple-input single-output (MISO) scenarios  by showing that the distortion exponent achieved by multi-layer superpositon encoding with joint decoding meets the proposed upper bound. In the MIMO scenario, the optimal distortion exponent is characterized in the low bandwidth ratio regime, and it is shown that the multi-layer superposition encoding performs very close to the upper bound in the high bandwidth expansion regime.
\end{abstract}

\vspace{-0.5cm}

\begin{IEEEkeywords} Distortion exponent, time-varying channel and side information,  multiple-input multiple-output (MIMO), joint source-channel coding, list decoding,   broadcast codes, successive refinement. \end{IEEEkeywords}

\IEEEpeerreviewmaketitle
\section{Introduction}

Many applications in wireless networks require the transmission of a
source signal over a fading channel, i.e., multimedia
signals over cellular networks or the accumulation of local
measurements at a fusion center in sensor networks, to be reconstructed with the
minimum distortion possible at the destination. In many
practical scenarios, the destination receives additional correlated
side information about the underlaying source signal, either form other transmitters in the network, or
through its own sensing devices. For example, measurements from
other sensors at a fusion center, signals from repeaters in digital
TV broadcasting, or relay signals in mobile networks.

The theoretical benefits of having  correlated side information
at the receiver for source encoding are well known \cite{wyner1978rate}. However, similar to estimating the channel state information at the transmitter, it is costly to provide an estimate of the available side information to the transmitter, or may even be impossible in uncoordinated scenarios. Without the knowledge of the channel and the side information states, a transmitter needs to transmit in a manner that can adapt dynamically to the time-varying channel and side information qualities without knowing their realizations.

Here, we consider the joint source-channel coding problem of transmitting a Gaussian source over a multiple-input multiple-output (MIMO) block-fading channel when the receiver has access to time-varying correlated source side information. Both the time-varying channel and the source side-information are assumed to follow block-fading models, whose states are unknown at the transmitter. Moreover, strict delay constraints apply requiring the transmission of a block of source samples, for which the side-information state is constant, over a block of the channel, during which the channel state is also constant. The source and channel blocks do not necessarily have the same length, and their ratio is defined as the \emph{bandwidth ratio} between the channel and the source bandwidths.

We are interested in minimizing the average end-to-end distortion of the reconstructed source samples, averaged over many blocks. This may correspond to the average distortion over video frames in a video streaming application, where each frame has to be transmitted under a strict delay constraint.

When the knowledge of the channel and side information states is available at both the transmitter and the receiver (CSI-TR), Shannon's separation theorem applies \cite{Shamai:IT:98}, assuming that the channel and source blocks are sufficiently long.  However, the optimality of separation does not extend to non-ergodic scenarios such as the model studied in this paper, since each source block is required to be transmitted over a single channel block.
We note that the suboptimality of separate source and channel coding is dependent on the performance criterion under study. For example, it was shown in \cite{Peng2010:DistOut} that, if, instead of the average distortion, the outage distortion is considered, separate source and channel coding is still optimal.

This problem has been studied extensively in the literature in the absence of correlated side information at the receiver \cite{ng2007minimumLayered, Etemadi2006OptLaye, Alnuweiri2012UtilityMaxBC}. Despite the ongoing efforts, the minimum achievable average distortion remains an open problem; however, more conclusive results on the performance can be obtained by studying the \emph{distortion exponent}, which characterizes the exponential decay of the expected distortion in the high SNR regime \cite{Laneman:2004}. The distortion exponent has been studied for parallel fading channels in \cite{Gunduz2006DistExpParall}, for the relay channel in \cite{Gunduz2007RelayChann}, for point-to-point MIMO channels in \cite{gunduz2008joint}, for channels with feedback in \cite{Gunduz2009DistExpFeedback}, for the two-way relay channel in \cite{Wang2011DistExpTwoWayRelay}, for the interference channel in \cite{Zhao2010Interference}, and in the presence of side information that might be absent in \cite{Zhao2010SideInfoGaussian}. In the absence of source side information at the receiver, the optimal distortion exponent in MIMO channels is known in some regimes of operation, such as the large bandwidth regime \cite{gunduz2008joint} and the low bandwidth regime \cite{Caire2007hybrid}. However, the general problem remains open. In \cite{gunduz2008joint} successive refinement source coding followed by superposition transmission is shown to achieve the optimal distortion exponent for high bandwidth ratios in MIMO systems. The optimal distortion exponent in the low bandwidth  ration regime is achieved through hybrid digital-analog transmission \cite{gunduz2008joint, Caire2007hybrid}. In \cite{bhattad2008distortion}, superposition multi-layer schemes are
shown to achieve the optimal distortion exponent for some other bandwidth ratios as well. 

The source coding version of our problem, in which the encoder and decoder are connected by an error-free finite-capacity link, is studied in \cite{ng2007minimum}. The single-input single-output (SISO) model in the presence of a time-varying channel and side information is considered for matched bandwidth ratios in \cite{Estella2013Systematic}, where uncoded transmission is shown to achieve the minimum expected distortion for certain side information fading gain distributions, while separate source and channel coding is shown to be suboptimal in general. A scheme based on list decoding at the receiver, is also proposed in \cite{Estella2013Systematic}, and it is shown to outperform separate source and channel coding by exploiting the joint quality of the channel and side information states.

Our goal in this work is to find tight bounds on the distortion exponent when transmitting a Gaussian source over a time-varying MIMO channel in the presence of time-varying correlated source side information at the receiver\footnote{Preliminary results have been published in the conference version of this work in \cite{estella2011icc} for SISO channels and in \cite{Estella2011DistExponent} and \cite{Estella2013DistSideInfo} for MIMO channels.}. 

The main results of this work can be summarized as follows:
\begin{itemize}
\item We derive an upper bound on the distortion exponent by providing the channel state realization to the transmitter, while the source side information state remains unknown.
\item We characterize the distortion exponent achieved by the list decoding (LD) scheme. While this scheme achieves a lower expected distortion than SSCC, we show that it does not improve the distortion exponent.
\item Based on LD, we consider a hybrid digital-analog list decoding scheme (HDA-LD) and extend LD by considering multi-layer transmission, where each layer carries successive refinement information for the source sequence. We consider both the progressive (LS-LD) and superposition (BS-LD) transmission of these layers,  and derive the respective distortion exponents.
\item We show that the distortion exponent achieved by BS-LD meets the proposed upper bound for SISO/SIMO/MISO systems, thus characterizing the optimal distortion exponent in these scenarios. We show that HDA-LD also achieves the optimal distortion exponent in SISO channels.
\item In the general MIMO setup, we characterize the optimal distortion exponent in the low bandwidth ratio regime, and show that it is achievable by both HDA-LD and BS-LD. In addition, we show that, in certain regimes of operation, LS-LD outperforms all the other proposed schemes.
\end{itemize}

We will use the following notation in the rest of the paper. We denote random variables with upper-case letters, e.g., $X$, their realizations with lower-case letters, e.g., $x$, and the sets with calligraphic letters, e.g. $\mathcal{A}$. We denote $\mathrm{E}_{X}[\cdot]$ as the expectation
with respect to $X$, and $\mathrm{E}_{\mathcal{A}}[\cdot]$ as the expectation
over the set $\mathcal{A}$. We denote random vectors as $\mathbf{X}$ with realizations $\mathbf{x}$.  We denote by $\mathds{R}^+$ the set of positive real numbers, and by $\mathds{R}^{++}$ the set of strictly positive real numbers in $\mathds{R}$, respectively. We define $(x)^+=\max\{0, \nu \}$. Given two functions $f(x)$ and $g(x)$, we use $f(x)\doteq g(x)$ to denote the exponential equality $\lim_{x\rightarrow\infty}\frac{\log f(x)}{\log g(x)}=1$, while  $\stackrel{.}{\geq}$ and $\stackrel{.}{\leq}$ are defined similarly.

The rest of the paper is organized as follows. The problem statement is given in Section \ref{sec:ProblemStatement}. Two upper bounds on the distortion exponent are derived in Section \ref{sec:UpperBounds}. Various achievable schemes are
studied in Section \ref{sec:SingleLayer}. The characterization of the optimal distortion exponent for certain regimes is relegated to Section \ref{sec:Comments}. Finally, the conclusions are presented in Section \ref{sec:Conclusions}.


\section{Problem Statement}\label{sec:ProblemStatement}

\begin{figure}
\centering
\includegraphics[width=0.65\textwidth]{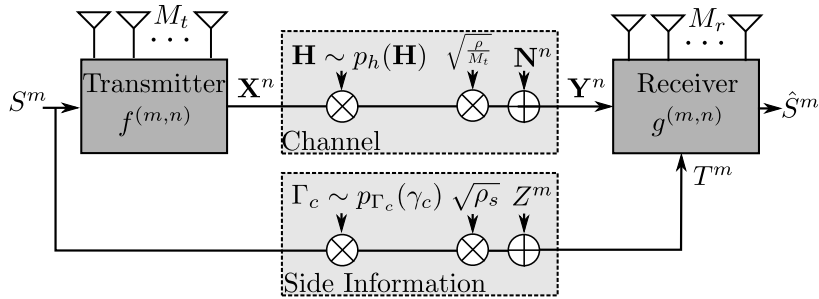}
\vspace{-3 mm} \caption{Block diagram of the joint source-channel
coding problem with fading channel and side information qualities.}
\label{fig:Model}
\end{figure}

We wish to transmit a zero mean, unit variance complex Gaussian
source sequence $S^m \in \mathds{C}^m$ of independent and
identically distributed (i.i.d.) random variables, i.e., $S_i\sim
\mathcal{CN}(0,1)$, over a complex MIMO block Rayleigh-fading channel with
$M_t$ transmit and $M_r$ receiver antennas, as shown in Figure
\ref{fig:Model}. In addition to the channel output, time-varying correlated source side information is also available at the receiver. Time-variations in the source side information are assumed to follow a block fading model as well. The channel and the side
information states are assumed to be constant for the duration of
one block, and independent of each other, and among different blocks. We assume that
each source block is composed of $m$ source samples, which, due to
the delay limitations of the underlying application, must
be transmitted over one block of the channel, which consists of $n$
channel uses. We define the \emph{bandwidth ratio} of the system as
\begin{IEEEeqnarray}{rCl}
 b\triangleq\frac{n}{m}\quad \textit{channel dimension per  source sample}. 
\end{IEEEeqnarray}

The encoder maps each source sequence $S^{m}$ to a channel input
sequence $\mathbf{X}^n=[\mathbf{X}_1,...,\mathbf{X}_n]\in \mathds{C}^{M_t\times n}$ using an encoding
function $f^{(m,n)}:\mathds{C}^m\rightarrow\mathds{C}^{M_t\times n}$ such that the
average power constraint is satisfied: $\sum_{i=1}^n\mathrm{Tr}\{\mathrm{E}[\mathbf{X}_i^{H}
\mathbf{X}_i]\}\leq n\cdot M_t $. If codeword $\mathbf{x}^n$ is transmitted, the signal received at the destination is modeled by memoryless slow fading channel 
\begin{equation}
\mathbf{Y}_i = \sqrt{\frac{\rho}{M_t}}\boldsymbol{\mathsf{H}} \mathbf{x}_i +
\mathbf{N}_i,\qquad i=1,...,n,
\end{equation}
where $\boldsymbol{\mathsf{H}}\in \mathds{C}^{M_r\times M_t}$ is the channel matrix with  i.i.d. zero mean complex Gaussian entries, i.e., $H_{ij}\sim\mathcal{CN}(0,1)$, whose realizations are denoted by $\mathbf{H}$, $\rho\in \mathds{R}^+$ is the average signal to noise ratio (SNR) in the channel, and $\mathbf{N}_i$
models the additive noise with
$\mathbf{N}_i\sim\mathcal{CN}(0,\mathbf{I})$.  We define $M^*
=\max\{M_t,M_r\}$ and $M_*=\min\{M_t, M_r\}$, and consider  $\lambda_{M_*}\geq\cdots\geq\lambda_1>0$ to be the eigenvalues of $\boldsymbol{\mathbf{H}}\boldsymbol{\mathbf{H}}^H$.

 In addition to the channel output
$\mathbf{Y}^n=[\mathbf{Y}_1,...,\mathbf{Y}_{n}]\in\mathds{C}^{M_r\times n}$, the decoder observes $T^m\in\mathds{C}^m$, a randomly degraded version of the source sequence:
\begin{IEEEeqnarray}{rCl}
T^m=\sqrt{\rho_s} \Gamma_c  S^m + Z^m,
\end{IEEEeqnarray}
where $\Gamma_c\sim\mathcal{CN}(0,1)$ models time-varying Rayleigh fading
in the quality of the side information, $\rho_s\in \mathds{R}^+$ models the average quality of the side information, and $Z_j\sim\mathcal{N}(0,1)$, $j=1,...,m$, models the noise. We define the \emph{side information gain} as $\Gamma\triangleq |\Gamma_c|^2$, and its realization as $\gamma$. Then, $\Gamma$ follows an exponential distribution with probability density function (pdf):
\begin{IEEEeqnarray}{rCl}
p_{\Gamma}(\gamma)=e^{-\gamma},\qquad \gamma\geq0.
\end{IEEEeqnarray}

In this work, we assume that the receiver knows the side information and channel realizations, $\gamma$ and $\mathbf{H}$, while the transmitter is only aware of their distributions. The receiver reconstructs the source
sequence $\hat{S}^m=g^{(m,n)}(\mathbf{Y}^n,T^m,\mathbf{H},\gamma)$ with a
mapping $g^{(m,n)}:\mathds{C}^{n\times M_r} \times
\mathds{C}^m\times \mathds{C}^{M_t\times M_r}\times \mathds{R}
\rightarrow \mathds{C}^m$. The distortion between the source sequence and the reconstruction is
measured by the quadratic average
distortion $D\triangleq\frac{1}{m}\sum^m_{i=1}|S_i-\hat{S}_i|^2$.

 We are interested in characterizing the minimum
\emph{expected distortion}, $\mathrm{E}[D]$, where the expectation is taken
with respect to the source, the side information and channels state realizations, as well as the noise terms, and expressed as
\begin{IEEEeqnarray}{rCl}
ED^*(\rho,\rho_s,b)\triangleq\lim_{
n,m\rightarrow\infty}\max_{n \leq m b}
\min_{f^{(m,n)},g^{(m,n)}}\mathrm{E}[D].
\end{IEEEeqnarray}

In particular, we are interested in characterizing  the optimal performance in the high SNR regime, i.e., when $\rho,
\rho_s \rightarrow \infty$. We define $\nu$ as a measure of the average \emph{side information quality} in the high SNR regime, as follows:
\begin{IEEEeqnarray}{rCl}
 \nu\triangleq\lim_{\rho\rightarrow\infty}
 \frac{\log \rho_s}{\log \rho}.
 \end{IEEEeqnarray}

Parameter $\nu$ captures the increase in the quality of the side information with respect to the average SNR in the channel. For example, if the side information is made available to the receiver through other transmissions, if the average SNR in the channel increases, so does the side information quality.

The performance measure we consider is the
\emph{distortion exponent}, defined as
\begin{align}
\Delta(b,\nu)\triangleq-\lim_{
\substack{\rho,
\rho_s \rightarrow \infty\\\rho_s \doteq \rho^{\nu}}
}\frac{\log{\mathrm{E}[D]}}{\log{\rho}}.
\end{align}

\section{Distortion Exponent Upper Bound}\label{sec:UpperBounds}
In this section we derive an upper bound on the distortion exponent by extending the bound on the expected distortion $ED^*$ obtained in \cite{Estella2013Systematic} to the MIMO setup with bandwidth mismatch, and analyzing the high SNR behavior.
The upper bound is constructed by providing the transmitter with only  the channel state realization, $\mathbf{H}$, while the side information state, $\gamma$, remains unknown. We call this the \emph{partially informed encoder upper bound}. The optimality of separate source and channel coding is shown in \cite{Estella2013Systematic} when the side information fading gain distribution is discrete, or continuous and quasiconcave for $b=1$. The proof easily extends to the non-matched bandwidth ratio setup and, since in our model $p_{\Gamma}(\gamma)$ is exponential, and hence, is continuous and quasiconcave, separation is optimal at each channel block\footnote{Although in our setup the side information state $\Gamma_c$ is complex, the receiver can always correct the phase to have an equivalent real side information state with Rayleigh amplitude. Our setup then reduces to the two parallel problems of reconstructing  the real and imaginary parts of $S^m$ with the same side information gain, and all the techniques of \cite{Estella2013Systematic} can be applied. }.

As shown in \cite{ng2007minimum, Estella2013Systematic}, if $p_{\Gamma}(\gamma)$ is monotonically decreasing, the optimal source encoder ignores the side information completely, and the side-information is used only at the decoder for source reconstruction\footnote{ We note that when the distribution of the side information is not Rayleigh, the optimal encoder follows a different strategy. For example, for quasiconcave continuous distributions the optimal source code compresses the source aiming at a single target side information state. See \cite{Estella2013Systematic} for
details.}. Concatenating this side-information-ignorant source code with a channel code at the instantaneous capacity, the minimum expected distortion at each channel state $\mathbf{H}$ is given by
\begin{align}
D_{\text{op}}(\rho,\rho_s,b,\mathbf{H})=
\frac{1}{\rho_s}e^{\frac{2^{b\mathcal{C}(\mathbf{H})}}{\rho_s}}E_{1}\left(\frac{2^{b\mathcal{C}(\mathbf{H})}}{\rho_s}\right),
\end{align}
where $E_1(x)$ is the exponential integral given by
$E_1(x)=\int_{x}^{\infty}t^{-1}e^{t}dt$.
Averaging over the channel state realizations, the expected distortion is lower bounded as
\begin{IEEEeqnarray}{rCl}\label{eq:LowerBound}
ED_{\text{pi}}^*(\rho,\rho_s,b)&=&\mathrm{E}_{\boldsymbol{\mathsf{H}}}[D_{\text{op}}(\rho,\rho_s,b,\boldsymbol{\mathsf{H}})].
\end{IEEEeqnarray}

Then, an upper bound on the distortion exponent is found by analyzing the high SNR behavior of (\ref{eq:LowerBound}). This upper bound will be expressed in terms of the diversity-multiplexing tradeoff (DMT), which measures the tradeoff between the rate and reliability in the transmission of a message over a MIMO fading channel in the asymptotic high SNR regime \cite{zheng2003diversity}.
For a family of channel codes with rate $R=r\log \rho$, where $r$ is the \textit{multiplexing gain}, the 
 DMT is the piecewise-linear function $d^*(r)$ connecting the points $(k,d^*(k))$, $k=0,...,M_*$, where $d^*(k)=(M^*-k)(M_*-k)$. More specifically, for $r\geq M_*$, we have $d^*(r)=0$, and for $0\leq r\leq M_*$ satisfying $k\leq r\leq k+1$ for some $k=0,1,..., M_*-1$, the DMT curve is characterized by
\begin{IEEEeqnarray}{rCl}\label{eq:DMTcurve}
d^*(r)&\triangleq& \Phi_k-\Upsilon_k(r-k),
\end{IEEEeqnarray}
where we have defined
\begin{IEEEeqnarray}{rCl}\label{eq:ML_PhiUpsilon}
\Phi_k &\triangleq& (M^* - k) (M_* - k) \quad\text{and }\quad
\Upsilon_k \triangleq (M^* + M_* - 2 k - 1 ).
\end{IEEEeqnarray}

\begin{theorem} \label{the:UpperBound}
Let $l=1$ if  $\nu/M_*< M^*-M_*+1$, and let $l\in\{2,...,M_*\}$ be the integer satisfying $2l-3+M^*-M_*\leq \nu/M_*<2l-1+M^*-M_*$ if $M^*-M_*+1\leq \nu/M_*< M^*+M_*-1$. The distortion exponent is
upper bounded by
\begin{IEEEeqnarray}{lCl}
\Delta_{up}(b,\nu)=\begin{cases}
\nu&\text{if }0\leq b<\frac{ \nu}{M_*},\\
bM_*&\text{if }\frac{\nu}{M_*}\leq b< M^*-M_*+1,\\
\nu+d^*\left(\frac{\nu}{b}\right)&\text{if }
1+M^*-M_*\leq b<2l-1+M^*-M_*,\\
\nu+d^*\left(\frac{\nu}{b}\right)&\text{if }
2l-1+M^*-M_*\leq b<\frac{\nu}{M_*-k},\\
\Delta_{\mathrm{MIMO}}(b)&\text{if
}\frac{\nu}{M_*-k}\leq b<M^*+M_*-1,\\
\nu+d^{*}\left(\frac{\nu}{b}\right)&\text{if
} M^*+M_*-1\leq b,
\end{cases}
\end{IEEEeqnarray}
where $k\in\{l,..., M_*-1\}$ is the integer satisfying $2k-1+M^*-M_*\leq b <2k+1+M^*-M_*$, and
\begin{IEEEeqnarray}{rCl}\label{eq:MIMOUpperBound}
\Delta_{\mathrm{MIMO}}(b)\triangleq\sum_{i=1}^{M_*}\min\{b,2i-1+M^*-M_*\}.
\end{IEEEeqnarray}

 If $\nu/M_*\geq M^*+M_*-1$, then
\begin{IEEEeqnarray}{lCl}
\Delta_{up}(b,\nu)=\nu+d^*\left(\frac{\nu}{b}\right),
\end{IEEEeqnarray}
where $d^*(r)$ is the DMT characterized in (\ref{eq:DMTcurve})-(\ref{eq:ML_PhiUpsilon}).
\end{theorem}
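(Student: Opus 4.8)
The plan is to extract the high-SNR exponent of the partially informed lower bound $ED_{\text{pi}}^*$ in (\ref{eq:LowerBound}) and show it equals the claimed piecewise expression. Since $ED_{\text{pi}}^*=\mathrm{E}_{\boldsymbol{\mathsf{H}}}[D_{\text{op}}(\rho,\rho_s,b,\boldsymbol{\mathsf{H}})]$ and separation has already been shown optimal per block (so that $D_{\text{op}}$ is the correct per-realization distortion), I would first pin down the exponential order of $D_{\text{op}}$ for a fixed channel, and then average over $\mathbf{H}$ using the standard eigenvalue large-deviation machinery.

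First I would determine the exponent of $D_{\text{op}}$. Writing $x\triangleq 2^{b\mathcal{C}(\mathbf{H})}/\rho_s$ and using the asymptotics $e^{x}E_1(x)\sim 1/x$ as $x\to\infty$ and $e^{x}E_1(x)\sim-\ln x$ as $x\to 0^+$, the distortion obeys $D_{\text{op}}\doteq 2^{-b\mathcal{C}(\mathbf{H})}$ when $2^{b\mathcal{C}(\mathbf{H})}\geq\rho_s$ and $D_{\text{op}}\doteq\rho_s^{-1}$ otherwise. Parametrizing the eigenvalues of $\mathbf{H}\mathbf{H}^H$ as $\lambda_i\doteq\rho^{-\alpha_i}$ with $\alpha_1\geq\cdots\geq\alpha_{M_*}\geq 0$, the instantaneous capacity satisfies $b\mathcal{C}(\mathbf{H})\doteq b\sum_{i=1}^{M_*}(1-\alpha_i)^+\log\rho$, so with $\rho_s\doteq\rho^{\nu}$ one obtains the compact form
\begin{IEEEeqnarray}{rCl}
D_{\text{op}}(\rho,\rho_s,b,\boldsymbol{\mathsf{H}}) &\doteq& \rho^{-\max\left\{\nu,\; b\sum_{i=1}^{M_*}(1-\alpha_i)^+\right\}}.
\end{IEEEeqnarray}
The $\max$ reflects that the decoder exploits both the channel description (decay exponent $b\sum_i(1-\alpha_i)^+$) and the side information (decay exponent $\nu$), and the faster decay dominates the combined reconstruction error.

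Next I would average over the channel. Using the Zheng--Tse high-SNR eigenvalue density $p(\boldsymbol{\alpha})\doteq\rho^{-\sum_{i=1}^{M_*}(2i-1+M^*-M_*)\alpha_i}$ on the ordered region and Laplace's method, the expectation is dominated by its steepest exponent, yielding the variational characterization
\begin{IEEEeqnarray}{rCl}
\Delta_{up}(b,\nu) &=& \min_{\boldsymbol{\alpha}}\left[\sum_{i=1}^{M_*}(2i-1+M^*-M_*)\alpha_i+\max\left\{\nu,\;b\sum_{i=1}^{M_*}(1-\alpha_i)^+\right\}\right]
\end{IEEEeqnarray}
over $\alpha_1\geq\cdots\geq\alpha_{M_*}\geq 0$. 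This objective is convex, so I would solve it by splitting the feasible set according to the sign of $b\sum_i(1-\alpha_i)^+-\nu$. On the region where this is $\leq 0$, minimizing the linear cost subject to $\sum_i(1-\alpha_i)^+\leq\nu/b$ is exactly the outage optimization defining the DMT in (\ref{eq:DMTcurve})--(\ref{eq:ML_PhiUpsilon}), giving the value $\nu+d^*(\nu/b)$. On the complementary region the unconstrained minimizer of $\sum_i(2i-1+M^*-M_*)\alpha_i+b\sum_i(1-\alpha_i)^+$ is obtained coordinatewise, setting $\alpha_i=1$ when $2i-1+M^*-M_*\leq b$ and $\alpha_i=0$ otherwise, which produces $\Delta_{\mathrm{MIMO}}(b)=\sum_{i=1}^{M_*}\min\{b,2i-1+M^*-M_*\}$ precisely when its active multiplexing $b(M_*-k)$ is at least $\nu$, i.e. $b\geq\nu/(M_*-k)$.

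Finally I would assemble the two branches. Comparing the two candidate values and locating the thresholds $b=\nu/(M_*-k)$ within the intervals $2k-1+M^*-M_*\leq b<2k+1+M^*-M_*$ yields the six-way split, with $l$ identifying the first interval in which the $\Delta_{\mathrm{MIMO}}$ branch can become active (equivalently, the interval containing $\nu/M_*$); the degenerate regimes $b<\nu/M_*$ and $\nu/M_*\geq M^*+M_*-1$ follow by noting that $d^*(\nu/b)=0$ for $\nu/b\geq M_*$ and that in the latter regime the $\Delta_{\mathrm{MIMO}}$ branch never binds. I expect the main obstacle to be precisely this bookkeeping: verifying that the convex program's optimum switches between $\nu+d^*(\nu/b)$ and $\Delta_{\mathrm{MIMO}}(b)$ exactly at the stated boundaries and reconciling the index conventions for $k$ and $l$ across all intervals. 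Justifying the interchange of limit and expectation via the Laplace/Varadhan argument and the $E_1$ asymptotics are comparatively routine.
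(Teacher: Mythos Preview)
Your proposal is correct and follows essentially the same route as the paper: both reduce $ED_{\text{pi}}^*$ to the variational problem $\min_{\boldsymbol\alpha}\bigl[S_A(\boldsymbol\alpha)+\max\{\nu,\,b\sum_i(1-\alpha_i)^+\}\bigr]$ via the high-SNR eigenvalue density and Varadhan's lemma, then split on which term of the $\max$ is active, identifying the $\nu$-branch with the DMT (giving $\nu+d^*(\nu/b)$) and the channel-branch with the coordinatewise optimum $\Delta_{\mathrm{MIMO}}(b)$, and finally assemble the piecewise formula by locating the feasibility threshold $b=\nu/(M_*-k)$ inside each DMT interval. The only cosmetic differences are that the paper uses the explicit two-sided inequality $\frac{1}{2}\ln(1+2/t)<e^tE_1(t)<\ln(1+1/t)$ and the further bound $\frac{1}{2}\ln(1+2/t)>1/(t+2)$ to get a rigorous lower bound on $ED_{\text{pi}}^*$ rather than asymptotics of $E_1$, and it carries out the constrained minimization in the second branch explicitly (showing that when the corner solution $\alpha_i\in\{0,1\}$ is infeasible the optimum slides to the boundary and reproduces $\nu+d^*(\nu/b)$), whereas you check feasibility and defer the rest to the comparison step.
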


\begin{proof}
The proof is given in Appendix  \ref{app:UpperBound}.
\end{proof}


A looser bound, denoted as the \emph{fully informed encoder upper bound}, is obtained if both the channel state $\mathbf{H}$ and the side information state $\gamma$ are provided to the transmitter. At each realization, the problem reduces to the static setup studied in \cite{Shamai:IT:98}, and source-channel separation theorem applies; that is, the concatenation of a Wyner-Ziv source code with a capacity achieving channel code is optimal at each realization. Analyzing its high SNR behavior following similar derivations in \cite{gunduz2008joint} and \cite{Estella2013Systematic}, we have that, the distortion exponent is upper bounded by
\begin{IEEEeqnarray}{rCl}\label{eq:InformedUpperBound}
\Delta_{\mathrm{inf}}(b,\nu)=\nu+\Delta_{\mathrm{MIMO}}(b)
\end{IEEEeqnarray}

Comparing the two upper bounds in (\ref{eq:InformedUpperBound}) and Theorem \ref{the:UpperBound}, we can see that the latter is always tighter. When $\nu>0$, the two bounds meet only at the two extremes, when either $b=0$ or $b\rightarrow \infty$. Note that these bounds provide the achievable distortion exponents when either both states (Equation (\ref{eq:InformedUpperBound})) or only the channel state (Theorem  \ref{the:UpperBound}) is available at the transmitter, also characterizing the potential gains from channel state feedback in fading joint source-channel coding problems. 

\section{Achievable Schemes}\label{sec:SingleLayer}
In this section, we propose transmission schemes consisting of a single-layer and multi-layer codes, and analyze their achievable distortion exponent performances.

\subsection{List decoding scheme (LD)} \label{ssec:VBD}

SSCC is optimal in the presence of CSI-TR. However, when CSI-TR is not available, the binning and channel coding rates have to be designed  based only on the statistics. As shown in \cite{Estella2013Systematic}, transmission using SSCC suffers from two separate outage events: outage in channel decoding and outage in source decoding. It is shown in \cite[Corollary 1]{Estella2013Systematic} that, for monotonically decreasing pdfs, such as $p_{\Gamma}(\gamma)$ considered here, the expected distortion is minimized by avoiding outage in source decoding, that is, by not using binning. Then, the optimal SSCC scheme compresses the source sequence at a fixed rate ignoring the source side information, and transmits the compressed bits over
the channel using a channel code at a fixed rate. At the receiver, first the transmitted channel codewor is recoverd. If the channel decoding is successful, the compression codeword is recovered, and the source sequence is reconstructed together with the side information. Otherwise, only the side information is used for reconstruction.

Instead of using explicit binning at the source encoder, and decoding a single channel codeword, in LD the channel decoder outputs a list of channel codeword candidates, which are then used by the source decoder together with the source side information to find the transmitted source codeword. The success of decoding for this scheme depends on the joint quality of the channel and side information states. This scheme is considered in \cite{Estella2013Systematic} for a SISO system, and is shown to outperform SSCC at any SNR, and to achieve the optimal distortion exponent in certain regimes, while SSCC remains suboptimal.

At the encoder, we generate a codebook of $2^{mR_{ld}}$ length-$m$
quantization codewords $W^m(i)$
through a `test channel' given by $W=S+Q$, where
$Q\sim\mathcal{CN}(0,\sigma_Q^2)$ is independent of $S$; and an independent Gaussian codebook of size
$2^{nbR_{ld}}$ with length-$n$ codewords
$\mathbf{X}(i)\in\mathds{C}^{M_t\times n}$, where $\mathbf{X}\sim
\mathcal{CN}(0,\mathbf{I})$, such that $bR_{ld}=I(S;W)+\epsilon$, for an arbitrarily small $\epsilon>0$, i.e.,  with $\sigma^2_Q=(2^{bR_{ld}-\epsilon}-1)^{-1}$. Given a source outcome
$S^m$, the transmitter finds the quantization codeword $W^m(i)$
jointly typical with the source outcome, and transmits the
corresponding channel codeword $\mathbf{X}(i)$. The channel decoder looks for the list of indices $\mathcal{I}$ of jointly typical codewords $(\mathbf{X}^n(l),\mathbf{Y}^n)$. Then, the source decoder, finds the unique $W^n(i)$ jointly typical with $T^m$ among the codewords $W^n(l)$, $l\in \mathcal{I}$. 

Joint decoding produces a binning-like decoding: only some $\mathbf{Y}^n$  are
jointly typical with $\mathbf{X}(i)$, generating a virtual bin, or list, of
$W^m$ codewords from which only one is jointly
typical with $T^m$ with high probability.
 The size of the list depends on the
realizations of $\boldsymbol{\mathsf{H}}$ and $\Gamma$ unlike in a Wyner-Ziv scheme, in which the bin sizes are chosen in advance. Therefore, the outage event depends jointly on the channel and the side
information states $(\mathbf{H},\gamma)$. 
An outage is declared whenever, due to the channel and side information randomness, a unique codeword cannot be recovered, and is given by
\begin{IEEEeqnarray}{rCl}\label{eq:OutageSets}
\mathcal{O}_{ld}&=&\left\{(\mathbf{H},
\gamma):I(S;W|T)\geq bI(\mathbf{X};\mathbf{Y})\right\},
\end{IEEEeqnarray}
where $I(\mathbf{X};\mathbf{Y})=\log\det(\mathbf{I}+\frac{\rho}{M_*}\mathbf{HH}^H)$ and $I(S;W|T)=\log(1+(2^{bR_{ld}-\epsilon}-1)/(\gamma\rho_s+1))$.

If $W^m$  is successfully decoded, the source sequence is estimated with an MMSE
estimator using the quantization codeword and the side information sequence, i.e., $\hat{S}_i=\mathrm{E}[S_i|W_i,T_i]$, and reconstructed with a distortion $D_d(bR_{ld},\gamma)$, where
\begin{IEEEeqnarray}{rCL}\label{eq:DistSSCC}
D_d(R,\gamma)\triangleq(\rho_{s}\gamma+2^{R})^{-1}.
\end{IEEEeqnarray}

If there is an outage, only the side information is
used in source reconstruction, and the corresponding distortion is given by $D_d(0,\gamma)$.
 Then, the
expected distortion for LD is expressed as
\begin{IEEEeqnarray}{rCl}\label{eq:Dj}
ED_{ld}(R_{j})&=&\mathrm{E}_{\mathcal{O}^c_{ld}}\left[D_d\left(bR_{ld},\Gamma\right)\right]+
\mathrm{E}_{\mathcal{O}_{ld}}[D_d(0,\Gamma)].
\end{IEEEeqnarray}

\begin{theorem}\label{the:ExponentJoint}
The achievable distortion exponent for LD, $\Delta_{ld}(b,\nu)$, is given by
\begin{IEEEeqnarray}{rCl}\label{eq:ExponentJoint}
&\Delta_{ld}&(b,\nu)= \max\left\{\nu,
b\frac{\Phi_k+k\Upsilon_k+\nu}{\Upsilon_k+b}\right\},\qquad\text{for } b\in\left[\frac{\Phi_{k+1}+\nu}{k+1},
\frac{\Phi_k+\nu}{k}\right), k=0,1,...,M_*-1,
\end{IEEEeqnarray}
where $\Phi_k$ and $\Upsilon_k$ are as defined in (\ref{eq:ML_PhiUpsilon}).
\end{theorem}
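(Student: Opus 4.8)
The plan is to start from the expected-distortion decomposition \eqref{eq:Dj} and extract its high-SNR exponent through the standard DMT change of variables. First I would write the eigenvalues of $\boldsymbol{\mathsf{H}}\boldsymbol{\mathsf{H}}^H$ as $\lambda_i\doteq\rho^{-\alpha_i}$ and the side-information gain as $\Gamma\doteq\rho^{-\beta}$ with $\beta\geq0$, so that the joint measure concentrates with exponent $\rho^{-(d_{\mathbf H}(\alpha)+\beta)}$, where $d_{\mathbf H}(\alpha)=\sum_{i=1}^{M_*}(2i-1+M^*-M_*)\alpha_i$ on the ordered cone $\alpha_1\geq\cdots\geq\alpha_{M_*}\geq0$ is the Zheng--Tse exponent \cite{zheng2003diversity} and $\beta$ is the exponent of the exponential pdf $p_\Gamma$. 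Parametrizing the code rate through $2^{bR_{ld}}\doteq\rho^{\theta}$ (the $\epsilon$ in \eqref{eq:OutageSets} is immaterial to exponents), the relevant quantities become $bI(\mathbf{X};\mathbf{Y})\doteq b\,\mu\log\rho$ with $\mu\triangleq\sum_{i=1}^{M_*}(1-\alpha_i)^+$, $I(S;W|T)\doteq(\theta-(\nu-\beta)^+)^+\log\rho$, $D_d(bR_{ld},\Gamma)\doteq\rho^{-\max\{\nu-\beta,\theta\}}$, and $D_d(0,\Gamma)\doteq\rho^{-(\nu-\beta)^+}$. The exponent is then $\Delta_{ld}=\max_{\theta\geq0}\min\{E_{\mathrm{no}},E_{\mathrm{o}}\}$, where $E_{\mathrm{no}}$ and $E_{\mathrm{o}}$ are the Laplace exponents of the two terms of \eqref{eq:Dj}, each obtained by minimizing (measure exponent $+$ distortion exponent) over the non-outage region $\{(\theta-(\nu-\beta)^+)^+<b\mu\}$ and the outage region $\{(\theta-(\nu-\beta)^+)^+\geq b\mu\}$, respectively.

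For the non-outage term I would note that the distortion-plus-side-information exponent simplifies to $\beta+\max\{\nu-\beta,\theta\}=\max\{\nu,\beta+\theta\}\geq\max\{\nu,\theta\}$, which together with $d_{\mathbf H}(\alpha)\geq0$ is minimized at $\beta=0$, $\alpha=\mathbf{0}$ (a typical channel with $\mu=M_*$, which is non-outage whenever $(\theta-\nu)^+<bM_*$), giving $E_{\mathrm{no}}=\max\{\nu,\theta\}$. The outage term is the crux: using $\beta+(\nu-\beta)^+=\max\{\nu,\beta\}$, I would write $E_{\mathrm{o}}=\min_{\beta}\big[\max\{\nu,\beta\}+\min_{\alpha:\,b\mu\leq(\theta-(\nu-\beta)^+)^+}d_{\mathbf H}(\alpha)\big]$ and recognize the inner channel minimization as the DMT outage exponent $d^*\!\big((\theta-(\nu-\beta)^+)^+/b\big)$. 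On $[0,\nu]$ the contribution equals $\nu+d^*((\theta-\nu+\beta)/b)$, which is decreasing in $\beta$ since $d^*$ is decreasing, whereas on $(\nu,\infty)$ it equals $\beta+d^*(\theta/b)$, increasing; hence the minimizer is $\beta=\nu$ and $E_{\mathrm{o}}=\nu+d^*(\theta/b)$.

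It then remains to carry out $\Delta_{ld}=\max_{\theta\geq0}\min\{\max\{\nu,\theta\},\,\nu+d^*(\theta/b)\}$. For $\theta\leq\nu$ both branches collapse to $\nu$, yielding the first term of the claimed maximum. For $\theta>\nu$ the objective is $\min\{\theta,\nu+d^*(\theta/b)\}$, the lower envelope of the increasing map $\theta\mapsto\theta$ and the decreasing map $\theta\mapsto\nu+d^*(\theta/b)$, whose maximum is attained at the crossing $\theta=\nu+d^*(\theta/b)$. Substituting the active DMT piece $d^*(r)=\Phi_k-\Upsilon_k(r-k)$ with $r=\theta/b$ and solving the resulting linear equation gives $r=\tfrac{\Phi_k+k\Upsilon_k+\nu}{\Upsilon_k+b}$ and $\theta=b\tfrac{\Phi_k+k\Upsilon_k+\nu}{\Upsilon_k+b}$, the second term of the maximum. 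Finally I would pin down the bandwidth range: the requirement $r\in[k,k+1]$ translates, via the identity $\Phi_k-\Upsilon_k=\Phi_{k+1}$, into $\tfrac{\Phi_{k+1}+\nu}{k+1}\leq b<\tfrac{\Phi_k+\nu}{k}$, matching the intervals in the statement.

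The main obstacle is the outage-term evaluation, i.e.\ the joint optimization over the channel exponents $\alpha$ and the side-information exponent $\beta$. The non-obvious point is that the dominant error event occurs not at a typical side-information realization but at $\beta=\nu$, where $\rho_s\Gamma\doteq1$ and the side information is marginally useless; there the outage condition degenerates to the pure-channel event $\theta\geq b\mu$, so the side-information randomness contributes only the factor $\rho^{-\nu}$ while the channel contributes the clean DMT exponent $d^*(\theta/b)$. Establishing rigorously that this $\beta=\nu$ slice dominates, rather than some interior trade-off, relies on the cancellation $\beta+(\nu-\beta)^+=\nu$ on $[0,\nu]$ together with the monotonicity of $d^*$, and is precisely where the analysis departs from the side-information-free MIMO setting of \cite{gunduz2008joint}.
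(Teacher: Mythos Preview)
Your proposal is correct and follows essentially the same route as the paper's proof in Appendix~\ref{app:ExponentJoint}: the same DMT change of variables, the same two-term Varadhan decomposition into $\Delta^1_j$ (non-outage) and $\Delta^2_j$ (outage), the same identification of $\beta^*=\nu$ as the dominant side-information exponent in the outage term yielding $\nu+d^*(r)$, and the same crossing argument $br=\nu+d^*(r)$ on the $k$-th DMT segment. The only cosmetic differences are your parametrization via $\theta=br_{ld}$ rather than $r_{ld}$, and your use of the identity $\beta+\max\{\nu-\beta,\theta\}=\max\{\nu,\beta+\theta\}$ to shortcut the non-outage minimization, which is slightly cleaner than the paper's case-by-case treatment.
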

\begin{proof}
See Appendix \ref{app:ExponentJoint}.
\end{proof}

LD reduces the probability of outage, and hence, the expected distortion compared to SSCC. Figure \ref{fig:LDSISOMIMO} shows the expected distortion achievable by  SSCC and LD schemes, as well as the partially informed encoder lower bound on the expected distortion in a SISO and a $3\times 3$ MIMO system for $b=2$. It is observed that LD outperforms SSCC in both SISO and MIMO scenarios, although both schemes fall short of the expected distortion lower bound, $ED_{\mathrm{pi}}^*$. We also observe that both schemes keep a constant performance gap as the SNR increases. In fact, the next proposition, given without proof, reveals that both schemes achieve the same distortion exponent.
\begin{proposition}
The distortion exponent of LD, $\Delta_{ld}(b,\nu)$,  is the same as that SSCC, i.e., $\Delta_{ld}(b,\nu)=\Delta_{s}(b,\nu)$.
\end{proposition}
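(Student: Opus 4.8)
The plan is to show that the distortion exponents of LD and SSCC coincide by proving that the additional flexibility in LD does not change the high-SNR scaling, which is governed entirely by the outage probability. The key insight is that in both schemes, the expected distortion is dominated by the outage term $\mathrm{E}_{\mathcal{O}}[D_d(0,\Gamma)]$, since within the non-outage region the reconstruction distortion $D_d(bR,\Gamma)\doteq \rho^{-bR/\log\rho}$ decays with the coding rate $R$, whereas the outage contribution decays only as fast as the outage probability $\Pr[\mathcal{O}]$ together with the side-information averaging. I would first write the SSCC expected distortion in a form parallel to \eqref{eq:Dj}, identifying its outage set $\mathcal{O}_{s}$ and observing that it too splits into a non-outage term and an outage term weighted by $D_d(0,\Gamma)$.

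The main technical step is to compare the two outage sets at high SNR. For SSCC, outage is the channel-coding outage event $\{\mathbf{H}: bR_{s}\geq bI(\mathbf{X};\mathbf{Y})\}$, which at multiplexing gain $r=R_{s}/\log\rho$ has probability $\Pr[\mathcal{O}_{s}]\doteq \rho^{-d^*(r)}$ by the DMT \eqref{eq:DMTcurve}. For LD, the outage set \eqref{eq:OutageSets} additionally incorporates the side-information gain $\Gamma$ through the term $I(S;W|T)$; however, at high SNR the condition $I(S;W|T)\geq bI(\mathbf{X};\mathbf{Y})$ reduces, after substituting the expressions below \eqref{eq:OutageSets} and taking $\rho_s\doteq\rho^{\nu}$, to an inequality whose exponential behavior I would show coincides with the SSCC channel-outage condition up to the same optimization over the effective rate. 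The second step is therefore to carry out the high-SNR analysis of $\mathrm{E}_{\mathcal{O}_{ld}}[D_d(0,\Gamma)]$ and $\mathrm{E}_{\mathcal{O}_{s}}[D_d(0,\Gamma)]$ and verify that both yield the same exponent after optimizing the coding rate $R$ (equivalently the multiplexing gain) to balance the two error terms.

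Concretely, I would parametrize the rate as $R=r\log\rho$, evaluate the two contributions to $\mathrm{E}[D]$ exponentially, and optimize $r$. The non-outage term contributes an exponent of roughly $br+\nu$ (accounting for both the rate-induced distortion decay and the side-information averaging through $\Gamma$), while the outage term contributes $d^*(r)+\nu$. Maximizing $\min\{br+\nu,\,d^*(r)+\nu\}$ over $r$, using the piecewise-linear form of $d^*(r)$ from \eqref{eq:DMTcurve}, should reproduce exactly the expression for $\Delta_{ld}(b,\nu)$ in Theorem~\ref{the:ExponentJoint}; and because SSCC admits the identical optimization (its extra source-decoding outage is avoided for the monotone pdf, as noted before \eqref{eq:OutageSets}), the same optimal exponent emerges. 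The conclusion $\Delta_{ld}(b,\nu)=\Delta_{s}(b,\nu)$ then follows.

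The hard part will be establishing rigorously that the side-information term in the LD outage set \eqref{eq:OutageSets} does not improve the exponent. The intuition is that although $\Gamma$ enlarges the effective channel capacity region in \eqref{eq:OutageSets}, the gain from $\Gamma$ contributes only a polynomial (in $\rho_s$) factor inside the logarithm, which is subexponential in the sense of $\doteq$, so it cannot alter the DMT-governed outage exponent. Making this precise requires a careful large-deviations estimate of $\Pr[\mathcal{O}_{ld}]$ showing it matches $\Pr[\mathcal{O}_{s}]$ exponentially, and confirming that the averaging of $D_d(0,\Gamma)$ over the exponential pdf $p_\Gamma$ contributes the same $\rho^{-\nu}$ factor in both cases; this is the step where the apparent advantage of LD at finite SNR, visible in Figure~\ref{fig:LDSISOMIMO}, must be shown to vanish on the exponential scale.
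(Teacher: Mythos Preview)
Your overall strategy---compute the high-SNR exponents of the outage and non-outage terms for each scheme, then observe that the resulting max-min optimizations over the multiplexing gain $r$ coincide---is the right one, and is precisely what the paper's Appendix~\ref{app:ExponentJoint} carries out for LD. Since the paper gives the proposition without proof, the implicit argument is simply to redo that appendix for SSCC and note that the final optimization \eqref{eq:JointProblem} is identical. However, two of your intermediate claims are incorrect and would derail the argument if followed literally.

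First, the side information is \emph{not} subexponential in the LD outage set. Under the change of variables $\gamma=\rho^{-\beta}$, the LD outage region at high SNR becomes $\mathcal{A}_j=\{(\boldsymbol\alpha,\beta):(br_{ld}-(\nu-\beta)^+)^+\geq b\sum_i(1-\alpha_i)^+\}$, so the term $(\nu-\beta)^+$ appears at the exponential scale and genuinely shrinks the outage region when $\beta<\nu$. Consequently, the outage \emph{probabilities} $\Pr[\mathcal{O}_{ld}]$ and $\Pr[\mathcal{O}_s]$ need not match exponentially, and trying to show they do is the wrong target. What \emph{does} match is the weighted integral $\mathrm{E}_{\mathcal{O}}[D_d(0,\Gamma)]$: applying Varadhan's lemma to this integral yields $\inf_{\mathcal{A}_j}\{(\nu-\beta)^++\beta+S_A(\boldsymbol\alpha)\}$, and the infimum is attained at $\beta^*=\nu$, where the LD constraint collapses to the SSCC constraint $r\geq\sum_i(1-\alpha_i)^+$ and the objective equals $\nu+d^*(r)$. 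The mechanism is that the dominating contribution to the outage term comes from side-information realizations of ``neutral'' quality ($\beta=\nu$), where LD offers no advantage.

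Second, the non-outage exponent is $\max\{br,\nu\}$, not $br+\nu$ as you wrote. The distortion $D_d(br\log\rho,\gamma)\doteq\rho^{-\max\{br,(\nu-\beta)^+\}}$, and after integrating over $\beta$ the exponent is $\max\{br,\nu\}$ (see \eqref{eq:Delta1jSol}). With these corrections, both schemes yield $\Delta(b,\nu)=\max_{r}\min\{\max\{br,\nu\},\,\nu+d^*(r)\}$, and the equality follows.
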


We note that although LD and SSCC achieve the same distortion exponent in the current setting, LD is shown to achieve larger distortion exponents than SSCC in general\cite{Estella2013Systematic}.

\begin{figure}
\centering
\includegraphics[width=0.65\textwidth]{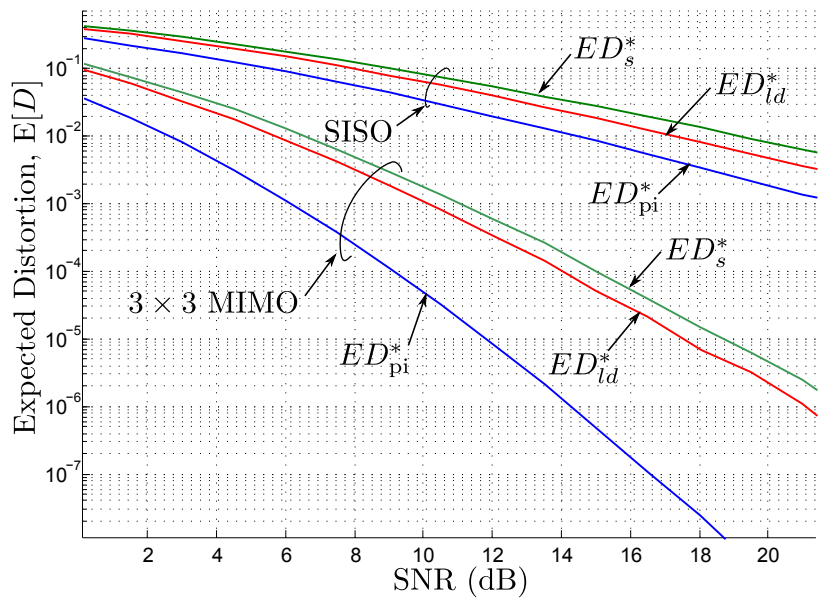}
\caption{Minimum expected distortion achievable by SSCC and LD for a SISO and a $3\times3$ MIMO channel for $b=2$ and $\nu=1$. The partially informed encoder bound is also included.}\label{fig:LDSISOMIMO}
\end{figure}

 Next, we extend the idea of list decoding as a building block for more advanced transmission strategies.

\subsection{Hybrid digital-analog list decoding  scheme (HDA-LD)}\label{sec:HDA}

We introduce a hybrid digital-analog (HDA) scheme that quantizes the source sequence, uses a
scaled version of the quantization error as the channel input, and exploits list decoding at the decoder. This scheme is introduced in \cite{wilson2010joint}, and shown to be optimal in static SISO channels in the presence of side information for $b=1$. HDA-LD is considered in \cite{Estella2013Systematic} in the SISO fading setup with $b=1$, and is shown to achieve the optimal distortion exponent for a wide family of side information distributions. In this paper, we propose a generalization of HDA-LD in \cite{Estella2013Systematic} to the MIMO channel and to bandwidth ratios satisfying $b\geq1/M^*$.

For $ b\leq 1/M_*$, we ignore the available side information and use the hybrid digital-analog scheme  proposed in \cite{Caire2007hybrid}. In this scheme, which we denote by superposed HDA (HDA-S), the source sequence is transmitted in two layers. The first layer transmits a part of the source sequence in an uncoded fashion, while the second layer digitally transmits the remaining samples. The two layers are superposed and the power is allocated between the two. At the receiver, the digital layer is decoded treating the uncoded layer as noise. Then, the source sequence is reconstructed using both layers. The distortion exponent achievable by HDA-S is given by $\Delta_{h}(b,\nu)=bM_*$ for $b\leq1/ M_*$ \cite{Caire2007hybrid}.

HDA-S can be modified to include list decoding and to use the available side information at the reconstruction to reduce the expected distortion. However, as we will show in Section \ref{sec:LowBand}, if $0\leq b\leq \nu/M_*$, simple MMSE estimation of the source sequence is sufficient to achieve the optimal distortion exponent, given by $\Delta^*(b,\nu)=\nu$, while, if $\nu/M_*\leq b\leq 1/M_*$, HDA-S is sufficient to achieve the optimal distortion exponent. Therefore, HDA-S with list decoding does not improve the distortion exponent in this regime.
 \begin{lemma}
 The distortion exponent achievable by HDA-S is given by $\Delta_{h}(b,\nu)=bM_*$, if $b\leq 1/M_*$.
\end{lemma}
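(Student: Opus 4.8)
The plan is to verify that the HDA-S scheme of \cite{Caire2007hybrid} achieves the claimed distortion exponent $\Delta_h(b,\nu)=bM_*$ in the regime $b\leq 1/M_*$, where the side information is ignored and reconstruction relies only on the channel output. Since the scheme ignores $\gamma$, the quantity $\nu$ plays no role in the achievable exponent, which is consistent with the stated result depending only on $b$ and $M_*$. The essential content is therefore the high-SNR analysis of the two-layer superposed scheme (one uncoded analog layer plus one digital layer) operating over the MIMO fading channel, exactly as analyzed in \cite{Caire2007hybrid} for the no-side-information setting; I would invoke that reference and reconstruct the exponent computation for our notation.

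First I would describe the scheme concretely: the $m$ source samples are split so that a fraction is sent uncoded (analog) and the remaining samples are quantized and channel-coded digitally, with the two layers superposed and the total power $\rho$ split between them as $\rho^{\beta}$ and $\rho-\rho^{\beta}$ for some power-allocation exponent, and the digital rate chosen with a multiplexing gain $r$. At the decoder, the digital layer is decoded treating the analog layer as noise; when $b\leq 1/M_*$ the relevant operating point places the multiplexing gain at or below $M_*$, so the MIMO mutual information $I(\mathbf{X};\mathbf{Y})=\log\det(\mathbf{I}+\tfrac{\rho}{M_t}\mathbf{HH}^H)$ grows like $M_*\log\rho$ and the DMT $d^*(r)$ from (\ref{eq:DMTcurve})--(\ref{eq:ML_PhiUpsilon}) governs the digital-layer outage probability. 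The key step is then to write the expected distortion as the sum of a no-outage term, in which both layers contribute and the distortion decays like a power of $\rho$ set by the analog gain and the quantization resolution, and an outage term weighted by $\mathrm{Pr}[\text{outage}]\doteq\rho^{-d^*(r)}$ in which only a coarse reconstruction survives.

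The central calculation is the joint optimization over the power-split exponent $\beta$ and the multiplexing gain $r$ that balances these two competing contributions so that both the no-outage distortion exponent and the outage-weighted exponent equal $bM_*$. I would carry this out by writing the distortion exponent as the minimum of the exponent of each term, maximizing over $(\beta,r)$ subject to the rate constraint tying $r$ to $b$, and checking that in the small-bandwidth regime $b\leq 1/M_*$ the optimum lands at the corner where the analog layer is allocated a vanishing power fraction and the digital layer operates at full diversity, yielding the clean value $bM_*$. The bandwidth restriction $b\leq 1/M_*$ is exactly what guarantees that the optimal multiplexing gain stays in the linear regime where the exponent is simply $bM_*$ rather than saturating; for larger $b$ the analysis crosses into the regime handled by the full HDA-LD and multi-layer schemes elsewhere in the paper.

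The main obstacle I expect is the bookkeeping of the high-SNR limits: one must show carefully that the analog layer, even when its power vanishes on the exponential scale, still contributes a distortion decaying at the rate $\rho^{-bM_*}$, and that no other outage-type event (e.g.\ loss of the analog layer or failure of the MMSE estimate) dominates. Concretely, the subtle point is confirming that treating the uncoded layer as noise does not degrade the digital-layer DMT and that the overall $\mathrm{E}[D]$ is controlled termwise using the exponential-equality notation $\doteq$. Since \cite{Caire2007hybrid} establishes this exponent for the MIMO channel without side information and our regime simply discards $\gamma$, the cleanest route is to reduce our setting to theirs and cite the result, then verify the exponent value $bM_*$ matches on the interval $b\le 1/M_*$.
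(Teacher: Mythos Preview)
Your proposal is correct and follows the same approach as the paper: reduce to the no-side-information setting by ignoring $\gamma$, and then invoke the result of \cite{Caire2007hybrid}. In fact, the paper provides no proof for this lemma at all; the statement is presented as an immediate consequence of the discussion in the preceding paragraph, which already cites \cite{Caire2007hybrid} for the value $\Delta_h(b,\nu)=bM_*$ when $b\le 1/M_*$. Your proposed reconstruction of the high-SNR balancing argument is therefore more than the paper itself supplies, but your bottom-line route---observe that discarding the side information removes any dependence on $\nu$ and then cite the reference---is exactly what the paper does.
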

Next, we consider the HDA-LD scheme for $bM_*>1$.
At the encoder, a quantization codebook of $2^{mR_h}$ length-$m$ codewords $W^m(s)$, $s=1,...,2^{m R_{h}}$, is generated with a test channel $W=S+Q$, where $Q\sim\mathcal{CN}(0,\sigma^2_Q)$ is independent of $S$, and the quantization noise variance is chosen such that $R_{h}=I(W;S)+\epsilon$, for an arbitrarily small $\epsilon>0$, i.e., $\sigma_Q^2\triangleq(2^{R_{h}-\epsilon}-1)^{-1}$. Then, each $W^m$ is reordered into length-$\frac{m}{M_*}$ codewords
$\mathbf{W}(s)=[\mathbf{W}_1(s),...,\mathbf{W}_{\frac{m}{M_*}}(s)]\in\mathds{C}^{\frac{m}{M_*}\times M_*}$, where $\mathbf{W}_i(s)$, $i=1,...,m/M_*$, is given by $\mathbf{W}_i(s)=[W_{(i-1)M_*+1}(s);...;W_{iM_*}(s)]^T$. Similarly, we can reorder $S^m$ and $Q^m$, and define $\mathbf{S}_i$ and $\mathbf{Q}_i$ for $i=1,...,m/M_*$.

We then generate $2^{mR_{h}}$ independent auxiliary random vectors
$\mathbf{U} \in \mathds{C}^{\left(n-\frac{m}{M_*}\right)\times
M_*}$ distributed as
$\mathbf{U}_i\sim\mathcal{CN}(0,\mathbf{I})$, for
$i=1,...,n-\frac{m}{M_*}$ and assign one to each $\mathbf{W}(s)$ to construct the codebook of size $2^{mR_{h}}$ consisting of the pairs of codewords $(\mathbf{W}(s),\mathbf{U}(s))$, $s=1,...,2^{m R_{h}}$.  For a given
source sequence $S^m$, the encoder looks for the $s^*$-th codeword
$\mathbf{W}(s^*)$ such that $(\mathbf{W}(s^*),S^m)$ are jointly
typical. A unique $s^*$ is found if $M_*R_{h}>I(\mathbf{W};\mathbf{S})$. Then, the pair $(\mathbf{W}(s^*),\mathbf{U}(s^*))$ is used
to generate the channel input, which is scaled to satisfy the power
constraint:
\begin{IEEEeqnarray}{rCl}
\mathbf{X}_i=\begin{cases}
\sqrt{\frac{1}{\sigma_Q^2}}[\mathbf{S}_i-\mathbf{W}_i(s^*)], & \text{for } i=1,...,\frac{m}{M_*},\\
\mathbf{U}_{i-\frac{m}{M_*}}(s^*), &
\text{for } i=\frac{m}{M^*}+1,...,n.
\end{cases}
\end{IEEEeqnarray}
Basically, in the first block of $\frac{m}{M_*}$ channel accesses we transmit a scaled version of the error of the quantization codeword $\mathbf{Q}_i$ in an uncoded fashion, while in the second block of $n-\frac{m}{M_*}$ accesses we transmit a digital codeword.


At the receiver, list-decoding is
successful with high probability if \cite{Estella2013Systematic}
\begin{IEEEeqnarray}{rCl}\label{eq:DecCond}
I(\mathbf{W};\mathbf{S})<M_*R_h <I(\mathbf{W}\mathbf{U};\mathbf{Y} \mathbf{T})
\end{IEEEeqnarray}
The outage event can be shown to be given, after some algebra, as follows.
\begin{IEEEeqnarray}{rCl}\label{eq:OutageRegionHDASec}
\mathcal{O}_{h}=\Bigg{\{}(\mathbf{H},\gamma): I(\mathbf{W},\mathbf{S})
&\geq& I(\mathbf{W};\mathbf{Y}_W \mathbf{T})
+ (bM_*-1)I(\mathbf{U};\mathbf{Y}_{U})\Bigg{\}},
\end{IEEEeqnarray}
where $I(\mathbf{U};\mathbf{Y}_U)=\log\det(\mathbf{I}+\frac{\rho}{M_*}\mathbf{HH}^H)$ and,
\begin{IEEEeqnarray}{rCl}\label{eq:MItermHDA}
I(\mathbf{W};\mathbf{Y}_W \mathbf{T})=\log\left(\frac{(\xi(1+\sigma_Q^2))^{M_*}\det(\mathbf{I}+\frac{\rho}{M_t}\mathbf{HH}^H))}{\det(\mathbf{I}+\sigma^2_Q(\frac{\rho}{M_t}\mathbf{HH}^H+\xi\mathbf{I}))}\right),
\end{IEEEeqnarray}
where $\xi\triangleq1+\rho_s\gamma$.

If $W^{m}$ is successfully  decoded, each $X^n$ is
reconstructed with an MMSE estimator using $Y^n$ and $T^m$ within a distortion given by
\begin{IEEEeqnarray}{rCl}\label{eq:HDADist}
D_{h}(\sigma_Q^2,\mathbf{H},\gamma)\!=\!\frac{1}{M_*}\sum_{i=1}^{M_*}\left(
1+\rho_s\gamma+\frac{1}{\sigma^2_Q}\left(1+\frac{\rho}{M_*}\lambda_i\right)\huge\right)^{-1}.
\end{IEEEeqnarray}

If an outage occurs and $W^m$ is not decoded, only $T^m$ is used in the reconstruction, since $\mathbf{X}^n$ is uncorrelated with the source sequence by construction, and so is $\mathbf{Y}^n$. Using an MMSE estimator, the achievable distortion is given by $D_{d}(0,\gamma)$. Then, the expected distortion for 
HDA-LD is given by
\begin{IEEEeqnarray}{rCl}
ED_h(R_h)=\mathrm{E}_{\mathcal{O}^c_h}[D_{h}(\sigma_Q^2,\boldsymbol{\mathsf{H}},\Gamma)]+\mathrm{E}_{\mathcal{O}_h}[D_{d}(0,\Gamma)].
\end{IEEEeqnarray}

The distortion exponent of HDA-LD, $\Delta_{h}(b,\nu)$, is characterized in the next theorem.
\begin{theorem}\label{th:DistortionExponentHDAExponent}
Let $bM_*>1$. The distortion exponent achieved by HDA-LD, $\Delta_{h}(b,\nu)$, is given by 

\begin{IEEEeqnarray}{lCl}
\Delta_{h}(b,\nu)=\begin{cases}
\nu&\text{if }
\frac{1}{M_*}\leq b<\frac{\nu}{M_*},\\
1+\frac{(bM_*-1)
(\Phi_k+k\Upsilon_k-1+\nu)}{bM_*-1+M_*\Upsilon_k},&\text{if }
b\in
\left[\frac{\Phi_{k+1}-1+\nu}{k+1}+\frac{1}{M_*},\frac{\Phi_k-1+\nu}{k}+\frac{1}{M_*}\right),\\
& \quad\text{for }k=0,...,M_*-1.
\end{cases}
\end{IEEEeqnarray}

\end{theorem}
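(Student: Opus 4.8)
The plan is to follow the standard high-SNR analysis used for distortion-exponent results (as in \cite{gunduz2008joint} and the LD analysis of Theorem \ref{the:ExponentJoint}): decompose the expected distortion $ED_h(R_h)$ into a non-outage contribution $\mathrm{E}_{\mathcal{O}^c_h}[D_h(\sigma_Q^2,\mathbf{H},\Gamma)]$ and an outage contribution $\mathrm{E}_{\mathcal{O}_h}[D_d(0,\Gamma)]$, compute the exponential order of each in $\rho$, optimize over the free design parameter (the quantization rate $R_h$, equivalently $\sigma_Q^2$), and then show the two contributions balance at the optimum. I would first pass to the multiplexing/diversity scaling: set $\sigma_Q^2 \doteq \rho^{-r_h}$ so that $R_h \doteq r_h \log \rho$ for a quantization multiplexing gain $r_h \geq 0$ to be optimized, parametrize the eigenvalues as $\lambda_i \doteq \rho^{-\alpha_i}$ with $\alpha_1 \geq \cdots \geq \alpha_{M_*} \geq 0$, and recall that the joint pdf of $(\alpha_1,\dots,\alpha_{M_*})$ has exponential order $\rho^{-\sum_i (M^*+M_*-2i+1)\alpha_i}$ by the Laplace-method computation underlying the DMT \cite{zheng2003diversity}. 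Likewise I would write $\gamma \doteq \rho^{-g}$ with $g$ the side-information ``outage'' variable, whose pdf $e^{-\gamma}$ has exponential order $\rho^{-(-g)^+}$, i.e. it costs nothing to have $\gamma$ large and costs $\rho^{-g}$-mass to force $\gamma \doteq \rho^{-g}$ for $g>0$.

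Next I would evaluate the two mutual-information terms defining $\mathcal{O}_h$ in this scaling. From (\ref{eq:MItermHDA}) and $\xi = 1+\rho_s\gamma \doteq \rho^{(\nu-g)^+}$, and $I(\mathbf{U};\mathbf{Y}_U)=\log\det(\mathbf{I}+\tfrac{\rho}{M_*}\mathbf{HH}^H)\doteq \sum_i (1-\alpha_i)^+\log\rho$, I would reduce the outage condition in (\ref{eq:OutageRegionHDASec}) to a linear inequality among $r_h$, the $\alpha_i$, $g$, and $\nu$; the non-outage distortion (\ref{eq:HDADist}) has exponential order governed by $\min_i\{(\nu-g)^+,\, r_h + (1-\alpha_i)^+\}$ and the outage distortion $D_d(0,\gamma)\doteq \rho^{-(\nu-g)^+}$. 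The overall exponent is then the minimum over the outage-region Laplace exponent and the non-outage exponent, and I would perform the constrained minimization over $(\alpha_1,\dots,\alpha_{M_*},g)$ for fixed $r_h$, obtaining the dominant (cheapest) error event; here the DMT curve $d^*(\cdot)$ of (\ref{eq:DMTcurve})--(\ref{eq:ML_PhiUpsilon}) reappears because the minimizing $\alpha_i$ lie on the piecewise-linear boundary, with the integer $k$ indexing which linear piece is active. Finally I would optimize over $r_h \geq 0$; the optimal $r_h$ is where the non-outage distortion exponent and the outage-probability exponent coincide, which yields the balanced expression $1+\frac{(bM_*-1)(\Phi_k+k\Upsilon_k-1+\nu)}{bM_*-1+M_*\Upsilon_k}$, with the regime $b<\nu/M_*$ collapsing to the trivial $\nu$ because the side information alone already achieves distortion $\doteq \rho^{-\nu}$.

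The main obstacle I expect is the constrained minimization determining the dominant error event, i.e. correctly identifying, for each bandwidth regime, which of the competing exponents (outage probability versus quantization-limited non-outage distortion) binds and on which linear segment of the DMT the minimizing $\{\alpha_i\}$ sit. Concretely, the term $I(\mathbf{W};\mathbf{Y}_W\mathbf{T})$ in (\ref{eq:MItermHDA}) mixes the channel eigenvalues with the side-information gain $\xi$ through the uncoded (analog) layer, so its exponential order is not a clean $\sum(1-\alpha_i)^+$ but depends on how $r_h$, the $\alpha_i$, and $(\nu-g)^+$ interleave; untangling this to show that the optimizing configuration reproduces exactly the shifted DMT argument (with the $-1$ and $\tfrac{1}{M_*}$ offsets coming from the single uncoded analog dimension per antenna) is the delicate step. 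I would handle it by treating $r_h$ as fixed, solving the inner geometric program over $(\boldsymbol{\alpha},g)$ via the same Lagrangian/boundary argument that produces $d^*(r)$, and only then optimizing the outer variable $r_h$; matching the resulting breakpoints in $b$ against the stated intervals $\left[\frac{\Phi_{k+1}-1+\nu}{k+1}+\frac{1}{M_*},\frac{\Phi_k-1+\nu}{k}+\frac{1}{M_*}\right)$ then confirms the index $k$ and closes the proof.
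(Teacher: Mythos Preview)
Your proposal is correct and matches the paper's proof in Appendix~\ref{app:DistortionExponentHDAExponent} essentially step for step: the same change of variables $(\lambda_i,\gamma)\to(\rho^{-\alpha_i},\rho^{-\beta})$ and $R_h=r_h\log\rho$, reduction of $\mathcal{O}_h$ to a high-SNR set $\mathcal{A}_h$, Varadhan's lemma yielding the two exponents $\Delta^1_h(r_h)$ (non-outage) and $\Delta^2_h(r_h)$ (outage), and finally balancing them over $r_h$ on the relevant DMT segment. The ``delicate'' mixing you flag is in fact benign in the paper's argument: in the outage exponent one takes $\beta^*=\nu$, which collapses the constraint to $M_*r_h>(bM_*-1)\sum_i(1-\alpha_i)^+$, a scaled DMT problem giving $\Delta^2_h(r_h)=\nu+d^*\!\big(\tfrac{M_*}{bM_*-1}r_h\big)$, while the non-outage exponent is minimized at $\boldsymbol\alpha^*=0,\beta^*=0$ to yield $\Delta^1_h(r_h)=\max\{\nu,r_h+1\}$.
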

\begin{proof}
See Appendix \ref{app:DistortionExponentHDAExponent}.
\end{proof}

\subsection{Progressive multi-layer LD transmission (LS-LD)}\label{sec:ProgressiveLD}

In this section, we consider a multi-layer transmission scheme to improve the achievable distortion exponent, in particular in the high bandwidth ratio regime. Multi-layer transmission is proposed in \cite{gunduz2008joint} to combat channel fading by transmitting multiple layers that carry successive refinements of the source sequence. At the receiver, as many layers as possible are decoded depending on the channel state. The better the channel state, the more layers can be decoded and the smaller is the distortion at the receiver. We propose to use successive refinement codewords that exploit the side information at the receiver \cite{Steinberg2004SuccessiveRefiWZ}.
Then, the refinement codewords are transmitted one after the other over the channel using the LD scheme introduced in Section \ref{ssec:VBD}. Similarly to \cite{gunduz2008joint}, we assume that each layer is allocated the same time resources (or number of channel accesses). In the limit of infinite layers, this assumption does not incur a loss in performance.

At the encoder, we generate $L$ Gaussian quantization codebooks, each with $2^{mR_l}$ codewords $W_l^m$ and $bR_l/L=I(S;W_l|W_1^{l-1})+\epsilon$, for $l=1,...,L$, with an arbitrarily small $\epsilon>0$, such that each Gaussian
codebook is a refinement for the previous layers  \cite{Steinberg2004SuccessiveRefiWZ}. The quantization codewords $W_l^n$ are generated with a test channel given by $W_l=S+\sum_{i=l}^L
Q_i$, for $l=1,...,L$, where $Q_l \sim
\mathcal{N}(0,\sigma^2_l)$ are independent of $S$ and of each other. Note that $T-S-W_L-W_{L-1}-\cdots-W_1$ form a Markov chain.
For a given rate tuple $\mathbf{R}\triangleq [R_1,...,R_L]$, with $R_1\geq\cdots R_L\geq 0$, the quantization noise variances satisfy
\begin{IEEEeqnarray}{rCl}\label{eq:Vark}
\sum_{i=l}^L\sigma_i^2=(2^{\sum_{i=1}^l(\frac{b}{L}R_i-\epsilon)}-1)^{-1},
\qquad l=1,...,L.
\end{IEEEeqnarray}

We generate $L$ independent channel codebooks, each with $2^{n\frac{bR_l}{L}}$ length-$\frac{n}{L}$  codewords $\mathbf{X}^{n/L}_{l}\in\mathds{C}^{M_t\times n/L}$ with $\mathbf{X}_{l,i}\sim\mathcal{CN}(0,\mathbf{I})$. Each successive refinement codeword is transmitted using LD as in Section \ref{ssec:VBD}.
At the destination, the decoder successively decodes each refinement codeword using joint decoding from the first layer up to the $L$-th layer. Then, $l$ layers will be successfully decoded if
\begin{IEEEeqnarray}{rCl}
I(S;W_{l}|T,W_{1}^{l-1})<\frac{b}{L} I(\mathbf{X};\mathbf{Y})\leq I(S;W_{l+1}|T,W^{l}_1),
 \end{IEEEeqnarray}
that is, $l$ layers are successfully decoded while there is an outage in decoding the $(l+1)$-th layer. Let us define the outage event, for $l=1,...,L$, as follows
\begin{IEEEeqnarray}{rCl}\label{eq:ProgLDOutCond}
\mathcal{O}^{ls}_{l}\!\triangleq\!\left\{(\mathbf{H},\gamma)\!:\!I(S;W_{l}|T,W_{1}^{l-1})\geq\frac{b}{L}I(\mathbf{X};\mathbf{Y}) \right\},
\end{IEEEeqnarray}
where $I(\mathbf{X},\mathbf{Y})=\log\det\left(\mathbf{I}+\frac{\rho}{M_*}\mathbf{HH}^{H}\right)$, and, with $R_0\triangleq0$,
\begin{IEEEeqnarray}{rCl}\label{eq:RHS1}
I(S;W_l|W_1^{l-1},T)=\log\left(\frac{2^{\sum_{i=1}^l\frac{b}{L}R_i}+\gamma\rho_s}{2^{\sum_{i=1}^{l-1}\frac{b}{L}R_i}+\gamma\rho_s}\right).
\end{IEEEeqnarray}
The details of the derivation are given in Appendix \ref{app:ProgressiveLD}.
Due to the successive refinability of the Gaussian source even in the presence of side information \cite{Steinberg2004SuccessiveRefiWZ}, provided $l$ layers have been successfully decoded, the receiver reconstructs the source with an MMSE estimator using the side information and the decoded layers with a distortion given by $D_{d}(\sum_{i=1}^lbR_l/L,\gamma)$.
The expected distortion can be expressed as follows.
\begin{IEEEeqnarray}{rCl}\label{eq:EDLS}
ED_{ls}(\mathbf{R})=\sum_{l=0}^{L}\mathrm{E}_{(\mathcal{O}^{ls}_{l})^c\bigcap\mathcal{O}^{ls}_{l+1}}\left[D_{d}\left(\sum_{i=1}^{l}\frac{bR_l}{L},\gamma\right)\right]\!.
\end{IEEEeqnarray}

The distortion exponent achieved by LS-LD is given next.

\begin{theorem}\label{th:DistExpLSLD}
Let us define
\begin{IEEEeqnarray}{rCl}\label{eq:LS_alpha_and_M}
\phi_k&\triangleq& M^*-M_*+2 k -1,\quad M_k\triangleq M_*-k+1,
\end{IEEEeqnarray}
and the sequence $\{c_i\}$  as
\begin{IEEEeqnarray}{rCl}
c_0=0, \quad c_i=c_{i-1}+\phi_i\ln\left(\frac{M_i}{M_i-1}\right),
\end{IEEEeqnarray}
for $i=1,...,M_*-1$, and $c_{M_*}=\infty$.

 The distortion exponent achieved by LS-LD with infinite number of layers is given by $\Delta^*_{ls}(b,\nu)=\nu$ if $b\leq x/M_*$, and if
\begin{IEEEeqnarray}{rCl}
c_{k-1}+\frac{\nu}{M_k}< b\leq c_{k}+\frac{\nu}{M_k-1},
\end{IEEEeqnarray}
for some $k\in\{1,...,M_*\}$, the achievable distortion exponent is given by
\begin{IEEEeqnarray}{rCl}
\Delta^*_{ls}(b,\nu)
&=&\nu+\sum_{i=1}^{k-1}\phi_i+M_k\phi_k
\times\left(1-e^{-\frac{b(1-\kappa^*)-c_{k-1}}{\phi_k}}\right),
\end{IEEEeqnarray}
where
\begin{IEEEeqnarray}{rCl}
\kappa^*=\frac{\phi_k}{b}\mathcal{W}\left(\frac{e^{\frac{b-c_{k-1}}{\phi_k}}\nu}{M_k \phi_k}\right),
\end{IEEEeqnarray}
and $\mathcal{W}(z)$ is the Lambert $W$ function, which gives the principal solution for $w$ in $z=we^{w}$.
\end{theorem}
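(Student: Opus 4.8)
The plan is to perform a high-SNR Laplace-type analysis of the expected distortion $ED_{ls}(\mathbf{R})$ after passing to the infinite-layer limit, and then optimize over the rate allocation. First I would introduce the usual change of variables $\lambda_i\doteq\rho^{-\alpha_i}$ for the ordered eigenvalues and $\gamma\doteq\rho^{-\beta}$ for the side-information gain. Under this substitution the channel mutual information concentrates as $I(\mathbf{X};\mathbf{Y})\doteq r(\boldsymbol{\alpha})\log\rho$ with multiplexing gain $r(\boldsymbol{\alpha})=\sum_{i=1}^{M_*}(1-\alpha_i)^+$, the side-information level satisfies $\gamma\rho_s\doteq\rho^{\nu-\beta}$, and the joint density of $(\boldsymbol{\alpha},\beta)$ behaves like $\rho^{-\sum_{i}\phi_i\alpha_i}\,\rho^{-\beta}$ on the region $\alpha_1\ge\cdots\ge\alpha_{M_*}\ge 0$, $\beta\ge 0$, combining the Zheng--Tse density with the exponential tail of $\Gamma$; here $\phi_i$ is exactly as in the theorem.

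Next I would take $L\to\infty$, so the cumulative rate becomes a profile $\mathcal{R}(t)$, $t\in[0,1]$, whose $\rho$-exponent I again denote $\mathcal{R}(t)$. The telescoping form of $I(S;W_l|T,W_1^{l-1})$ in \eqref{eq:RHS1} shows that decoding is \emph{free} in the exponent while the accumulated rate exponent stays below the side-information floor $\mu\triangleq\nu-\beta$, and that above the floor the per-layer outage event $\mathcal{O}^{ls}_l$ becomes the differential constraint $\mathcal{R}'(t)\le b\,r(\boldsymbol{\alpha})$. Thus, for a fixed realization, decoding halts at the first $t^\star$ with $\mathcal{R}'(t^\star)=b\,r(\boldsymbol{\alpha})$ (taking the profile convex increasing), and the per-realization distortion exponent from $D_d(\cdot,\gamma)$ in \eqref{eq:DistSSCC} is $\max\{\nu-\beta,\ \mathcal{R}(t^\star)\}$. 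Averaging via Laplace's method then reduces the claim to the minimax
\[
\Delta^*_{ls}(b,\nu)=\max_{\mathcal{R}(\cdot)}\ \min_{\boldsymbol{\alpha},\beta\ge 0}\Big[\textstyle\sum_{i}\phi_i\alpha_i+\beta+\max\{\nu-\beta,\ \mathcal{R}(t^\star(\boldsymbol{\alpha},\beta))\}\Big].
\]
In the complementary low-bandwidth regime $b\le\nu/M_*$ the floor term at $\beta=0$ dominates, forcing the exponent to saturate at $\nu$, which gives the first case of the theorem.

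I would then exploit that the density cost $\sum_i\phi_i\alpha_i$ needed to support a given multiplexing gain $r$ is governed by the DMT, a piecewise-linear convex function of $r$ with breakpoints at the integers; this is what introduces the index $k$, the slope $\phi_k$, and the multiplicity $M_k=M_*-k+1$ (the number of eigenvalues still active in region $k$). Integrating the per-region slope of the stationary profile yields the constants $c_k=\sum_i\phi_i\ln\frac{M_i}{M_i-1}$, the factor $\ln\frac{M_i}{M_i-1}$ being the signature of an exponential rate profile inside each region, and produces the $M_k\phi_k\big(1-e^{-(b(1-\kappa)-c_{k-1})/\phi_k}\big)$ term. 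The one remaining scalar, $\kappa$, measures the fraction of bandwidth spent climbing to the side-information floor versus refining above it; setting the $\kappa$-derivative of the objective to zero gives a transcendental equation of the form $z=we^{w}$, whose principal branch is the Lambert function, delivering $\kappa^\star$ as stated. Substituting $\kappa^\star$ back and reading off the interval on which each $k$ is active yields the boundaries $c_{k-1}+\nu/M_k<b\le c_k+\nu/(M_k-1)$ and the final formula.

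The hard part will be the coupled structure of the minimax: the floor $\mu=\nu-\beta$ ties the free portion of the profile to the very variable $\beta$ being minimized, so the inner minimization does not separate cleanly into a channel and a side-information contribution, and one must track which eigenvalues are active as $\boldsymbol{\alpha}$ moves along the DMT boundary. Verifying that the exponential profile is genuinely optimal (convexity of the relaxed variational problem, and that the saddle point lies in the claimed regime) is where the real effort lies; once that is settled, the Lambert-$W$ stationarity condition and the constants $c_k$ drop out mechanically, and the detailed calculation is deferred to the appendix as indicated.
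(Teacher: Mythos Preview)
Your proposal is essentially correct and follows the same overall strategy as the paper: the high-SNR change of variables $\lambda_i\doteq\rho^{-\alpha_i}$, $\gamma\doteq\rho^{-\beta}$, a Varadhan/Laplace reduction of each term in $ED_{ls}$, identification of the piecewise-linear DMT structure through $\phi_k$ and $M_k$, and the final Lambert-$W$ stationarity. The paper differs from your plan mainly in \emph{order of limits and in how the $\beta$--coupling is dissolved}. Rather than passing to the continuum first and then facing the coupled $(\boldsymbol\alpha,\beta)$ minimax you flag as ``the hard part'', the paper keeps $L$ finite and shows, layer by layer, that the inner minimization over $\beta$ is trivial: the infimum is attained at $\beta^*=(\nu-\tfrac{b}{L}\bar r_1^{\,l})^+$ (or $\beta^*=\nu$ for $l=0$), which collapses each exponent to the closed form $\Delta_l^{ls}(\mathbf r)=\max\{\nu,\tfrac{b}{L}\bar r_1^{\,l}\}+d^*(r_{l+1})$. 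The ``floor'' thus enters only through the scalar $\max\{\cdot,\nu\}$, and the coupled saddle you anticipate never materializes.

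With $\beta$ eliminated, the paper equates all $\Delta_l^{ls}$ (the KKT condition for the max--min) and obtains a simple recursion on $\{r_l\}$; this is where the finite-$L$ viewpoint pays off, because the recursion is exactly the ``lines of slope $b/L$ climbing the DMT curve'' construction of G\"und\"uz--Erkip (their Lemma~3), which the paper adapts rather than re-derives. Taking $L\to\infty$ in that recursion produces the exponential profile and the constants $c_k$ directly, and $\kappa^*$ is fixed not by a $\kappa$-derivative but by the matching condition $\Delta_0^{ls}=\Delta_{\kappa L}^{ls}$ together with $b\kappa r_1=\nu$; this pair reduces to $-\phi_k(\nu/(b\kappa)-M_k)=M_k\phi_k(1-e^{-(b(1-\kappa)-c_{k-1})/\phi_k})$, whose solution is the stated Lambert expression. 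Your variational route would reach the same endpoint, but the paper's discrete-then-limit argument avoids having to certify optimality of the exponential profile as a separate convexity step.
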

\begin{proof}
See Appendix \ref{app:ProgressiveLD}.
\end{proof}

The proof of Theorem \ref{th:DistExpLSLD} indicates that the distortion exponent for LS-LD is achieved by allocating an equal rate among the first $\kappa^*L$ layers to guarantee that the distortion exponent is at least $\nu$. Then, the rest of the refinement layers are used to further increase the distortion exponent with the corresponding rate allocation. Note that for $\nu=0$, we have $\kappa^*=0$, and Theorem \ref{th:DistExpLSLD} boils down to Theorem 4.2 in \cite{gunduz2008joint}.

\subsection{Broadcast strategy with LD (BS-LD)}\label{sec:MultiLD}
In this section, we consider the broadcast strategy in which the successive refinement layers are transmitted by superposition, and are decoded one by one with list decoding. The receiver decodes as many layers as possible using successive joint decoding, and reconstructs the source sequence using the  successfully decoded layers and the side information sequence.

At the encoder, we generate $L$ Gaussian quantization codebooks, at
rates $bR_l=I(S;W_l|W_1^{l-1})+\epsilon$, $l=1,...,L$, $\epsilon>0$, as in Section \ref{sec:ProgressiveLD}, and  $L$ channel codebooks $\mathbf{X}^n_{l}$,
$l=1,\ldots,L$, i.i.d. with $\mathbf{X}_{l,i}\sim\mathcal{CN}(0,\mathbf{I})$. Let $\boldsymbol\rho=[\rho_1,...,\rho_L,\rho_{L+1}]^T$ be the power allocation
among channel codebooks such that $\rho=\sum_{i=1}^{L+1}\rho_i$.  We
consider a power allocation strategy, such that $\rho_l=\rho^{\xi_{l-1}}-\rho^{\xi_{l}}$ with
$1=\xi_0\geq\xi_1\geq\ldots \geq \xi_L \geq 0$, and define $\boldsymbol\xi\triangleq[\xi_1,...,\xi_L]$. In the last layer, the layer $L+1$, Gaussian i.i.d. noise sequence with distribution $\mathbf{\tilde{N}}_i\sim\mathcal{CN}(0,\mathbf{I})$ is transmitted using  the remaining power $\rho_{L+1}\triangleq \rho^{\xi_L}$ for mathematical convenience. Then, the channel input $\mathbf{X}^n$ is generated as the superposition
of the $L$ codewords, $\mathbf{X}^n_{l}$ with the corresponding power allocation
$\sqrt{\rho_l}$ as
\begin{IEEEeqnarray}{rCL}
\mathbf{X}^n=\frac{1}{\sqrt{\rho}}\sum_{j=1}^L\sqrt{\rho_j}\mathbf{X}^n_{j}+\sqrt{\rho^{\xi_{L}}}\mathbf{\tilde{N}}^n.
\end{IEEEeqnarray}

 At the receiver, successive joint decoding is used from layer $1$ up to layer $L$, considering the posterior layers as noise. Layer $L+1$, containing the noise, is ignored. The outage event at layer $l$, provided $l-1$ layers have been decoded successfully, is given by
\begin{IEEEeqnarray}{rCl}\label{eq:OutageJointMultiple}
\mathcal{O}^{bs}_{l}&=&\left\{(\mathbf{H},\gamma):b
I(\mathbf{X}_{l};\mathbf{Y}|\mathbf{X}_{1}^{l-1})\leq I(S;W_l|T,W_1^{l-1})\right\}.
\end{IEEEeqnarray}
If $l$ layers are decoded,  the source is reconstructed at a distortion $D_{d}(\sum_{i=1}^{l}bR_i,\gamma)$ with an MMSE estimator, and the expected
distortion is found as
\begin{IEEEeqnarray}{rCl}
ED_{bs}(\mathbf{R},\boldsymbol\xi)\!=\!\sum_{l=1}^L\mathrm{E}_{\mathcal{O}^{bs}_{l+1}}
\left[D_d\left(\sum^l_{i=0}bR_i,\Gamma\right)\right],
\end{IEEEeqnarray}
where $\mathbf{R}\triangleq [R_1,...,R_L]$ and $\mathcal{O}^{bs}_{L+1}$ is the set of states in which  all the $L$ layers are successfully decoded.

The problem of optimizing the distortion exponent for BS-LD for $L$ layers, which we denote by $\Delta_{bs}^{L}(b,\nu)$, can be formulated as a linear program over the multiplexing gains $\mathbf{r}\triangleq [r_1,...,r_l]$, where $R_l=r_l\log\rho$ for $l=1,...,L$, and the power allocation $\boldsymbol\xi$, as shown in (\ref{eq:MlSystem2}) in Appendix  \ref{app:ExponentLayerLD_DistExpSol}, and can be efficiently solved numerically. In general, the performance of BS-LD is improved by increasing the number of layers $L$, and an upper bound on the performance, denoted by $\Delta_{bs}^*(b,\nu)$, is given in the limit of infinite layers, i.e., $L\rightarrow\infty$, which can be approximated by numerically solving $\Delta_{bs}^{L}(b,\nu)$ with a large number of layers. However, obtaining a complete  analytical characterization of $\Delta_{bs}^{L}(x,b)$ and $\Delta_{bs}^*(b,\nu)$ in general is complicated. In the following, we fix the multiplexing gains, and optimize the distortion exponent over the power allocation.  While fixing the multiplexing gains is potentially suboptimal, we obtain a closed form expression for an achievable distortion exponent, and analytically evaluate its limiting behavior. We shall see that, as the number of layers increases, this analytical solution matches the numerically optimized distortion exponent.

First, we fix the multiplexing gains as $\hat{\mathbf{r}}=[\hat{r}_1,...,\hat{r}_L]$ where $\hat{r}_l=[(k+1)(\xi_{l-1}-\xi_{l})-\epsilon_1]$ for $l=1,...,L$, for some $\epsilon_1\rightarrow0$, and optimize the distortion exponent over $\boldsymbol\xi$. The achievable distortion exponent is given in the next theorem.

\begin{theorem}\label{the:MLfintieLayers}
Let us define
\begin{IEEEeqnarray}{rCl}\label{eq:ML_eta}
 \eta_k\triangleq\frac{b(k+1)-\Phi_{k+1}}{\Upsilon_k}\quad\text{and } \quad \Gamma_k\triangleq\frac{1-\eta_{k}^{L-1}}{1-\eta_k}.
\end{IEEEeqnarray}

The distortion exponent achievable by BS-LD with $L$ layers and multiplexing gain $\hat{\mathbf{r}}$, is given by $\hat{\Delta}^L_{bs}(b,\nu)=\nu$ for $bM_*\leq  \nu$, and by
\begin{IEEEeqnarray}{lCl}\label{ML_distExpL}
\hat{\Delta}^L_{bs}(b,\nu)=\nu+\Phi_k-\frac{\Upsilon_k(\Upsilon_k(\nu+\Phi_k)+\nu b(k+1)\Gamma_k)}{(\Upsilon_k+b(1+k))(\Upsilon_k+b(1+k)\Gamma_k)-b(k+1)\Phi_k\Gamma_k},
\end{IEEEeqnarray}
for
\begin{IEEEeqnarray}{rCl}
b&\in&\left[\frac{\Phi_{k+1}+\nu}{k+1},\frac{\Phi_{k}+\nu}{k} \right), \quad k=0,...,M_*-1.
\end{IEEEeqnarray}
\end{theorem}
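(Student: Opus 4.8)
The plan is to run a high-SNR (Laplace's method) evaluation of $ED_{bs}(\hat{\mathbf{R}},\boldsymbol\xi)$ and then maximize the resulting exponent over the power allocation $\boldsymbol\xi$. First I would pass to SNR exponents: write the ordered eigenvalues of $\mathbf{HH}^H$ as $\lambda_i\doteq\rho^{-\alpha_i}$ and the side-information gain as $\gamma\doteq\rho^{-g}$, $g\geq 0$, recalling that on the ordered region the joint eigenvalue density satisfies $p(\boldsymbol\alpha)\doteq\rho^{-\sum_i(2i-1+M^*-M_*)\alpha_i}$ and that $\mathrm{P}(\Gamma\doteq\rho^{-g})\doteq\rho^{-g}$. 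With the allocation $\rho_l=\rho^{\xi_{l-1}}-\rho^{\xi_l}\doteq\rho^{\xi_{l-1}}$ and the posterior layers (total power $\doteq\rho^{\xi_l}$) treated as noise, the conditional channel term of layer $l$ in (\ref{eq:OutageJointMultiple}) becomes, to exponential order, $bI(\mathbf{X}_l;\mathbf{Y}|\mathbf{X}_1^{l-1})\doteq b\log\rho\sum_{i=1}^{M_*}[(\xi_{l-1}-\alpha_i)^+-(\xi_l-\alpha_i)^+]$, while the incremental source requirement, which by the successive-refinement construction takes the form (\ref{eq:RHS1}) with cumulative rate $\sum_{i=1}^{l}b\hat r_i$, becomes $I(S;W_l|T,W_1^{l-1})\doteq\log\rho\,(\max\{b\textstyle\sum_{i=1}^{l}\hat r_i,\nu-g\}-\max\{b\sum_{i=1}^{l-1}\hat r_i,\nu-g\})$.

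Next I would translate each outage event $\mathcal{O}^{bs}_l$ into linear constraints on $(\boldsymbol\alpha,g)$ after substituting $\hat r_l=(k+1)(\xi_{l-1}-\xi_l)-\epsilon_1$, and evaluate $ED_{bs}$ term by term. Each summand $\mathrm{E}_{\mathcal{O}^{bs}_{l+1}}[D_d(\sum_{i=1}^{l}b\hat R_i,\Gamma)]$ is, exponentially, the integral of $\rho^{-\sum_i(2i-1+M^*-M_*)\alpha_i-g-\max\{b\sum_{i=1}^{l}\hat r_i,\,\nu-g\}}$ over the region in which layers $1,\dots,l$ decode but layer $l+1$ fails; its exponent $\delta_l(\boldsymbol\xi)$ is obtained by minimizing the integrand exponent subject to those constraints, a finite linear program in $(\boldsymbol\alpha,g)$ for each $l$ and each active DMT corner $k$ fixed by $b\in[\tfrac{\Phi_{k+1}+\nu}{k+1},\tfrac{\Phi_k+\nu}{k})$. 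The achievable distortion exponent is then $\hat\Delta^{L}_{bs}=\max_{\boldsymbol\xi}\min_{l}\delta_l(\boldsymbol\xi)$, since the expected distortion decays as the worst (smallest-exponent) contribution.

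The final and main step is the $\boldsymbol\xi$-optimization. The choice $\hat r_l=(k+1)(\xi_{l-1}-\xi_l)-\epsilon_1$ aligns each layer with multiplexing gain $k+1$, so that its dominant outage exponent is governed by the DMT value $d^*(r)=\Phi_k-\Upsilon_k(r-k)$ of (\ref{eq:DMTcurve}) evaluated along the allocation; enforcing the balancing (max–min) condition that all active layers contribute an equal exponent then forces the successive power increments $\xi_{l-1}-\xi_l$ into a geometric progression with common ratio $\eta_k=(b(k+1)-\Phi_{k+1})/\Upsilon_k$ as in (\ref{eq:ML_eta}). Summing this finite geometric progression produces the factor $\Gamma_k=(1-\eta_k^{L-1})/(1-\eta_k)$, and solving the remaining scalar optimization, which trades the diversity gain $\Phi_k$ against the side-information term $\nu$, yields the stated rational expression for $\hat\Delta^{L}_{bs}(b,\nu)$. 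The regime $bM_*\leq\nu$, giving $\hat\Delta^{L}_{bs}=\nu$, falls out because the $\nu-g$ branch of the source mutual information dominates every layer and MMSE estimation from $T^m$ alone already delivers exponent $\nu$.

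I expect the main obstacle to be precisely this $\boldsymbol\xi$-optimization, together with the coupling introduced by the $\max\{b\sum\hat r_i,\nu-g\}$ terms. One must verify that at the optimum all active layers contribute equal exponents, identify the correct active corner $k$ for each $b$, and carefully track how the side-information exponent $g$ is optimally spent, since $g$ simultaneously relaxes the source requirement and costs probability $\rho^{-g}$. Establishing that the balanced allocation is exactly the geometric one with ratio $\eta_k$ — rather than merely feasible — is the delicate part, and is what makes the finite-layer exponent collapse to the closed form in $\Gamma_k$.
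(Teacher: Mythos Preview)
Your plan is essentially the paper's own proof: pass to SNR exponents via the change of variables $\lambda_i=\rho^{-\alpha_i}$, $\gamma=\rho^{-\beta}$, apply Varadhan's lemma to each summand to obtain per-layer exponents $\Delta^{bs}_l(\hat{\mathbf r},\boldsymbol\xi)$ governed by the successive-decoding diversity gain $d_{ds}(r_l,\xi_{l-1},\xi_l)=\Phi_k\xi_{l-1}-\Upsilon_k\delta_l$, and then equate the layer exponents, which forces $(\xi_{l-1}-\xi_l)$ into a geometric progression with ratio $\eta_k$ and reduces everything to a small linear system in $\xi_1$ and $\xi_1-\xi_2$ that yields (\ref{ML_distExpL}). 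One clarification: your worry that the balanced allocation must be shown to be the \emph{optimal} one is unnecessary here, since the theorem is stated for the fixed multiplexing gains $\hat{\mathbf r}$ and only claims achievability, so exhibiting the equated (geometric) power allocation and checking its feasibility on the interval $b\in[\tfrac{\Phi_{k+1}+\nu}{k+1},\tfrac{\Phi_k+\nu}{k})$ suffices.
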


\begin{proof}
See Appendix \ref{app:ExponentLayerLD_DistExpSol}.
\end{proof}

An upper bound on the performance of BS-LD with multiplexing gains $\hat{\mathbf{r}}_l$ is
obtained for a continuum of infinite layers, i.e.,
$L\rightarrow\infty$.
\begin{corollary}\label{cor:MlInfLay}
The distortion exponent of BS-LD with multiplexing gains $\hat{\mathbf{r}}$ in the limit of
infinite layers, $\hat{\Delta}_{bs}^{\infty}(b,\nu)$, is given, for $k\!=\!0,...,M_*\!-\!1$, by
\begin{IEEEeqnarray}{lCl}
\hat{\Delta}_{bs}^{\infty}(b,\nu)=\max\{\nu,b(k+1)\}\qquad
\text{for  } b\!\in\!\left[\frac{\Phi_{k+1}+\nu}{k+1},\frac{\Phi_k}{k+1}\right),
\end{IEEEeqnarray}
and \begin{IEEEeqnarray}{lCl}
\hat{\Delta}_{bs}^{\infty}(b,\nu)&=&\Phi_k+\nu\left(\frac{b (1+k)-\Phi_k}{b(1+k)-\Phi_{k+1}}\right)\qquad\text{for }b\in\left[\frac{\Phi_k}{k+1},\frac{\Phi_k+\nu}{k}\right).
\end{IEEEeqnarray}
\end{corollary}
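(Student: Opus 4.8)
The plan is to obtain $\hat{\Delta}_{bs}^{\infty}(b,\nu)$ simply by passing to the limit $L\rightarrow\infty$ in the closed-form expression (\ref{ML_distExpL}) of Theorem \ref{the:MLfintieLayers}, since $\hat{\Delta}_{bs}^{\infty}$ is by definition the limit of the finite-layer exponent. The only dependence on the number of layers $L$ enters through the factor $\Gamma_k=\frac{1-\eta_k^{L-1}}{1-\eta_k}$ defined in (\ref{eq:ML_eta}), so the whole argument reduces to understanding how $\Gamma_k$ behaves as $L\rightarrow\infty$. This is governed entirely by whether the ratio $\eta_k=\frac{b(k+1)-\Phi_{k+1}}{\Upsilon_k}$ is smaller than or at least equal to one.

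First I would record the elementary identity $\Phi_k-\Phi_{k+1}=\Upsilon_k$, which follows immediately from the definitions in (\ref{eq:ML_PhiUpsilon}) by expanding $\Phi_k=(M^*-k)(M_*-k)$ and $\Phi_{k+1}=(M^*-k-1)(M_*-k-1)$. With this identity, $1-\eta_k=\frac{\Upsilon_k-b(k+1)+\Phi_{k+1}}{\Upsilon_k}=\frac{\Phi_k-b(k+1)}{\Upsilon_k}$, so that $\eta_k<1$ precisely when $b<\frac{\Phi_k}{k+1}$ and $\eta_k\geq 1$ when $b\geq\frac{\Phi_k}{k+1}$. The threshold $b=\frac{\Phi_k}{k+1}$ is exactly the point separating the two sub-intervals in the statement, so the two cases of the corollary correspond to the two regimes of $\Gamma_k$.

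In the first regime, $b\in\left[\frac{\Phi_{k+1}+\nu}{k+1},\frac{\Phi_k}{k+1}\right)$, we have $\eta_k<1$, hence $\eta_k^{L-1}\rightarrow 0$ and $\Gamma_k\rightarrow\frac{1}{1-\eta_k}=\frac{\Upsilon_k}{\Phi_k-b(k+1)}$. Substituting this finite limit into (\ref{ML_distExpL}) and simplifying, the factor $\Phi_k-b(k+1)$ cancels between numerator and denominator and the subtracted fraction collapses to $\Phi_k-b(k+1)+\nu$, leaving $\hat{\Delta}_{bs}^{\infty}(b,\nu)=\nu+\Phi_k-(\Phi_k-b(k+1)+\nu)=b(k+1)$. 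Since $\Phi_{k+1}\geq0$ forces $b(k+1)\geq\nu$ throughout this sub-interval, and since the scheme can always fall back to reconstruction from the side information alone, which attains exponent $\nu$ (the value reached for $bM_*\leq\nu$ in Theorem \ref{the:MLfintieLayers}), the result is written in the unified form $\max\{\nu,b(k+1)\}$.

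In the second regime, $b\in\left[\frac{\Phi_k}{k+1},\frac{\Phi_k+\nu}{k}\right)$, we have $\eta_k\geq 1$, so $\Gamma_k\rightarrow\infty$. Here I would divide both numerator and denominator of the fraction in (\ref{ML_distExpL}) by $\Gamma_k$ before taking the limit: the numerator tends to $\Upsilon_k\,\nu\,b(k+1)$ and the denominator to $b(k+1)\bigl(b(k+1)-\Phi_{k+1}\bigr)$, again using $\Phi_{k+1}+\Upsilon_k=\Phi_k$. This gives $\hat{\Delta}_{bs}^{\infty}(b,\nu)=\Phi_k+\nu\bigl(1-\tfrac{\Upsilon_k}{b(k+1)-\Phi_{k+1}}\bigr)$, and one final application of $\Phi_k-\Phi_{k+1}=\Upsilon_k$ rewrites the parenthesis as $\frac{b(1+k)-\Phi_k}{b(1+k)-\Phi_{k+1}}$, matching the stated expression. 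The computation is otherwise routine; the only points requiring care, and hence the main obstacle, are the sign bookkeeping at the threshold $\eta_k=1$ so that the correct branch of the limit of $\Gamma_k$ is taken on each sub-interval, and carrying out the cancellation in the first regime without error.
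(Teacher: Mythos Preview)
Your proposal is correct and follows exactly the approach the paper intends: the appendix referenced for the corollary only derives the finite-$L$ formula (\ref{ML_distExpL}), so the corollary is obtained precisely by letting $L\rightarrow\infty$ and tracking the behaviour of $\Gamma_k$ according to whether $\eta_k<1$ or $\eta_k\geq 1$. Your use of the identity $\Phi_k-\Phi_{k+1}=\Upsilon_k$ to identify the threshold $b=\Phi_k/(k+1)$ and to simplify both limits is the natural route, and the algebraic cancellations you describe are accurate.
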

\begin{proof}
See Appendix \ref{app:ExponentLayerLD_DistExpSol}.
\end{proof}

 The solution in Theorem \ref{the:MLfintieLayers} is obtained by fixing the multiplexing gains to $\hat{\mathbf{r}}$. This is potentially suboptimal since it excludes, for example, the performance of single-layer LD from the set of feasible solutions. By fixing $\mathbf{r}$ such that $r_2=\cdots=r_L=0$, BS-LD reduces to single layer LD and achieves a distortion exponent given in Theorem \ref{the:ExponentJoint}, i.e., $\Delta_{ld}(b,\nu)$. Interestingly, for $b$ satisfying
\begin{IEEEeqnarray}{rCl}\label{eq:CondBDreduced}
b\in\left[\frac{\Phi_k}{k},\frac{\Phi_k+\nu}{k}\right),\qquad k=1,...,M_*-1,
\end{IEEEeqnarray}
single-layer LD achieves a larger distortion exponent than $\hat{\Delta}_{bs}^{\infty}(b,\nu)$ in Corollary \ref{cor:MlInfLay}, as shown in Figure \ref{fig:MIMODistortionExponentMultiBad}. Note that this region is empty for $\nu=0$, and thus, this phenomena does not appear in the absence of side information. The achievable distortion exponent for BS-LD can be stated as follows.

\begin{lemma}\label{lem:BSLDAchievable}
BS-LD achieves the distortion exponent
\begin{IEEEeqnarray}{rCl}
\bar{\Delta}_{bs}(b,\nu)&=&\max\{\hat{\Delta}_{bs}^{\infty}(b,\nu),\Delta_{ld}(b,\nu)\}.
\end{IEEEeqnarray}
\end{lemma}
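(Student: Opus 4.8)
The plan is to treat BS-LD as a family of schemes parameterized by the number of layers $L$, the multiplexing gains $\mathbf{r}=[r_1,\ldots,r_L]$, and the power allocation $\boldsymbol{\xi}=[\xi_1,\ldots,\xi_L]$, so that every feasible choice of these parameters certifies an achievable distortion exponent. I would then exhibit two particular configurations, each realizing one of the two terms inside the claimed maximum; since BS-LD is free to select whichever configuration is better, it achieves at least the larger of the two, which is exactly $\bar{\Delta}_{bs}(b,\nu)$.

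The first configuration is immediate. Fixing the multiplexing gains to $\hat{\mathbf{r}}$ and letting $L\rightarrow\infty$ is a feasible choice, and Corollary \ref{cor:MlInfLay} shows that it attains the distortion exponent $\hat{\Delta}_{bs}^{\infty}(b,\nu)$. Hence BS-LD achieves at least $\hat{\Delta}_{bs}^{\infty}(b,\nu)$ with no further argument.

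The second configuration realizes $\Delta_{ld}(b,\nu)$, and the main work is to verify that single-layer LD is a degenerate instance of BS-LD. I would set $r_2=\cdots=r_L=0$ and place essentially all power on the first layer by taking $\xi_1=0$, so that $\rho_1=\rho-\rho^{\xi_1}\doteq\rho$ while the remaining layers carry only $O(1)$ power. With $R_2=\cdots=R_L=0$, the successive-refinement structure forces $W_2=\cdots=W_L=W_1$, so only the first layer conveys source information. I would then check that the superposition outage event $\mathcal{O}^{bs}_1$ in (\ref{eq:OutageJointMultiple}) collapses to the list-decoding outage event $\mathcal{O}_{ld}$ in (\ref{eq:OutageSets}): the term $b\,I(\mathbf{X}_1;\mathbf{Y})$ reduces to $b\log\det(\mathbf{I}+\frac{\rho}{M_*}\mathbf{H}\mathbf{H}^H)$ and $I(S;W_1|T)$ matches the LD expression, so the two outage sets coincide at the exponent level; moreover the reconstruction distortions agree, equalling $D_d(bR_1,\gamma)$ when the layer is decoded and $D_d(0,\gamma)$ under outage. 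The resulting expected distortion is therefore that of LD, so this configuration achieves $\Delta_{ld}(b,\nu)$ by Theorem \ref{the:ExponentJoint}.

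Combining the two configurations, BS-LD achieves at least $\max\{\hat{\Delta}_{bs}^{\infty}(b,\nu),\Delta_{ld}(b,\nu)\}=\bar{\Delta}_{bs}(b,\nu)$, which is the assertion. The step I expect to demand the most care is the collapse in the preceding paragraph: one must confirm that concentrating the superposition power on the first layer leaves only $O(1)$ residual power on the remaining layers — including the auxiliary noise layer of power $\rho^{\xi_L}$ — so that this residual acts as bounded additive noise and perturbs $I(\mathbf{X}_1;\mathbf{Y})$ only at a sub-exponential level, making the match to $\Delta_{ld}(b,\nu)$ exact rather than merely up to a constant gap. This second branch is nonvacuous precisely on the interval (\ref{eq:CondBDreduced}), where single-layer LD strictly exceeds the infinite-layer superposition exponent $\hat{\Delta}_{bs}^{\infty}(b,\nu)$.
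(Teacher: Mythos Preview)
Your proposal is correct and follows essentially the same approach as the paper: the paper's justification, given inline just before the lemma rather than as a separate proof, simply observes that Corollary~\ref{cor:MlInfLay} delivers $\hat{\Delta}_{bs}^{\infty}(b,\nu)$ under the fixed multiplexing gains $\hat{\mathbf{r}}$, and that setting $r_2=\cdots=r_L=0$ collapses BS-LD to single-layer LD and hence achieves $\Delta_{ld}(b,\nu)$. Your additional verification that concentrating power on the first layer leaves only $O(1)$ residual interference (so the outage sets and distortions match LD at the exponent level) is a level of care the paper omits, but the underlying argument is the same.
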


Next, we consider the numerical optimization $\Delta_{bs}^L(b,\nu)$, and compare it with the distortion exponent achieved by fixing the multiplexing gain.
In Figure \ref{fig:MIMODistortionExponentMultiBad} we show one instance of the numerical optimization of $\Delta^{L}_{bs}(b,s)$ for $3\times 2$ MIMO and $\nu=0.5$, for $L=2$ and $L=500$ layers. We also include the distortion exponent achievable by single-layer LD, i.e., when $L=1$, and the exponent achievable by BS-LD with multiplexing gains $\hat{\mathbf{r}}$, with $L=2$ layers and in the limit of infinite layers, denoted by $\hat{\Delta}_{bs}^{2}(b,\nu)$ and $\hat{\Delta}_{bs}^{\infty}(b,\nu)$, respectively. We observe that the numerically optimized distortion exponent improves as the number of layers increases. There is a significant improvement in the distortion exponent just by using two layers in the high bandwidth regime, while this improvement is not so significant for intermediate $b$ values. We also note that there is a tight match between the distortion exponent achievable by Lemma \ref{lem:BSLDAchievable} and the one optimized numerically for $L=500$ layers. For $L=2$, we observe a tight match between $\hat{\Delta}^{2}_{bs}(b,\nu)$ and $\Delta^{2}_{bs}(b,\nu)$ in the high bandwidth ratio regime. However, for intermediate bandwidth ratios, $\hat{\Delta}^{2}_{bs}(b,\nu)$ is significantly worse than $\Delta^{2}_{bs}(b,\nu)$, and, in general, worse than $\Delta_{ld}(b,\nu)$.  Note that, as expected, if the power allocation and the multiplexing gains are jointly optimized, using two layers provides an improvement on the distortion exponent, i.e., $\Delta_{bs}^2(b,\nu)$ outperforms $\Delta_{ld}(b,\nu)$. We also observe that $\hat{\Delta}_{bs}^2(b,\nu)$ and $\hat{\Delta}_{bs}(b,\nu)$ are discontinuous at $b=2.5$, while this discontinuity is not present in the numerically optimized distortion exponents.

Our extensive numerical simulations suggest that, for $b$ values satisfying (\ref{eq:CondBDreduced}), the performance of $\Delta^{L}_{bs}(b,\nu)$ reduces to the distortion exponent achievable by a single layer. We also observe that as the number of layers increases, the difference between $\Delta_{bs}^{L}(b,\nu)$ and $\hat{\Delta}_{bs}^{L}(b,\nu)$ is reduced, and that the distortion exponent achievable by BS-LD as stated in Lemma \ref{lem:BSLDAchievable}, i.e., $\bar{\Delta}_{bs}(b,\nu)$, is indeed very close to the optimal performance that can be achieved by jointly optimizing the multiplexing gain and the power allocation. In the next section, we will see that in certain cases fixing the multiplexing gain to $\hat{\boldsymbol{r}}$ suffices for BS-LD to meet the partially informed upper bound in the MISO/SIMO setup, and therefore $\Delta_{bs}^*(b,\nu)=\hat{\Delta}_{bs}^{\infty}(b,\nu)$.

\begin{figure}
\centering
\includegraphics[width=0.65\textwidth]{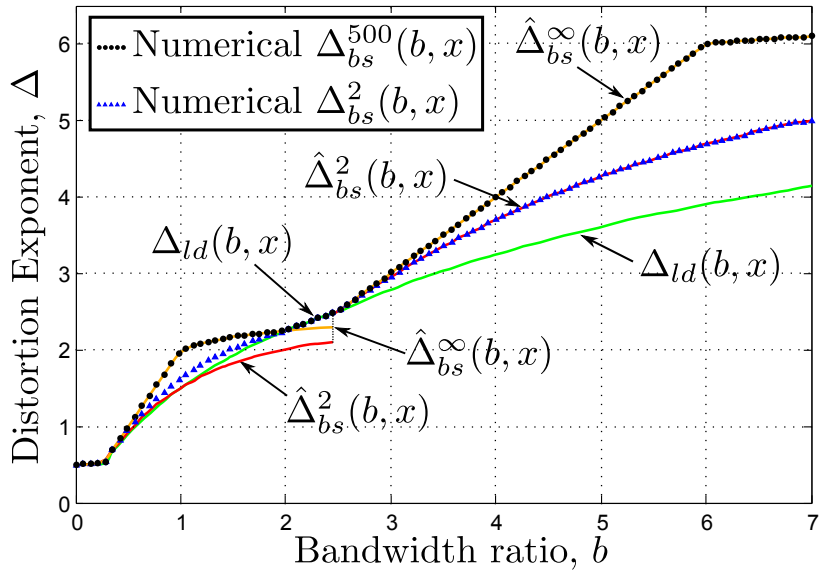}
\vspace{-3 mm}\caption{Distortion exponent achieved by BS-LD with $L=1,2$ and in the limit of infinite layers with respect to
the bandwidth ratio $b$ for a $3\times 2$ MIMO system and a side
information quality given by $\nu=0.5$. Numerical results on the achievable distortion exponent for $L=2$ and $L=500$ are also included.} \vspace{-3 mm}\label{fig:MIMODistortionExponentMultiBad}
\end{figure}

\section{Comparison of the Proposed Schemes and Discussion}\label{sec:Comments}
In this section, we compare the performances of the proposed schemes with each other and with the proposed upper bound. First, we use the upper bound derived in Section \ref{sec:UpperBounds} to characterize the optimal distortion exponent for bandwidth ratios $0\leq b\leq\max\{M^*-M_*+1,\nu\}/M_* $. We show that, when $0\leq b\leq \nu/M_*$, the optimal distortion exponent is achieved by ignoring the channel, and reconstructing the source sequence using only the side information. If  $\nu/M_*\leq b\leq (M^*-M_*+1)/M_*$, then the optimal distortion exponent is achieved by ignoring the side information, and employing the optimal transmission scheme in the absence of side information.

 Then, we characterize the optimal distortion exponent for MISO/SIMO/SISO scenarios.  In MISO/SIMO, i.e., $M_*=1$, we show that BS-LD meets the upper bound, thus characterizing the optimal distortion exponent. This extends the result of \cite{gunduz2008joint} to the case with time-varying source side information. For SISO, i.e., $M^*=M_*=1$, HDA-LD also achieves the optimal distortion exponent. For the general MIMO setup, none of the proposed schemes meet the upper bound for $b>1/M_*$. Nevertheless, multi-layer transmission schemes perform close to the upper bound, especially in the high bandwidth ratio regime.

\subsection{Optimal distortion exponent for low bandwidth ratios}\label{sec:OptimalLowB}\label{sec:LowBand}

First, we consider the MMSE reconstruction of $S^m$ only from the side information sequence $T^m$ available at the receiver, i.e., $\hat{S}_i=\mathrm{E}[S_i|T_i]$. The source sequence is reconstructed with  distortion $D_{no}(\gamma)\triangleq(1+\rho_s\gamma)^{-1}$, and averaging over the side information realizations the distortion exponent is found as $\Delta_{no}(b,\nu)=\nu$, which meets  the upper bound $\Delta_{up}(b,\nu)$ for $0\leq b\leq \nu/M_*$, characterizing the optimal distortion exponent.

\begin{lemma}\label{rm:RemarkOptDist0}
For $0\leq\! b\!\leq \nu/M_*$, the optimal distortion exponent $\Delta^*(b,\nu)\!=\!\nu$ is achieved by simple MMSE reconstruction of $S^m$  from the side information sequence $T^m$.
\end{lemma}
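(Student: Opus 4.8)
The plan is to show that the side-information-only MMSE estimator already achieves the exponent $\nu$, and then to read off from Theorem~\ref{the:UpperBound} that $\nu$ is also the upper bound on the stated range, so that the two meet. The achievability ingredient is essentially recorded in the paragraph preceding the lemma: conditioned on the side information gain $\Gamma=\gamma$, the scalar observation $T_i=\sqrt{\rho_s}\Gamma_c S_i+Z_i$ is a Gaussian channel of effective SNR $\rho_s\gamma$ for the unit-variance source, so $\hat S_i=\mathrm{E}[S_i|T_i]$ attains $D_{no}(\gamma)=(1+\rho_s\gamma)^{-1}$. Since $\mathbf Y^n$ is never used, this holds for every bandwidth ratio $b$.

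Next I would average over $\Gamma$, which has pdf $p_\Gamma(\gamma)=e^{-\gamma}$, and extract the exponent. Substituting $u=\rho_s\gamma$ and using $\int_0^\infty \frac{e^{-au}}{1+u}\,du=e^aE_1(a)$ with $a=1/\rho_s$ yields
\begin{IEEEeqnarray}{rCl}
ED_{no} &=& \int_0^\infty \frac{e^{-\gamma}}{1+\rho_s\gamma}\,d\gamma = \frac{1}{\rho_s}\,e^{1/\rho_s}\,E_1\!\left(\frac{1}{\rho_s}\right).
\end{IEEEeqnarray}
The small-argument expansion $E_1(x)=-\ln x+O(1)$ as $x\to 0$ together with $e^{1/\rho_s}\to 1$ then gives $ED_{no}\doteq\rho_s^{-1}\ln\rho_s$. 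Setting $\rho_s\doteq\rho^{\nu}$, the factor $\ln\rho_s\doteq\nu\ln\rho$ is subpolynomial in $\rho$ and hence does not affect $\lim\log ED_{no}/\log\rho$; thus $ED_{no}\doteq\rho^{-\nu}$ and the achievable exponent is $\Delta_{no}(b,\nu)=\nu$, independent of $b$.

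For optimality I would invoke Theorem~\ref{the:UpperBound}. For $0\le b<\nu/M_*$ its first branch gives $\Delta_{up}(b,\nu)=\nu$, and at the boundary $b=\nu/M_*$ the second branch $bM_*$ also evaluates to $\nu$; hence $\Delta_{up}(b,\nu)=\nu$ throughout $0\le b\le\nu/M_*$. Since the achievable exponent $\nu$ meets this upper bound, $\Delta^*(b,\nu)=\nu$ on the entire range, completing the proof.

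I expect the only delicate point to be the exponent extraction: I must confirm that the logarithmic correction coming from $E_1$ near the origin is exponentially negligible, i.e.\ that the $\ln\rho$ prefactor leaves the exponent unchanged under $\doteq$. Everything else reduces to the scalar Gaussian MMSE formula already recorded above the lemma and to a direct evaluation of the upper bound at its first two branches.
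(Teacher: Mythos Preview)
Your proposal is correct and follows the same route as the paper: achievability via the side-information-only MMSE estimator yielding exponent $\nu$, matched against the upper bound of Theorem~\ref{the:UpperBound} on the range $0\le b\le \nu/M_*$. The paper's argument is essentially the one-paragraph sketch preceding the lemma, which simply asserts $\Delta_{no}(b,\nu)=\nu$ without writing out the $E_1$ calculation; your explicit computation via the exponential integral and the check that the $\ln\rho$ prefactor is exponentially negligible is a welcome elaboration. One small point: when you read off the upper bound, you cite the first two branches of Theorem~\ref{the:UpperBound}, but those branches belong to the case $\nu/M_*<M^*+M_*-1$; for completeness you should also note that in the remaining case $\nu/M_*\ge M^*+M_*-1$ the bound is $\nu+d^*(\nu/b)$, and since $b\le\nu/M_*$ forces $\nu/b\ge M_*$ and hence $d^*(\nu/b)=0$, the conclusion $\Delta_{up}=\nu$ still holds.
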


Additionally, Theorem \ref{the:UpperBound} reveals that in certain regimes, the distortion exponent is upper bounded by $\Delta_{\mathrm{MIMO}}(b)$, the distortion exponent upper bound in the absence of side information at the destination \cite[Theorem 3.1]{gunduz2008joint}. In fact, for $\nu/M_*\leq \!b\!\leq M^*-M_*+1$, \mbox{we have $\Delta_{up}(b,\nu)\!=\!bM_*$}. This distortion exponent is achievable for $b$ satisfying $\nu/M_*\leq b\leq (M^*-M_*+1)/M_*$ by ignoring the side information and employing  the optimal scheme in the absence of side information, which is given by the multi-layer broadcast transmission scheme considered in \cite{bhattad2008distortion}. The same distortion exponent is achievable by considering BS-LD ignoring the side information, i.e., $\Delta^*(b,\nu)=\hat{\Delta}^{L}_{bs}(b,0)$. If $\nu/M_*\leq b \leq 1/M_*$, the optimal distortion exponent is also achievable by HDA-S and $\Delta^*(b,\nu)=\Delta_{h}(b,\nu)$.

\begin{lemma}\label{rm:RemarkOptDist}
For $\nu/M_*\leq b \leq (M^*-M_*+1)/M_*$, the optimal distortion exponent is given by  $\Delta^*(b,\nu)=bM_*$, and is achievable by BS-LD ignoring the side information sequence $T^m$. If $\nu/M_*\leq b \leq 1/M_*$ the distortion exponent is also achievable by HDA-S.
\end{lemma}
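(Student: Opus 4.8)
The plan is to prove the matching converse and achievability separately, showing that both collapse to $bM_*$ on the stated interval. The converse comes directly from Theorem \ref{the:UpperBound}, and achievability comes from the multi-layer broadcast result already established for BS-LD in Corollary \ref{cor:MlInfLay} (equivalently, from the side-information-free scheme of \cite{bhattad2008distortion}), with the HDA-S claim following from the HDA-S lemma.

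First I would dispatch the converse. Since $M_*\geq 1$, the upper endpoint satisfies $(M^*-M_*+1)/M_*\leq M^*-M_*+1$, so the entire interval $\nu/M_*\leq b\leq (M^*-M_*+1)/M_*$ lies within the range $\nu/M_*\leq b< M^*-M_*+1$ governing the second line of $\Delta_{up}(b,\nu)$ in Theorem \ref{the:UpperBound}; hence $\Delta_{up}(b,\nu)=bM_*$ and therefore $\Delta^*(b,\nu)\leq bM_*$ throughout. (For the nonemptiness of the interval one needs $\nu\leq M^*-M_*+1$, which places $b$ strictly below $M^*-M_*+1$ except at the degenerate $M_*=1$ endpoint, handled by continuity of the bound.)

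For achievability by BS-LD I would invoke Corollary \ref{cor:MlInfLay} at the regime index $k=M_*-1$. There $\Phi_{M_*}=0$ and $\Phi_{M_*-1}=M^*-M_*+1$, so the first sub-interval of the corollary is exactly $b\in[\nu/M_*,(M^*-M_*+1)/M_*)$, on which $\hat{\Delta}_{bs}^{\infty}(b,\nu)=\max\{\nu,b(k+1)\}=\max\{\nu,bM_*\}$; since $b\geq\nu/M_*$ forces $bM_*\geq\nu$, this equals $bM_*$. Because the side information contributes only the dominated floor $\nu$ here, this is precisely the value obtained by ignoring $T^m$, i.e. $\hat{\Delta}_{bs}^{\infty}(b,0)=bM_*$, consistent with reducing to the no-side-information MIMO problem where $\Delta_{\mathrm{MIMO}}(b)=\sum_{i=1}^{M_*}\min\{b,2i-1+M^*-M_*\}=bM_*$ for $b\leq M^*-M_*+1$ (each term attaining its minimum at $b$), achievable by the broadcast scheme of \cite{bhattad2008distortion}. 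A one-line check of the endpoint $b=(M^*-M_*+1)/M_*$, where the second sub-interval expression $\Phi_{M_*-1}+\nu(bM_*-\Phi_{M_*-1})/(bM_*)$ also evaluates to $bM_*$, extends the claim to the closed interval. Combining with the converse gives $\Delta^*(b,\nu)=bM_*$. For the narrower range $\nu/M_*\leq b\leq 1/M_*$ I would simply cite the HDA-S lemma, $\Delta_h(b,\nu)=bM_*$ for $b\leq 1/M_*$, which again meets the upper bound and is hence optimal.

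The main obstacle is not analytical but bookkeeping: confirming that $k=M_*-1$ is the correct regime of Corollary \ref{cor:MlInfLay} for these small bandwidth ratios and that the closed-interval endpoint is covered, together with noting that a single layer is genuinely insufficient. Indeed, evaluating $\Delta_{ld}(b,0)$ from Theorem \ref{the:ExponentJoint} in this regime gives $bM_*\phi/(\phi+b)$ with $\phi=M^*-M_*+1$, which equals $bM_*$ only at $b=0$; this confirms that the multi-layer (infinite-layer) construction is essential and that the achievability must rest on Corollary \ref{cor:MlInfLay} or \cite{bhattad2008distortion} rather than on single-layer LD.
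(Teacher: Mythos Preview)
Your proposal is correct and follows essentially the same route as the paper: the converse comes from the second line of Theorem~\ref{the:UpperBound}, and achievability from the multi-layer broadcast result (the paper cites \cite{bhattad2008distortion} and the BS-LD exponent with $\nu=0$, while you invoke Corollary~\ref{cor:MlInfLay} at $k=M_*-1$ with the actual $\nu$ and then observe the side-information term is dominated---equivalent arguments). Your additional bookkeeping on the closed endpoint and the remark that single-layer LD falls strictly short are correct refinements that the paper does not spell out.
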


\subsection{Optimal distortion exponent for MISO/SIMO/SISO}
%
%
%

\begin{figure}
\centering
\includegraphics[width=0.65\textwidth]{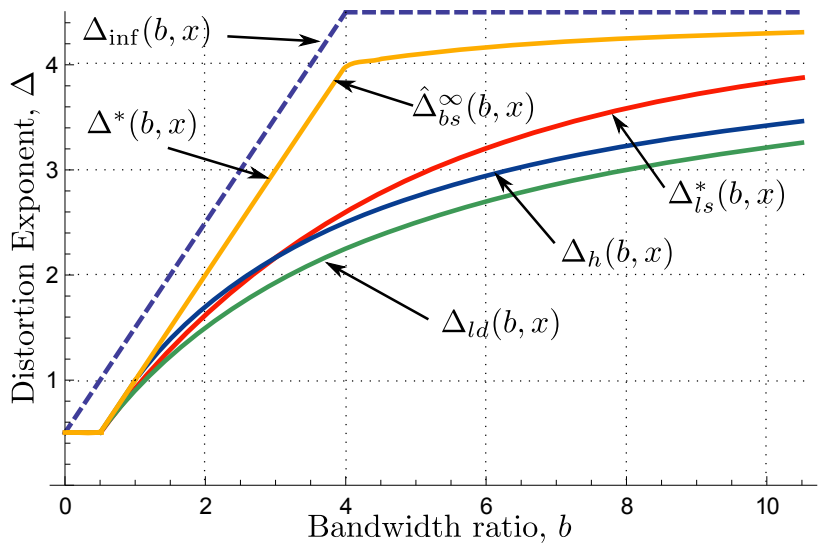}
\vspace{-3 mm}\caption{Distortion exponent $\Delta(b,\nu)$ with respect to
the bandwidth ratio $b$ for a $4\times 1$ MISO system and a side
information quality given by $\nu=0.5$.} \vspace{-3 mm}\label{fig:MISODistortionExponent}
\end{figure}

The following distortion exponent is achievable by HDA-S for $\nu\leq b\leq1$, and by HDA-LD for $b>1$, in the MISO/SIMO setup.
\begin{IEEEeqnarray}{rCl}
\Delta_h(b,\nu)=\begin{cases}\max\{\nu,b\}&\text{for }b\leq 1,\\
\max\{\nu,\frac{M^*+(b-1)(M^*+\nu)}{M^*+b-1}\}&\text{for }b>1.
\end{cases}
\end{IEEEeqnarray}
As seen in Section \ref{sec:OptimalLowB}, HDA-S meets the partially informed upper bound for $b\leq 1$. HDA-LD is in general suboptimal.

 For the multi-layer transmission schemes, the distortion exponent acheivable by  LS-LD is given by
\begin{IEEEeqnarray}{rCl}
\Delta^*_{ls}(b,\nu)=\nu\!+\!M^*\left(1-e^{-\frac{b(1-\kappa^*)}{M^*}}\right),\qquad \kappa^*\!=\!\frac{M^*}{b}\mathcal{W}\left(\frac{e^{\frac{b}{M^*}\nu}}{M^*}\right). 
\end{IEEEeqnarray}

As for BS-LD, considering the achievable rate in Corollary \ref{cor:MlInfLay}, this scheme meets the partially informed encoder lower bound in the limit of infinite layers, i.e., $\hat{\Delta}_{bs}^{\infty}(b,\nu)=\Delta^*_{up}(b,\nu)$. This fully characterizes the optimal distortion exponent in the MISO/SIMO setup, as stated in the next theorem.


\begin{theorem}\label{th:OptDistMISO} The optimal distortion exponent $\Delta^*(b,\nu)$ for MISO/SIMO systems is given by
\begin{IEEEeqnarray}{rCl}
\Delta^*(b,\nu)=\begin{cases}\max\{\nu,b\}&\text{for }b\leq\max\{M^*,\nu\},\\
M^*+\nu\left(1-\frac{M^*}{b}\right)&\text{for }b>\max\{M^*,\nu\},
\end{cases}
\end{IEEEeqnarray}
and is achieved by BS-LD in the limit of infinite layers.
\end{theorem}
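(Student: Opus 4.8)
The plan is to prove the theorem as a specialization of the general upper bound in Theorem~\ref{the:UpperBound} and the BS-LD achievability of Corollary~\ref{cor:MlInfLay} to the MISO/SIMO case $M_*=1$, matched against Lemma~\ref{rm:RemarkOptDist0} in the low-bandwidth regime. The strategy is standard: I would establish a converse (no scheme beats $\Delta^*$) and a matching achievability (BS-LD attains $\Delta^*$), and then observe that the two expressions coincide for every $b$. Since both ingredients are already available, the content is entirely in the specialization and the boundary matching.

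For the converse I would set $M_*=1$ throughout Theorem~\ref{the:UpperBound}. Then $\Phi_0=M^*$, $\Phi_1=0$, $\Upsilon_0=M^*$, and the DMT reduces to the single segment $d^*(r)=M^*(1-r)^+$ joining $(0,M^*)$ and $(1,0)$. The key simplification is that the running index $k$ lives in $\{l,\dots,M_*-1\}$, which is empty when $M_*=1$, and that $l=1$; consequently the two middle cases of Theorem~\ref{the:UpperBound} are vacuous, and the boundaries $1+M^*-M_*$, $2l-1+M^*-M_*$ and $M^*+M_*-1$ all collapse to $M^*$. What survives is $\Delta_{up}=\nu$ on $[0,\nu)$, $\Delta_{up}=bM_*=b$ on $[\nu,M^*)$, and $\Delta_{up}=\nu+d^*(\nu/b)$ on $[M^*,\infty)$. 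I would then evaluate $d^*(\nu/b)=M^*(1-\nu/b)$ whenever $\nu/b<1$ and rewrite $\nu+d^*(\nu/b)=M^*+\nu(1-M^*/b)$, recovering exactly the two branches of $\Delta^*(b,\nu)$. The regime $\nu\geq M^*$ (equivalently $\nu/M_*\geq M^*+M_*-1$) is handled separately by the final formula of Theorem~\ref{the:UpperBound}, $\Delta_{up}=\nu+d^*(\nu/b)$: here the window $[\nu,M^*)$ is empty, $d^*(\nu/b)=0$ for $b\leq\nu$ gives $\Delta_{up}=\nu$, and the same rewriting covers $b>\nu$.

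For achievability I would combine two earlier results. On $0\leq b\leq\nu$, where $\max\{\nu,b\}=\nu$, Lemma~\ref{rm:RemarkOptDist0} already gives $\Delta=\nu$ by MMSE reconstruction from the side information alone. On $b\geq\nu$ I would invoke Corollary~\ref{cor:MlInfLay} with $M_*=1$, for which the only admissible index is $k=0$: the first sub-region $[(\Phi_1+\nu)/1,\ \Phi_0/1)=[\nu,M^*)$ yields $\hat{\Delta}_{bs}^{\infty}=\max\{\nu,b\}$, while the second sub-region $[\Phi_0,\infty)=[M^*,\infty)$ yields $\hat{\Delta}_{bs}^{\infty}=\Phi_0+\nu\,\frac{b-\Phi_0}{b-\Phi_1}=M^*+\nu(1-M^*/b)$. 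These coincide piece-by-piece with the converse, and continuity at $b=\nu$ and $b=M^*$ (for instance $M^*+\nu(1-M^*/\nu)=\nu$) shows the pieces join, so BS-LD in the limit of infinite layers attains $\Delta^*(b,\nu)$ for all $b$.

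I expect the only real obstacle to be bookkeeping rather than mathematics: one must track carefully how the piecewise upper bound degenerates once the index set $\{l,\dots,M_*-1\}$ is empty, verify that the three collapsing boundaries really do all equal $M^*$, and separate the two orderings $\nu<M^*$ and $\nu\geq M^*$ so that the (possibly empty) intermediate window $[\nu,M^*)$ and the special-case branch of Theorem~\ref{the:UpperBound} are invoked consistently. Once the specialized forms of $\Delta_{up}$ and $\hat{\Delta}_{bs}^{\infty}$ are in hand, their equality is immediate and no new estimates are required.
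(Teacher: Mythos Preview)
Your proposal is correct and follows essentially the same approach as the paper: the paper establishes Theorem~\ref{th:OptDistMISO} by noting (in the paragraph immediately preceding the statement) that specializing Corollary~\ref{cor:MlInfLay} to $M_*=1$ yields $\hat{\Delta}_{bs}^{\infty}(b,\nu)=\Delta_{up}(b,\nu)$, which is exactly the converse/achievability matching you carry out. Your treatment is in fact more explicit than the paper's, carefully tracking how the index set $\{l,\dots,M_*-1\}$ empties and how the boundaries $1+M^*-M_*$, $2l-1+M^*-M_*$, and $M^*+M_*-1$ all collapse to $M^*$; the only minor remark is that for the region $b\leq\nu$ you need not appeal separately to Lemma~\ref{rm:RemarkOptDist0}, since Theorem~\ref{the:MLfintieLayers} already gives $\hat{\Delta}_{bs}^{L}(b,\nu)=\nu$ for $bM_*\leq\nu$, so BS-LD itself covers that range as the theorem claims.
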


We note that in SISO setups, HDA-LD achieves the optimal distortion exponent for $b\geq1$, in addition to BS-LD.
\begin{lemma}
The optimal distortion exponent for SISO channels is achieved by BS-LD, HDA-LD and HDA-S.
\end{lemma}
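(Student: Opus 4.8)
The plan is to verify that the optimal exponent $\Delta^*(b,\nu)$, already pinned down by Theorem~\ref{th:OptDistMISO} for the whole MISO/SIMO family, is met by each of the three named schemes once we specialize to $M^*=M_*=1$. Since SISO is precisely the $M^*=M_*=1$ instance of MISO/SIMO, the BS-LD part requires nothing new: Theorem~\ref{th:OptDistMISO} already asserts that BS-LD in the limit of infinitely many layers attains $\Delta^*(b,\nu)$. Setting $M^*=1$ there, the target that HDA-S and HDA-LD must match is $\Delta^*(b,\nu)=\max\{\nu,b\}$ for $b\le\max\{1,\nu\}$ and $\Delta^*(b,\nu)=1+\nu(1-1/b)$ for $b>\max\{1,\nu\}$.

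For HDA-S I would treat the range $b\le 1$. Here $\max\{1,\nu\}\ge 1\ge b$, so on this entire interval the target reduces to $\Delta^*(b,\nu)=\max\{\nu,b\}$. The HDA-S achievability lemma gives $\Delta_h(b,\nu)=bM_*=b$ for $b\le 1/M_*=1$, while the MMSE reconstruction from the side information underlying Lemma~\ref{rm:RemarkOptDist0} achieves the exponent $\nu$ for every $b$. Taking the better of the two yields $\max\{\nu,b\}$, which coincides with $\Delta^*(b,\nu)$ for all $b\le 1$, settling HDA-S.

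For HDA-LD I would treat $b>1$, that is $bM_*>1$, and specialize Theorem~\ref{th:DistortionExponentHDAExponent}. With $M^*=M_*=1$ the only admissible layer index is $k=0$, and $\Phi_0=\Upsilon_0=1$, $\Phi_1=0$. Substituting $M_*=1$ into the interval $[\tfrac{\Phi_{k+1}-1+\nu}{k+1}+\tfrac1{M_*},\tfrac{\Phi_k-1+\nu}{k}+\tfrac1{M_*})$ gives $[\nu,\infty)$ for $k=0$, while the first branch of the theorem gives $\Delta_h=\nu$ on $1\le b<\nu$. Plugging $k=0$, $\Phi_0=\Upsilon_0=1$, $M_*=1$ into the $k$-th formula collapses it to $\Delta_h(b,\nu)=1+\tfrac{(b-1)\nu}{b}=1+\nu(1-\tfrac1b)$. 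Combining both branches, HDA-LD attains $\max\{\nu,\,1+\nu(1-1/b)\}$ for $b>1$; since $1+\nu(1-1/b)\ge\nu\Leftrightarrow b\ge\nu$, this maximum equals $\nu$ when $1<b\le\nu$ and equals $1+\nu(1-1/b)$ when $b\ge\nu$, in both cases matching $\Delta^*(b,\nu)$ on $b>1$.

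The only non-mechanical points I anticipate are bookkeeping rather than genuine obstacles: I must confirm that the regimes partition consistently (HDA-S on $b\le 1$, HDA-LD on $b>1$, agreeing with $\Delta^*$ and with each other at $b=1$, where both give $\max\{\nu,1\}$), and that the $k=0$ specialization of Theorem~\ref{th:DistortionExponentHDAExponent} is read correctly, in particular that the degenerate upper endpoint $\tfrac{\Phi_0-1+\nu}{0}+1$ is taken as $+\infty$ so that the single interval $[\nu,\infty)$ covers all $b\ge\max\{1,\nu\}$. Once these are checked, each scheme meets $\Delta^*(b,\nu)$ on its designated range, which together establish the optimality of BS-LD, HDA-LD, and HDA-S for SISO channels.
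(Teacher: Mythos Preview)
Your proposal is correct and follows essentially the same approach as the paper: the lemma is stated without a separate proof, and the argument is implicit in the surrounding discussion, namely Theorem~\ref{th:OptDistMISO} for BS-LD, the MISO/SIMO specialization of $\Delta_h(b,\nu)$ just above the lemma for HDA-LD, and Lemma~\ref{rm:RemarkOptDist} together with Lemma~\ref{rm:RemarkOptDist0} for HDA-S. Your direct specialization of Theorem~\ref{th:DistortionExponentHDAExponent} to $M^*=M_*=1$, $k=0$ reproduces exactly the paper's intermediate MISO/SIMO expression $\max\{\nu,\frac{M^*+(b-1)(M^*+\nu)}{M^*+b-1}\}$ evaluated at $M^*=1$, and your regime bookkeeping (including taking the degenerate upper endpoint as $+\infty$) is the right reading.
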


In Figure \ref{fig:MISODistortionExponent} we plot the distortion exponent for a MISO/SIMO channel with $M^*=4$ and $\nu=0.5$, with respect to the bandwidth ratio $b$. We observe that, as given in Theorem \ref{th:OptDistMISO}, BS-LD achieves the optimal distortion exponent. 
We observe that HDA-LD outperforms LD in all regimes, and, although it outperforms the multi-layer LS-LD for low $b$ values, LS-LD achieves higher distortion exponents than HDA-LD for $b\geq3$.
In general we observe that single-layer schemes perform poorly as the bandwidth ratio increases, as they are not capable of fully exploiting the available degrees-of-freedom in the system.

\begin{figure}
\centering
\includegraphics[width=0.65\textwidth]{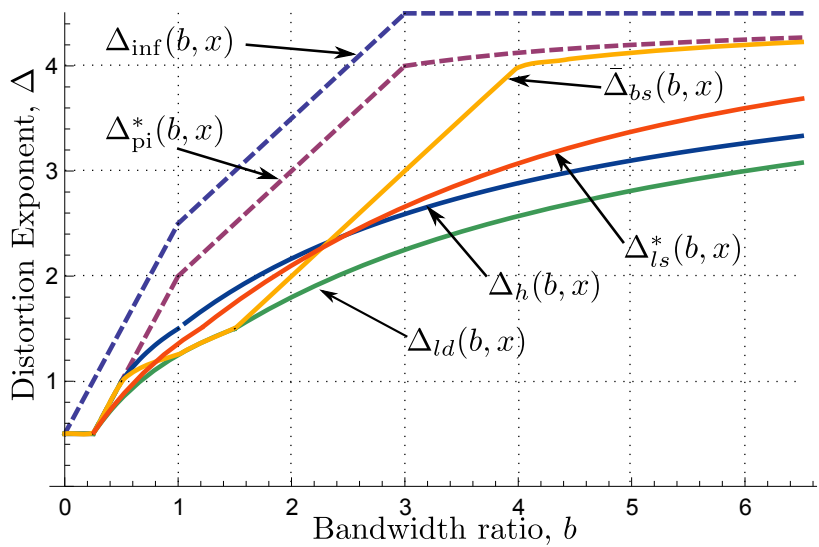}
\vspace{-3 mm}\caption{Distortion exponent $\Delta(b,\nu)$ with respect to
the bandwidth ratio $b$ for a $2\times 2$ MIMO system and a side
information quality given by $\nu=0.5$.} \vspace{-3 mm}\label{fig:MIMODistortionExponent}
\end{figure}

%


\subsection{General MIMO}

Here, we consider the general MIMO channel with $M_*>1$. Figure \ref{fig:MIMODistortionExponent} shows the upper and lower bounds on the distortion exponent derived in the previous sections for a $2\times 2$ MIMO channel with $\nu=0.5$. First, it is observed that the optimal distortion exponent is achieved by HDA-S and BS-LD with infinite layers for $b\leq 0.5$, as expected from Section \ref{sec:OptimalLowB}, while the other schemes are suboptimal in general.

\begin{figure}
\centering
\includegraphics[width=0.65\textwidth]{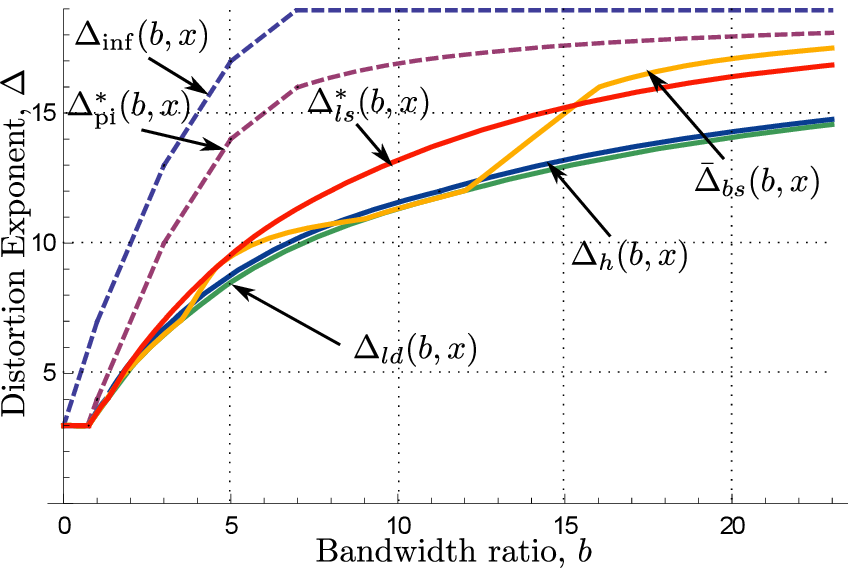}
\vspace{-3 mm}\caption{Distortion exponent $\Delta(b,\nu)$ with respect to
the bandwidth ratio $b$ for a $4\times 4$ MIMO system and a side
information quality given by $\nu=3$.} \vspace{-3 mm}\label{fig:MIMODistortionExponent2}
\end{figure}

For $0.5<b\lesssim 2.4$, HDA-LD is the scheme achieving the highest distortion exponent, and outperforms BS-LD, and in particular, when the performance of BS-LD reduces to that of LD, since HDA-LD outperforms LD in general. For larger $b$ values, the highest distortion exponent is achieved by BS-LD. Note that for $b\geq4$, $\Delta_{bs}^*(b,0.5)$ is very close to the upper bound. We also observe that for $b\gtrsim 2.4$ LS-LD outperforms HDA-LD, but it is worse than BS-LD. This is not always the case, as will be seen next.

\begin{figure}
\centering
\includegraphics[width=0.65\textwidth]{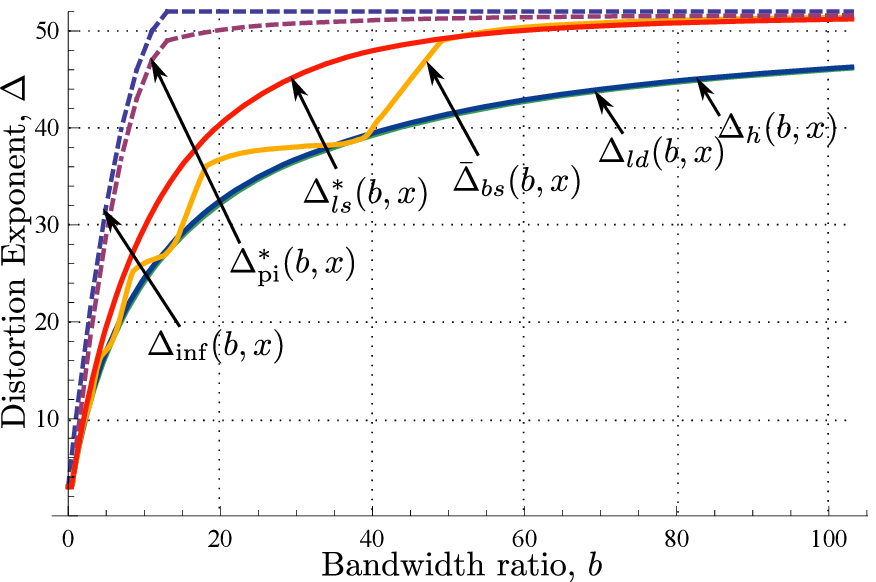}
\vspace{-3 mm}\caption{Distortion exponent $\Delta(b,\nu)$ with respect to
the bandwidth ratio $b$ for a $4\times 4$ MIMO system and a side
information quality given by $\nu=3$.} \vspace{-3 mm}\label{fig:MIMODistortionExponentManyAnt}
\end{figure}
In Figure \ref{fig:MIMODistortionExponent2}, we plot the upper and lower bounds for a $4\times 4$ MIMO channel with $\nu=3$. We note that, for $b\leq \max\{1,\nu\}/M_*$, $\Delta^*(b,3)=3$, which is achievable by using only the side information sequence at the decoder. For this setup, LS-LD achieves the best distortion exponent for intermediate $b$ values, outperforming both HDA-LD and BS-LD. Again, in the large bandwidth ratio regime, BS-LD achieves the best distortion exponent values, and performs close to the upper bound. We note that for high side information quality, the difference in performance between LD and HDA-LD decreases. 
Comparing Figure \ref{fig:MIMODistortionExponent} and Figure \ref{fig:MIMODistortionExponent2}, we observe that, when the side information quality is high, digital schemes better exploit the degrees-of-freedom of the system than analog schemes.

In Figure \ref{fig:MIMODistortionExponentManyAnt} we plot the upper and lower bounds for a $7\times 7$ MIMO channel with $\nu=3$.  In comparison with Figure \ref{fig:MIMODistortionExponent2}, as the number of antennas increases the difference in performance between LD and HDA-LD decreases. This seems to be the case also between BS-LD and LS-LD in the high bandwidth ratio regime. However, LS-LD significantly outperforms LS-LD for intermediate $b$ values. 

\section{Conclusions}\label{sec:Conclusions}
We have studied the high SNR distortion exponent when transmitting a Gaussian source over a time-varying MIMO fading channel in the presence of time-varying correlated side information at the receiver. We consider a block-fading model for both the channel and the source side information states, and assume that perfect state information is available at the receiver, while the transmitter has only a statistical knowledge. We have derived an upper bound on the distortion exponent, as well as lower bounds based on separate source and channel coding, list decoding and hybrid digital-analog transmission. We have also proposed multi-layer transmission schemes based on progressive transmission with joint decoding as well as  superposition with list decoding. We have considered the effects of the bandwidth ratio and the side information quality on the distortion exponent, and shown that the multi-layer superposition transmission meets the upper bound in MISO/SIMO/SISO channels, solving the joint source channel coding problem in the high SNR regime. For general MIMO channels, we have characterized the optimal distortion exponent in the low bandwidth ratio regime, and shown that the multi-layer scheme with superposition performs very close to the upper bound in the high bandwidth ratio regime.

\begin{appendices}


\section{Proof of Theorem  \ref{the:UpperBound}} \label{app:UpperBound}
The exponential integral can be bounded as follows \cite[p.229, 5.1.20]{abramowitz2012handbook}:
\begin{IEEEeqnarray}{rCl}
\frac{1}{2}\ln\left(1+\frac{2}{t}\right)<e^{t}E_{1}(t)<\ln\left(1+\frac{1}{t}\right),
\quad t>0.
\end{IEEEeqnarray}

Next, using the lower bound $\ln (1+t)\geq \frac{t}{1+t}$, for $t> -1$, we have
\begin{IEEEeqnarray}{rCl}
\frac{1}{2}\ln\left(1+\frac{2}{t}\right)>\frac{1}{2}\frac{2/t}{1+2/t}=\frac{1}{t+2}.
\end{IEEEeqnarray}

Then, $ED^*_{\text{pi}}$ in (\ref{eq:LowerBound}) is lower bounded by
\begin{IEEEeqnarray}{rCl}\label{eq:Bound}
ED^*_{\text{pi}}(\rho,\rho_s,b)\geq\int_{\mathbf{H}}\frac{1}{2^{b\mathcal{C}(\mathbf{H})}+2\rho_s}p_{h}(\mathbf{H})d\mathbf{H}.
\end{IEEEeqnarray}

Following \cite{zheng2003diversity}, the capacity of the MIMO channel is upper bounded as
\begin{IEEEeqnarray}{rCl}\label{eq:CapBound}
\mathcal{C}(\mathbf{H})&=&\sup_{\mathbf{C}_u:\text{Tr}\{\mathbf{C}_u\}\leq M_t}\log\det\left(\mathbf{I}+\frac{\rho}{M_t}\mathbf{HC}_u\mathbf{H}^H\right) \\
&\leq& \log\det \left(\mathbf{I}+\rho \mathbf{HH}^{H}\right),
\end{IEEEeqnarray}
where the inequality follows from the fact that $M_t\mathbf{I}-\mathbf{C}_u\succeq 0$ subject to the power constraint $\text{Tr}\{\mathbf{C}_u\}\leq M_t$, and the function $\log \det (\cdot)$ is nondecreasing on the cone of positive semidefinite Hermitian matrices.

Let $\lambda_{M_{*}}\geq \cdots \geq\lambda_{1}>0$ be the eigenvalues of matrix $\mathbf{HH}^{H}$, and consider the change of variables $\lambda_{i}=\rho^{-\alpha_{i}}$, with $\alpha_1\geq...\geq
\alpha_{M_*}\geq 0$. The joint probability density function (pdf) of
$\boldsymbol\alpha\triangleq[\alpha_{1},...,\alpha_{M_{*}}]$ is given by \cite{zheng2003diversity}:
\begin{IEEEeqnarray}{rCl}\label{eq:alphapdf}
p_A(\boldsymbol\alpha)&=&K^{-1}_{M_t,M_r}(\log\rho)^{M_{*}}\prod_{i=1}^{M_{*}}\rho^{-(M^{*}-M_{*}+1)\alpha_{i}}
\cdot
\left[\prod_{i<j}(\rho^{\alpha_{i}}-\rho^{\alpha_{j}})^2\right]\exp\left(-\sum_{i=1}^{M_{*}}\rho^{\alpha_{i}}\right),
\end{IEEEeqnarray}
where $K^{-1}_{M_t,M_r}$ is a normalizing constant.

We define the high SNR exponent of $p_{A}(\boldsymbol\alpha)$ as $S_{A}(\boldsymbol\alpha)$, that is, we have $p_{A}(\boldsymbol\alpha)\doteq\rho^{-S_{A}(\boldsymbol\alpha)}$, where
\begin{IEEEeqnarray}{rCl}
S_A(\boldsymbol\alpha)\!\triangleq\!\begin{cases}
\sum_{i=1}^{M_{*}}(2i-1+M^{*}-M_{*})\alpha_{i}&\text{if }\alpha_{M_*}\!\geq\!0, \\
\infty&\text{otherwise}.
\end{cases}
\end{IEEEeqnarray}

Then, from (\ref{eq:Bound}) and (\ref{eq:CapBound}) we have
\begin{IEEEeqnarray}{rCl}
ED^*_{\text{pi}}(\rho,\rho_s,b)&\geq&\int_{\mathbf{H}}\frac{1}{\prod_{i=1}^{M_*}\left(1+\rho \lambda_i\right)^b+2\rho_s}p_{h}(\mathbf{H})d\mathbf{H}\\
&=&\int_{\mathbf{\boldsymbol\alpha}}\frac{1}{\prod_{i=1}^{M_*}\left(1+\rho^{1-\alpha_i}\right)^{b}+2\rho_{s}}p_A(\boldsymbol\alpha)d\boldsymbol\alpha\\
&\geq&\int_{\mathbf{\boldsymbol\alpha}^+}G_{\rho}(\boldsymbol\alpha)p_A(\boldsymbol\alpha)d\boldsymbol\alpha\label{eq:EqBound1},
\end{IEEEeqnarray}
where we define
\begin{IEEEeqnarray}{rCl}
G_{\rho}(\boldsymbol\alpha)\triangleq \left(\prod_{i=1}^{M_*}\left(1+\rho^{1-\alpha_i}\right)^{b}+2\rho_{s}\right)^{-1},
\end{IEEEeqnarray}
and the set
$\boldsymbol\alpha^+\triangleq\{\boldsymbol\alpha\in \mathds{R}^{M_*}:1 \geq \alpha_1\geq\cdots\geq
\alpha_{M_*}\geq 0\}$ in (\ref{eq:EqBound1}).

Then, the distortion exponent of the partially informed encoder is upper bounded by
\begin{IEEEeqnarray}{rCl}
\Delta_{pi}^*(b,\nu) &\triangleq&-\lim_{\rho\rightarrow\infty}\frac{\log ED^*_{\text{pi}}(\rho,\rho_s,b)}{\log\rho}\\
&\leq& \lim_{\rho\rightarrow\infty}\frac{1}{\log\rho}\log\int_{\boldsymbol\alpha^+}\text{exp}\left(\frac{\log G_{\rho}(\boldsymbol\alpha)}{\log\rho}\log\rho \right)p_A(\boldsymbol\alpha)d\boldsymbol\alpha\\
&=& \lim_{\rho\rightarrow\infty}\frac{1}{\log\rho}\log\int_{\mathbf{\boldsymbol\alpha}^+}\text{exp}\left(G(\boldsymbol\alpha)\log \rho \right)p_A(\boldsymbol\alpha)d\boldsymbol\alpha,\label{eq:EqCDT}
\end{IEEEeqnarray}
where (\ref{eq:EqCDT}) follows from the application of the Dominated Convergence Theorem \cite{Durrett:book}, which holds since $G_{\rho}(\boldsymbol\alpha)\leq 1$ for all $\boldsymbol\alpha$, the continuity of the logarithmic and exponential functions, and since we have the following limit
\begin{IEEEeqnarray}{rCl}
G(\boldsymbol\alpha)&\triangleq&\lim_{\rho\rightarrow\infty}\frac{\log G_{\rho}(\boldsymbol\alpha)}{\log \rho}
=\lim_{\rho\rightarrow\infty}\frac{\log(\rho^{b\sum_{i=1}^{M_*}(1-\alpha_i)^+}+2\rho^{\nu})^{-1}}{\log \rho}\\
&=&\begin{cases}
-\nu&\text{if }\nu>b\sum_{i=1}^{M_*}(1-\alpha_i)^+,\\
-b\sum_{i=1}^{M_*}(1-\alpha_i)^+&\text{if }\nu\leq b\sum_{i=1}^{M_*}(1-\alpha_i)^+,
\end{cases}
\end{IEEEeqnarray}
where we have used the exponential equalities $1+\rho^{1-\alpha_i}\doteq \rho^{(1-\alpha_i)^+}$, and $\rho_s\doteq\rho^\nu$.


From Varadhan's lemma \cite{Dembo:book}, it follows that the distortion exponent of $ED^*_{\text{pi}}$ is upper bounded by the solution to the following optimization problem,
\begin{IEEEeqnarray}{rCl}\label{eq:ExponentBoundProblem}
\Delta_{up}(b,\nu)\triangleq \inf_{\boldsymbol\alpha^+}[-G(\boldsymbol\alpha)+S_A(\boldsymbol\alpha)].
\end{IEEEeqnarray}
In order to solve (\ref{eq:ExponentBoundProblem}) we divide the optimization into two subproblems: the case when $\nu< b\sum_{i=1}^{M_*}(1-\alpha_i)$, and the case when $\nu\geq b\sum_{i=1}^{M_*}(1-\alpha_i)$. The solution is then given by the minimum of the solutions of these subproblems.

If $\nu\geq b\sum_{i=1}^{M_*}(1-\alpha_i)$, the problem in (\ref{eq:ExponentBoundProblem}) reduces to
\begin{IEEEeqnarray}{rCl}\label{eq:DeltaBoundProblem2}
\Delta^1_{up}(b,\nu)= \nu +&&\inf_{\boldsymbol\alpha^+}\sum_{i=1}^{M_{*}}(2i-1+M^{*}-M_{*})\alpha_{i}
  \\
&&\text{s.t. }\sum_{i=1}^{M_*}(1-\alpha_i)\leq\frac{ \nu}{b}.
\end{IEEEeqnarray}
The optimization in (\ref{eq:DeltaBoundProblem2}) can be identified with the DMT problem in \cite[Eq. (14) ]{zheng2003diversity} for a multiplexing gain of $r=\frac{ \nu}{b}$. Next, we give an explicit solution for completeness.

First, if $bM_*\leq  \nu$, the infimum is given by $\Delta^1_{up}(b,\nu)= \nu$
for $\boldsymbol\alpha^*=0$. Then, for $k\leq \frac{ \nu}{b}\leq k+1$, for $k=0,...,M_*-1$, i.e.,
$\frac{ \nu}{k+1}\leq b\leq \frac{ \nu}{k}$, the infimum is achieved by
\begin{IEEEeqnarray}{rCl}
\alpha^*_i=\begin{cases} 1 &\text{for }i=1,...,M_*-k-1,\\
 k+1-\frac{ \nu}{b}&i=M_*-k,\\
0&\text{for }i=M_*-k+1,...,M_*.
\end{cases}
\end{IEEEeqnarray}
Substituting, we have, for $k=0,...,M_*-1$,
\begin{IEEEeqnarray}{rCl}
\Delta^1_{up}(b,\nu)&=& \nu +\Phi_k-\Upsilon_k\left(\frac{ \nu}{b}-k\right)= \nu +d^*\left(\frac{ \nu}{b}\right),
\end{IEEEeqnarray}
where $\Phi_k$ and $\Upsilon_k$ are defined as in (\ref{eq:ML_PhiUpsilon}).

Now we solve the second subproblem with $\nu < b\sum_{i=1}^{M_*}(1-\alpha_i)$.  Since $1 \geq \alpha_1\geq...\geq
\alpha_{M_*}\geq 0$ we can  rewrite (\ref{eq:ExponentBoundProblem}) as
\begin{IEEEeqnarray}{rCl}\label{eq:BoundDist1}
\Delta^2_{up}(b,\nu)&=&\inf_{\boldsymbol\alpha^+} bM_* -\sum_{i=1}^{M_*}\alpha_i\phi(i)  \\
&&\text{s.t. }\sum_{i=1}^{M_*}\alpha_i<M_*-\frac{ \nu}{b},
\end{IEEEeqnarray}
where we have defined $\phi(i)\triangleq [b-(2i-1+M^*-M_*)]$. Note that $\phi(1)>\cdots>\phi(M_*)$.

First, we note that for $bM_*< \nu$ there is no feasible solution due to the constraint in (\ref{eq:BoundDist1}).

Now, we consider the case $\nu \leq M_*(1+M^*-M_*)$.
 If $\frac{ \nu}{M_*}\leq b< 1+M^*-M_*$, all the terms $\phi(i)$ multiplying $\alpha_i$'s are negative, and, thus, the infimum is achieved by $\boldsymbol\alpha^*=0$, and is given by $\Delta^2_{up}(b,\nu)=b M_* $.
If $1+M^*-M_*\leq b <3+M^*-M_*$, then $\phi(1)$ multiplying $\alpha_1$ is positive, while the other $\phi(i)$ terms are negative. Then $\alpha_i^*=0$ for $i=2,...,M_*$. From (\ref{eq:BoundDist1}) we have $\alpha_1\leq M_*-\frac{ \nu}{b}$. If $b\geq \frac{ \nu}{M_*-1}$, the right hand side (r.h.s.) of (\ref{eq:BoundDist1}) is greater than one, and smaller otherwise. Then, we have
\begin{IEEEeqnarray}{rCl}
\alpha_1^*=\begin{cases}
1&\text{if }b\geq \frac{ \nu}{M_*-1},\\
M_*-\frac{ \nu}{b}&\text{if }b< \frac{ \nu}{M_*-1}.
\end{cases}
\end{IEEEeqnarray}
Note that $\alpha^*_1\geq0$ since $b>\frac{ \nu}{M_*}$.

When $2k-1+M^*-M_*\leq b <2k+1+M^*-M_*$ for $k=2,..., M_*-1$, the coefficients $\phi(i)$, $i=1,...,k$, associated with the first $k$ $\alpha_i$ terms are positive, while the others remain negative. Then, \begin{IEEEeqnarray}{rCl}\label{eq:alphaopt}
\alpha_i^*=0, \quad \text{for } i=k+1,...,M_*.
\end{IEEEeqnarray}

Since $\phi(i)$, $i=1,...,k$, are positive and $\phi(1)>\cdots>\phi(k)$, we have
$\alpha^*_i=1$ for $i=1,...,k-1$, and the constraint becomes $\alpha_k<M_*-(k-1)-\frac{ \nu}{b}$. If $b\geq \frac{ \nu}{M_*-k}$, then the r.h.s. is greater than one, and smaller otherwise. In order for the solution to be feasible, we need  $\alpha_k\geq0$, that is, $M_*-(k-1)-\frac{ \nu}{b}\geq0$. Then we have
\begin{IEEEeqnarray}{rCl}\label{eq:SolOne}
\alpha^*_k=\begin{cases}1&\text{if }b\geq \frac{ \nu}{M_*-k},\\
M_*-(k-1)-\frac{ \nu}{b}&\text{if }\frac{ \nu}{M_*-(k-1)}\leq b<\frac{ \nu}{M_*-k}.
\end{cases}
\end{IEEEeqnarray}

If $b<\frac{ \nu}{M_*-(k-1)}$, the solution in (\ref{eq:SolOne}) is not feasible. Instead, we have $\alpha^*_k=0$, since $\phi(k)<\phi(k-1)$, $\alpha^*_i=0$ for $i=k+1,...,M_*$, and $\alpha^*_i=1$, for $i=1,...,k-2$. Then, the constraint in (\ref{eq:BoundDist1}) is given by $\alpha_{k-1}\leq M_*-(k-2)-\frac{ \nu}{b}$. Since $b<\frac{ \nu}{M_*-(k-1)}$, the r.h.s. is always smaller than one. For the existence of a feasible solution, the r.h.s. is required to be greater than zero. Therefore, we have
\begin{IEEEeqnarray}{rCl}
\alpha^*_{k-1}&=&M_*-(k-2)-\frac{ \nu}{b}, \qquad
\text{if }\frac{ \nu}{M_*-(k-2)}\leq b<\frac{ \nu}{M_*-(k-1)}.
\end{IEEEeqnarray}
In general, iterating this procedure, for
\begin{IEEEeqnarray}{rCl}
\frac{ \nu}{M_*-(j-1)}\leq b< \frac{ \nu}{M_*-j},\quad j=1,...,k,
\end{IEEEeqnarray}
we have
\begin{IEEEeqnarray}{rCl}\label{eq:DMT2Sol}
\alpha^*_i=\begin{cases}
1& \text{for }i=1,...,j-1,\\
M_*-(j-1)-\frac{ \nu}{b}&\text{for }i=j,\\
0&\text{for }i=j+1,...,M_*.
\end{cases}
\end{IEEEeqnarray}

Note that for the case $j=1$, we have $\alpha_1=M_*-\frac{ \nu}{b}$, which is always feasible.

 We now evaluate (\ref{eq:BoundDist1}) with the optimal $\boldsymbol\alpha^*$  if $2k-1+M^*-M_*\leq b <2k+1+M^*-M_*$ for some  $k\in\{2,..., M_*-1\}$. For $b\geq \frac{ \nu}{M_*-k}$, we have $\alpha_1=\cdots=\alpha_k=1$ and $\alpha_{k+1}=\cdots=\alpha_{M_*}=0$, and then
\begin{IEEEeqnarray}{rCl}
\Delta^2_{up}(b,\nu)&=&\sum_{i=1}^{M_*}\min\{b,2i-1+M^*-M_*\}=\Delta_{\text{MIMO}}(b).
\end{IEEEeqnarray}

For $\frac{ \nu}{M_*}\leq b\leq \frac{ \nu}{M_*-k}$, substituting (\ref{eq:DMT2Sol}) into  (\ref{eq:BoundDist1}) we have
\begin{IEEEeqnarray}{rCl}
\Delta^2_{up}(b,\nu)&=& \nu +(M^*-M_*-1+j)(j-1)\\
&&+\left(M_*-(j-1)-\frac{ \nu}{b}\right)(2j-1+M^*-M_*) ,
\end{IEEEeqnarray}
where
\begin{IEEEeqnarray}{rCl}
\frac{ \nu}{M_*-(j-1)}\leq b\leq \frac{ \nu}{M_*-j},\quad \text{for some }j\in\{1,...,k\}.
\end{IEEEeqnarray}

Note that with the change of index $j=M_*-j'$, we have, after some manipulation,
\begin{IEEEeqnarray}{rCl}
\Delta^2_{up}(b,\nu)&=& \nu +(M^*-j')(M_*-j')
-\left(\frac{ \nu}{b}-j'\right)(M^*+M_*-2j'-1),
\end{IEEEeqnarray}
in the regime
\begin{IEEEeqnarray}{rCl}
\frac{ \nu}{j'+1}\leq b< \frac{ \nu}{j'},\quad j'=M_*-k,...,M_*-1.
\end{IEEEeqnarray}

This is equivalent to the value of the DMT curve in (\ref{eq:DMTcurve}) at multiplexing gain $r=\frac{ \nu}{b}$. Then, for $\frac{ \nu}{M_*}\leq b< \frac{ \nu}{M_*-k}$ we have
\begin{IEEEeqnarray}{rCl}
\Delta^2_{up}(b,\nu)&=& \nu +d^{*}\left(\frac{ \nu}{b}\right).
\end{IEEEeqnarray}

If $b\geq M^*+M_*-1$, the infimum is achieved by $\alpha_i^*=1$, for $i=1,...,M_*-1$, and $\alpha^*_{M_*}=1-\frac{ \nu}{b}$ if $b\geq  \nu$. If $b <  \nu$, this solution is not feasible, and the solution is given by (\ref{eq:DMT2Sol}). Therefore, in this regime we also have
\begin{IEEEeqnarray}{rCl}
\Delta^2_{up}(b,\nu)&=& \nu +d^{*}\left(\frac{ \nu}{b}\right).
\end{IEEEeqnarray}

Putting all these results together,  for $\nu \leq M_*(M^*-M_*+1)$ we have
\begin{IEEEeqnarray}{lCl}
\Delta^2_{up}(b,\nu)=\begin{cases}
bM_*&\text{for }\frac{ \nu}{M_*}\leq b< M^*-M_*+1,\\
 \nu +d^*\left(\frac{ \nu}{b}\right)&\text{for
}M^*-M_*+1\leq b<\frac{ \nu}{M_*-k},\\
\Delta_{\text{MIMO}}(b)&\text{for
}\frac{ \nu}{M_*-k}\leq b<M^*+M_*-1,\\
 \nu +d^{*}\left(\frac{ \nu}{b}\right)&\text{for
}b\geq M^*+M_*-1,
\end{cases}
\end{IEEEeqnarray}
where $k\in\{1,...,M_*-1\}$  is the integer satisfying $2k-1+M^*-M_*\leq b <2k+1+M^*-M_*$.

 Now, we solve (\ref{eq:BoundDist1}) for  $M_*(M^*-M_*+1)\leq x< M_*(M^*+M_*-1)$. Let $l\in\{2,...,M_*\}$ be the integer satisfying $M_*(2(l-1)-1+M^*-M_*)\leq x< M_*(2l-1+M^*-M_*)$. The first interval of
$b$ in which a feasible solution exists is given by $\frac{ \nu}{M_*}\leq b <
2l-1+M^*-M_*$. From the sign of the coefficients $\phi(i)$ in this interval we have $\alpha^*_{i}=0$ for
$i=(l+1),...,M_*$, and $\alpha^*_i=1$ for $i=1,..., l-1$. Substituting, the
constraint becomes $\alpha_l<M_*-(l-1)-\frac{ \nu}{b}$.
If $b>\frac{ \nu}{M_*-l}$ the r.h.s. is larger than one, and
$\alpha^*_l=1$. On the contrary, if $b\leq\frac{ \nu}{M_*-l}$, it is given
by $\alpha^*_{l}=M_*-(l-1)-\frac{ \nu}{b}$ if $b>\frac{ \nu}{M_*-(l-1)}$, so that the r.h.s. of the constraint is larger than zero. Iterating this procedure, the solution for all $b$ values is found following the techniques that lead to (\ref{eq:DMT2Sol}).
In general, for  $2k-1+M^*-M_*\leq b <2k+1+M^*-M_*$, $k=l,..., M_*-1$ and
\begin{IEEEeqnarray}{rCl}
\frac{ \nu}{M_*-(j-1)}\leq b< \frac{ \nu}{M_*-j},\quad j=1,...,k,
\end{IEEEeqnarray}
we have
\begin{IEEEeqnarray}{rCl}
\alpha^*_i=\begin{cases}
1& \text{for }i=1,...,j-1,\\
M_*-(j-1)-\frac{ \nu}{b}&\text{for }i=j,\\
0&\text{for }i=j+1,...,M_*.
\end{cases}
\end{IEEEeqnarray}

The distortion exponent is now obtained similarly to the case $\nu \leq M_*(M^*-M_*+1)$ in each interval $2k-1+M^*-M_*\leq b< 2k+1+M^*-M_*$  with  $k=l,...,M_*-1$ instead of $k=1,...,M_*-1$, and thus, we omit the details. Putting all together, if $\nu $ satisfies $M_*(2(l-1)-1+M^*-M_*)\leq x< M_*(2l-1+M^*-M_*)$, for some  $l\in\{2,...,M_*\}$, we have
\begin{IEEEeqnarray}{lCl}
\Delta^2_{up}(x,b)=\begin{cases}
 \nu +d^*\left(\frac{ \nu}{b}\right)&\text{for
}\frac{ \nu}{M_*}\leq b<2l-1+M^*-M_*,\\
 \nu +d^*\left(\frac{ \nu}{b}\right)&\text{for }
2l-1+M^*-M_*\leq b<\frac{ \nu}{M_*-k},\\
\Delta_{\text{MIMO}}(b)&\text{for
}\frac{ \nu}{M_*-k}\leq b<M^*+M_*-1,\\
 \nu +d^{*}\left(\frac{ \nu}{b}\right)&\text{for
}b\geq M^*+M_*-1.
\end{cases}
\end{IEEEeqnarray}
Note that in the case $l=M_*$, we have $\Delta^2_{up}(x,b)= \nu +d^{*}\left(\frac{ \nu}{b}\right)$ for any $b$ value.

Finally, the case $\nu \geq M_*(M^*+M_*-1)$ can be solved similarly. Notice that if $\alpha^*_i=1$, $i=1,...,M_*-1$ we have the constraint $\alpha_{M_*}\leq 1-\frac{ \nu}{b}$, that is, we never have the case $\alpha^*_{M_*}=1$. Then, the optimal $\alpha^*_i$ are given as in (\ref{eq:DMT2Sol}), and we have
\begin{IEEEeqnarray}{lCl}
\Delta^2_{up}(x,b)= \nu +d^*\left(\frac{ \nu}{b}\right)\quad \text{for
}\frac{ \nu}{M_*}\leq b.
\end{IEEEeqnarray}

Now, $\Delta_{up}(b,\nu)$ is given by the minimum of
$\Delta^1_{up}(b,\nu)$ and $\Delta^2_{up}(b,\nu)$. First, we note that $\Delta_{2up}(b,\nu)$  has no feasible solution for $bM_*\leq  \nu$, and we have $\Delta_{up}(b,\nu)=\Delta^1_{up}(b,\nu)= \nu$ in this region. For $bM_*>  \nu$, both solutions $\Delta^1_{up}(b,\nu)$ and $\Delta^2_{up}(b,\nu)$ coincide except in the range $\frac{ \nu}{M_*-k}\leq b\leq M^*+M_*-1$. We note that $\Delta^1_{up}(b,\nu)$ in (\ref{eq:DeltaBoundProblem2}) is linear and increasing in
$\boldsymbol\alpha$, and hence, the solution is such that the
constraint is satisfied with equality, i.e.,
$\nu =\sum_{i=1}^{M_*}b(1-\alpha_i)$. That is, $\Delta^2_{up}(b,\nu)\leq\Delta^1_{up}(b,\nu)$ whenever both solutions exist in the same $\boldsymbol\alpha$ region. Then, the minimizing
$\boldsymbol\alpha$ will be one such that either
$\Delta^1_{up}(b,\nu)<\Delta^2_{up}(b,\nu)$, or the one arbitrarily close to the boundary
 $\nu =b\sum_{i=1}^{M_*}(1-\alpha_i)^+$, where
$\Delta^1_{up}(b,\nu)=\Delta^2_{up}(b,\nu)$. Consequently,
$\min\{\Delta^1_{up}(b,\nu),\Delta^2_{up}(b,\nu)\}=\Delta^1_{up}(b,\nu)$, whenever they are defined in the same region. Putting all the results together we complete the proof.

\section{Proof of Theorem \ref{the:ExponentJoint}}\label{app:ExponentJoint}
Applying the  change
of variables $\lambda_i=\rho^{-\alpha_i}$ and $\gamma=\rho^{-\beta}$, and considering a rate $R_{ld}=r_{ld}\log\rho$, $r_{ld}>0$, the outage event in (\ref{eq:OutageSets}) can be written as
\begin{IEEEeqnarray}{rCl}
\mathcal{O}_{ld}&=&\left\{(\mathbf{H},\gamma): 1+\frac{2^{-\epsilon}\rho^{br_{ld}}-1}{
\gamma\rho^ \nu +
1}\geq\prod_{i=1}^{M_*}(1+\rho\lambda_i)^b \right\}\\
&=&\left\{(\boldsymbol\alpha,\beta):1+\frac{2^{-\epsilon}\rho^{br_{ld}}-1}{\rho^{(\nu-\beta)}+1}\geq\prod_{i=1}^{M_*}(1+\rho^{1-\alpha_i})^b\right\} . 
\end{IEEEeqnarray}

For large $\rho$, we have
\begin{IEEEeqnarray}{lCl}
\frac{1+\frac{2^{-\epsilon}\rho^{br_{ld}}-1}{\rho^{(\nu-\beta)}+1}}{\prod_{i=1}^{M_*}(1+\rho^{1-\alpha_i})^b}
&\doteq&\frac{1+\rho^{br_{ld}}\rho^{-(\nu-\beta)^+}}{\rho^{b\sum_{i=1}^{M_*}(1-\alpha_i)^+}}
\doteq\rho^{({br_{ld}}-(\nu-\beta)^+)^+-b\sum_{i=1}^{M_*}(1-\alpha_i)^+}.
\end{IEEEeqnarray}

Therefore, at high SNR, the  achievable expected end-to-end distortion for LD is found as,
\begin{IEEEeqnarray}{rCl}\label{eq:DIntWin}
ED_{ld}(br_{ld}\log{\rho})&=&\int_{\mathcal{O}_{ld}^c}\!\! D_{d}(br_{ld}\log\rho,\rho^{-\beta}) p_{A}(\boldsymbol\alpha)p_{B}(\beta)d \boldsymbol\alpha d \beta\\
&&+ \int_{\mathcal{O}_{ld}} D_{d}(0,\rho^{-\beta}) p_{A}(\boldsymbol\alpha)p_{B}(\beta) d \boldsymbol\alpha d \beta\\
&\doteq&\int_{\mathcal{A}_j^c} \rho^{-\max\{(\nu-\beta)^+,br_{ld}\}} \rho^{-(S(\boldsymbol\alpha)+\beta)}d \boldsymbol\alpha d \beta \\
&&+ \int_{\mathcal{A}_j} \rho^{-(\nu-\beta)^+} \rho^{-(S(\boldsymbol\alpha)+\beta)} d \boldsymbol\alpha d \beta . \\
&\doteq&\rho^{-\Delta^1_{j}(r_{ld})}+\rho^{-\Delta^2_{j}(r_{ld})} \\
&\doteq&\rho^{-\min\{\Delta^1_{j}(r_{ld}),\Delta^2_{j}(r_{ld})\}} \\
&\doteq&\rho^{-\Delta_{ld}(r_{ld})},
\end{IEEEeqnarray}
where $D_d(R,\gamma)$ is as defined in (\ref{eq:DistSSCC}), and we have used  $D_{d}(r\log\rho,\beta)\doteq\rho^{-\max\{(\nu-\beta)^+,2r\}}$. We have also defined the  high SNR equivalent of the outage event as
\begin{IEEEeqnarray}{rCl}
\mathcal{A}_{j}\triangleq\left\{(\boldsymbol\alpha,\beta):({br_{ld}}-(\nu-\beta)^+)^+\geq b\sum_{i=1}^{M_*}(1-\alpha_i)^+\right\}. 
\end{IEEEeqnarray}
We have applied Varadhan's lemma to each integral to obtain
\begin{IEEEeqnarray}{rCl}\label{eq:Delta1j}
\Delta^1_{j}(r_{ld})\!\triangleq\!\inf_{\mathcal{A}^c_{j}}\max\{(\nu-\beta)^+,br_{ld}\}\!+\beta+S_A(\boldsymbol\alpha),
\end{IEEEeqnarray}
and
\begin{IEEEeqnarray}{rCl}\label{eq:Delta2j}
\Delta^2_{j}(r_{ld})&\triangleq&\inf_{\mathcal{A}_{j}} (\nu-\beta)^++\beta+S_A(\boldsymbol\alpha).
\end{IEEEeqnarray}
Then, the distortion exponent of LD is found as
\begin{align}\label{eq:Delta0j}
\Delta_{ld}(r_{ld})=\min\{\Delta^1_{j}(r_{ld}),\Delta^2_{j}(r_{ld})\}.
\end{align}

 We first solve (\ref{eq:Delta1j}). We can constrain the optimization  to $\boldsymbol\alpha\geq0$ and $\beta\geq0$ without loss of optimality, since for $\boldsymbol\alpha,\beta<0$ we have $S_A(\boldsymbol\alpha)=S_{B}(\beta)=+\infty$.  Then, $\Delta^1_{j}(r_{ld})$ is minimized by  $\boldsymbol\alpha^*=0$ since this minimizes $S_A(\boldsymbol\alpha)$ and enlarges $\mathcal{A}^c_j$. We can rewrite (\ref{eq:Delta1j}) as
  \begin{IEEEeqnarray}{rCl}\label{eq:Delta1jPart}
\Delta^1_{j}(r_{ld})&=&\inf_{\beta\geq0} \max\{(\nu-\beta)^+,br_{ld}\}+\beta \\
&&\text{s.t. } (br_{ld}-(\nu-\beta)^+)^+<bM_* .
   \end{IEEEeqnarray}
If $br_{ld} \!<\!(\nu-\beta)^+$, the minimum is achieved by any $0\!\leq\! \beta\!<\! x-r_{ld}b$, and thus $\Delta^1_{j}(r_{ld})= \nu$ for $\nu > br_{ld}$.  If $br_{ld} \geq(\nu-\beta)^+$, then
  \begin{IEEEeqnarray}{rCl}\label{eq:Delta1jPart}
\Delta^1_{j}(r_{ld})&=&\inf_{\beta\geq0} br_{ld}+\beta \\
&&\text{s.t. } br_{ld}-bM_*< (\nu-\beta)^+\leq br_{ld} .
   \end{IEEEeqnarray}
If $\beta> \nu$, the problem is minimized by $\beta^*= \nu +\epsilon$, $\epsilon>0$, and $\Delta_{ld}(r_{ld})=br_{ld}+ \nu +\epsilon$, for $r_{ld}\leq M_*$. For $0\leq \beta\leq  \nu$, we have $\beta^*=(x-r_{ld}b)^+$, and $\Delta^1_{j}(r_{ld})=\max\{br_{ld}, \nu \}$ if $br_{ld}\leq bM_*+ \nu$. Putting all these together, we obtain
\begin{IEEEeqnarray}{rCl}\label{eq:Delta1jSol}
\Delta^1_{j}(r_{ld})=\max\{br_{ld}, \nu \}\quad \text{if } br_{ld}\leq
 \nu +bM_*.\end{IEEEeqnarray}
 If $br_{ld}>  \nu +bM_*$, $\mathcal{A}_j^c$ is empty, and there is always
outage.

 Next we solve the second optimization problem in (\ref{eq:Delta2j}). 
With $\beta= \nu$, $\Delta^2_{j}(r_{ld})$ is minimized and the range of
$\boldsymbol\alpha$ is enlarged. Then, the problem to solve reduces to
\begin{IEEEeqnarray}{rCl}
\Delta^2_{j}(r_{ld})&=&\inf  \nu +S(\boldsymbol\alpha) \\
&\text{s.t. }& r_{ld}\geq \sum_{i=1}^{M_*}(1-\alpha_i)^+, 
\end{IEEEeqnarray}
which is the DMT problem in (\ref{eq:DeltaBoundProblem2}). Hence,
$\Delta_{2j}(r_{ld},b)= \nu +d^{*}(r_{ld})$. Bringing all together,
\begin{IEEEeqnarray}{rCl}\label{eq:JointProblem}
\Delta_{ld}(b,\nu)=\max_{r_{ld}\geq0}\{\min\{\max\{x,br_{ld}\}, \nu +d^*(r_{ld})\}\}.
\end{IEEEeqnarray}
Since $d^*(r_{ld})=0$ for $r_{ld}> M_*$, the constraint in (\ref{eq:JointProblem}) can be reduced to $0\leq r_{ld}\leq M_*$ without loss of optimality since $\Delta_{ld}(b,\nu)= \nu$ for any $r_{ld} \geq M_*$. 
Then, the maximum achieved when the two terms inside $\min\{\cdot\}$ are
equal, i.e., $\max\{br_{ld},  \nu \}= \nu +d^*(r_{ld})$.  We chose a rate $r_{ld}$ such that
$br_{ld}> \nu$ and $r_{ld}<M_*$, as otherwise, the solution is readily
given by $\Delta_{ld}(b,\nu)= \nu$. Note that for $bM_*\leq  \nu$ this is never feasible, and thus, $\Delta_{ld}(b,\nu)= \nu$, and if $\nu \geq b \cdot d^*(M_*)$, the
intersection is always at $b r_{ld}= \nu$. Assuming $k\leq r_{ld}\leq k+1$,
$k=0,...,M_*-1$, the optimal $r_{ld}$ satisfies
at $br_{ld}=d^*(r_{ld})+ \nu$, or, equivalently,  $b r_{ld}= \nu +\Phi_k-(r_c-k)\Upsilon_k$, and we have
\begin{IEEEeqnarray}{rCl}
r_{ld}^*=\frac{\Phi_k+k\Upsilon_k+\nu }{\Upsilon_k+b}, \quad \Delta_{ld}(b,\nu)=br^*_{j}=b\frac{\Phi_k+k\Upsilon_k+\nu }{\Upsilon_k+b}. \label{eq:DistSSCCaux}
\end{IEEEeqnarray}
Since solution $r_{ld}^*$ is feasible whenever $k< r^*_{j}\leq k+1$, this solution is defined in
\begin{IEEEeqnarray}{rCl}
b\in\left[\frac{\Phi_{k+1}+\nu }{k+1},
\frac{\Phi_k+\nu }{k}\right), \quad \text{for } k=0,...,M_*-1,
\end{IEEEeqnarray}
where we have used $\Phi_{k+1}=\Phi_{k}-\Upsilon_k$. Notice that, whenever $\Delta_{ld}(b,\nu)\leq  \nu$ in (\ref{eq:DistSSCCaux}), we have $br^*_{j}\leq  \nu$, which is not feasible, and therefore $\Delta_{ld}(b,\nu)= \nu$. Remember that for $bM_*\leq  \nu$ we also have $\Delta_{ld}(b,\nu)= \nu$.
Putting all these cases together completes the proof of Theorem \ref{th:ExponentSeparation}.

\section{Proof Theorem \ref{th:DistortionExponentHDAExponent}}
\label{app:DistortionExponentHDAExponent}
In this Appendix we derive 
 the distortion exponent achieved by HDA-LD.
The outage region in (\ref{eq:OutageRegionHDASec}) is given by
\begin{IEEEeqnarray}{rCl}\label{eq:OutageCondHDA}
\mathcal{O}_{h}\!=\!\Bigg{\{}(\mathbf{H},\gamma)\!&:& \! \left(1+\frac{1}{\sigma_Q^2}\right)^{M_*} \\
&&\!\geq\!\Big{(}((1+\rho_s\gamma)(1+\sigma_Q^2))^{M_*}
\cdot\!\left.\frac{\prod_{i=1}^{M_*}(1+\frac{\rho}{M_*}\lambda_i)^{bM_*}}{\prod_{i=1}^{M_*}(1+\frac{\rho}{M_*}\lambda_i+(1+\rho_s\gamma)\sigma^2_Q)}\right)\Bigg{\}}.
\end{IEEEeqnarray}

Similarly to the analysis of the previous schemes, we consider the change
of variables $\lambda_i=\rho^{-\alpha_i}$, and $\gamma=\rho^{-\beta}$, and a rate $R_h=r_h\log\rho$, for $r_h\geq 0$. Then, we start by finding the equivalent outage set in the high SNR regime. We have,
\begin{IEEEeqnarray}{rCl}
&\prod_{i=1}^{M_*}&\left(1+\frac{\rho}{M_*}\lambda_i\right)^{bM_*}\doteq \rho^{bM_*\sum_{i=1}^{M_*}(1-\alpha_i)^+}, 
\end{IEEEeqnarray}
and
\begin{IEEEeqnarray}{rCl}
\prod_{i=1}^{M_*}\left(1+\frac{\rho}{M_*}\lambda_i+(1+\rho_s\gamma)\sigma^2_Q\right) 
&\doteq& \prod_{i=1}^{M_*}\left(1+\rho^{1-\alpha_i}+(1+\rho^{\nu-\beta})\rho^{-r_h}\right) \\
&\doteq&\rho^{\sum_{i=1}^{M_*}\max\{(1-\alpha_i)^+,(\nu-\beta)^+-r_h\}} ,
\end{IEEEeqnarray}
where we use $\sigma_Q^2=(2^{R_{h}-\epsilon}-1)^{-1}=(2^{-\epsilon}\rho^{r_{h}}-1)^{-1}\doteq\rho^{-r_h}$.
For the outage condition in (\ref{eq:OutageCondHDA}), we have
\begin{IEEEeqnarray}{lCl}
\frac{\left(1+\frac{1}{\sigma_Q^2}\right)^{M_*}\prod_{i=1}^{M_*}(1+\frac{\rho}{M_*}\lambda_i+(1+\rho_s\gamma)\sigma^2_Q) 
}{((1+\rho_s\gamma)(1+\sigma_Q^2))^{M_*}\prod_{i=1}^{M_*}(1+\frac{\rho}{M_*}\lambda_i)^{bM_*}} \\
\doteq\frac{\rho^{M_* r_{h}}\rho^{\sum_{i=1}^{M_*}\max\{(1-\alpha)^+,(\nu-\beta)^+-r_h\}}}{\rho^{M_*(\nu-\beta)^+}\rho^{bM_*\sum_{i=1}^{M_*}(1-\alpha)^+}} \\
\doteq\rho^{\sum_{i=1}^{M_*}(r_h-(\nu-\beta)^++(1-\alpha_i))^+-bM_*\sum_{1}^{M_*}(1-\alpha_i)^+}. 
\end{IEEEeqnarray}

Therefore, in the high SNR regime, the set $\mathcal{O}_h$ is equivalent to the set given by
\begin{IEEEeqnarray}{lCl}
\mathcal{A}_{h} \triangleq\Bigg{\{}(\boldsymbol\alpha,\beta)^+&:&\sum_{i=1}^{M_*}(r_{h}-(\nu-\beta)^++(1-\alpha_i))^+ 
>bM_*\sum_{i=1}^{M_*}(1-\alpha_i)\Bigg{\}}. 
\end{IEEEeqnarray}
On the other hand, in the high SNR regime, the distortion achieved by HDA-LD is equivalent to
\begin{IEEEeqnarray}{rCl}
D_{h}(\sigma_Q^2,\mathbf{H},\gamma)&=&\frac{1}{M_*}\sum_{i=1}^{M_*}\left(
1+\rho_s\gamma+\frac{1}{\sigma^2_Q}\left(1+\frac{\rho}{M_*}\lambda_i\right)\huge\right)^{-1} \\
&\doteq&\sum_{i=1}^{M_*}\left(
1+\rho^{\nu-\beta}+\rho^{r_h+(1-\alpha_i)}\huge\right)^{-1} \\
&\doteq&\rho^{-\min_{i=1,...,M_*}\{\max\{(\nu-\beta)^+,r_h+1-\alpha_i\}\}} \\
&\doteq&\rho^{-\max\{(\nu-\beta)^+, r_h+1-\alpha_{1}\}} , 
\end{IEEEeqnarray}
where the last equality follows since $\alpha_{1}\geq...\geq\alpha_{M_*}\geq 0$.
Then, in the high SNR regime, the expected distortion for HDA-LD is given as
\begin{IEEEeqnarray}{rCl}
ED_h(r_h\log\rho) 
&=&\int_{\mathcal{O}_h^c} D_h(\sigma^2_{Q},\mathbf{H},\gamma) p_{h}(\mathbf{H})p_{\Gamma}(\gamma)d \mathbf{H} d \gamma \\
&&+ \int_{\mathcal{O}_h} D_{d}(0,\gamma) p_{h}(\mathbf{H})p_{\Gamma}(\gamma)d \mathbf{H} d \gamma \\
&\doteq&\int_{\mathcal{A}_j^c} \rho^{-\max\{(\nu-\beta)^+,r_h+(1-\alpha_{1})\}} p_{A}(\boldsymbol\alpha)p_{B}(\beta)d \boldsymbol\alpha d \beta \\
&&+ \int_{\mathcal{A}_j} \rho^{-(\nu-\beta)^+}p_{A}(\boldsymbol\alpha)p_{B}(\beta) d \boldsymbol\alpha d \beta . 
\end{IEEEeqnarray}

Similarly to the proof of Theorem~\ref{the:ExponentJoint}, applying Varadhan's lemma, the exponent of each integral is found as
\begin{IEEEeqnarray}{lCl}\label{eq:Delta1HDA}
\Delta^1_{h}(r_{h})&=&\inf_{\mathcal{A}_{h}^c}\max\{(\nu-\beta)^+,r_{h}+1-\alpha_{1}\}+S_A(\boldsymbol\alpha)+\beta,  
\end{IEEEeqnarray}
and
\begin{IEEEeqnarray}{lCl}\label{eq:Delta2HDA}
\Delta^2_{h}(r_{h})&=&\inf_{\mathcal{A}_{h}}(\nu-\beta)^++S_A(\boldsymbol\alpha)+\beta,
\end{IEEEeqnarray}

First we solve $\Delta^1_{h}(r_{h})$. The infimum for this problem is
achieved by $\boldsymbol\alpha^*=0$ and $\beta^*=0$, and is given by
\begin{IEEEeqnarray}{rCl}
\Delta^1_{h}(r_{h}) =\max\{x,r_{h}+1\}, &\qquad\text{for
}r_{h}\leq M_* b-1+\nu . 
\end{IEEEeqnarray}
Now we solve $\Delta^2_h(r_h)$ in (\ref{eq:Delta2HDA}). By letting $\beta^*= \nu$,
the range of $\boldsymbol\alpha$ is enlarged while the objective function is minimized. Thus, the problem  reduces to
\begin{IEEEeqnarray}{rCl}\label{eq:Delta1Hybbg2}
\Delta^2_{h}(r_{h})&=&\inf  \nu +S(\boldsymbol\alpha) \\
&\text{s.t. }&  r_{h} >\frac{bM_*-1}{M^*}\sum_{i=1}^{M_*}(1-\alpha_i)^+. 
\end{IEEEeqnarray}
Again, this problem is a scaled version of the DMT curve in (\ref{eq:DeltaBoundProblem2}). Therefore, we have
\begin{IEEEeqnarray}{rCl}
\Delta^2_{h}(r_{h})= \nu +d^*\left(\left(\frac{bM_*-1}{M_*}\right)^{-1}r_{h}\right). 
\end{IEEEeqnarray}

The distortion exponent is given by optimizing over $r_h$ as
\begin{IEEEeqnarray}{rCl}
\Delta_{h}(b,\nu)=\max_{r_{h}}\min\{\Delta^1_{h}(r_{h}),\Delta^2_{h}(r_{h})\}. 
 \end{IEEEeqnarray}
The maximum distortion exponent is obtained by
letting $\Delta^1_h(r_{h})=\Delta^2_h(r_{h})$. We assume $r_{h}+1> \nu$ since otherwise $\Delta_{h}(b,\nu)= \nu$, and then, we have
$r_{h}+1= \nu +d^*\left((b-\frac{1}{M_*})^{-1}r_{h}\right)$. Let $r_{h}'=r_{h}(b-\frac{1}{M_*})^{-1}$. Using (\ref{eq:DMTcurve}), for $k<r_{h}'\leq k+1$, $k=0,...,M_*-1$, the problem is equivalent to $r_{h}'\left(b-\frac{1}{M_*}\right)+1= \nu +\Phi_k-(r_h'-k)\Upsilon_k$, where $\Phi_k$ and $\Upsilon_k$ are given as in (\ref{eq:ML_PhiUpsilon}).  The $r_h'$ satisfying the equality is given by
\begin{IEEEeqnarray}{rCl}
r_h'^{*}=\frac{\Phi_k+k\Phi_k-1+\nu }{b-\frac{1}{M_*}+\Phi_k}, 
\end{IEEEeqnarray}
and the corresponding distortion exponent is found as
\begin{IEEEeqnarray}{rCl}
\Delta_{h}(b,\nu)=1+\frac{(bM_*-1)
(\Phi_k+k\Upsilon_k-1+\nu )}{bM_*-1+M_*\Upsilon_k}, 
\end{IEEEeqnarray}
for
\begin{IEEEeqnarray}{rCl}
b&\in&\left[\frac{\Phi_{k+1}-1+\nu }{k+1}+\frac{1}{M_*},\frac{\Phi_k-1+\nu }{k}+\frac{1}{M_*}\right), \qquad\text{ for }k=0,...,M_*-1.
\end{IEEEeqnarray}
Note that we have $r^*_h+1> \nu$ whenever $\Delta_{h}(b,\nu)> \nu$. Otherwise, $r_h^*$ is not feasible and $\Delta_{h}(b,\nu)= \nu$. Note also that if $\nu \geq bM_*$, the distortion exponent is given by
$\Delta_{h}(b,\nu)= \nu$.

\section{Proof of Theorem \ref{th:DistExpLSLD}}\label{app:ProgressiveLD}

%
In this section we obtain the distortion exponent for LS-LD. Let us define $\bar{R}_{1}^l\triangleq \sum_{i=1}^{l}R_i$. First, we consider the outage event. For the successive refinement codebook  the l.h.s. of (\ref{eq:ProgLDOutCond}) is given by
\begin{IEEEeqnarray}{rCl}
I(S;W_l|W_1^{l-1},Y)&\stackrel{(a)}{=}&I(S;W_l|Y)-I(S;W_{l-1}|Y) \\
&\overset{(b)}=&H(W_l|Y)-H(\overline{Q}_l)-H(W_{l-1}|Y)+H(\overline{Q}_{l-1}) \\
&\stackrel{(c)}{=}&\log\left(\frac{\sum_{i=l-1}^L\sigma_{i}^2}{\sum_{i=l}^L\sigma_{i}^2}\frac{1+(1+\gamma\rho_s)\sum_{j=l}^L\sigma_j^2}{1+(1+\gamma\rho_s)\sum_{j=l-1}^L\sigma_j^2}\right),\label{eq:InfExp} 
\end{IEEEeqnarray}
where $\overline{Q}_l \triangleq \sum_{i=l}^L Q_l$, and
 $(a)$ is due to the Markov chain $T-S-W_L-...-W_1$, and $(b)$ is due to the independence of $\bar{Q}_i$ from $S$
and $T$, and finally $(c)$ follows since $H(W_l|T)=\frac{1}{2}\log\left(\sum_{i=l}^L\sigma_{i}^2+\frac{1}{1+\gamma\rho_s}\right)$ for $l=1,...,L$. We also have
\begin{IEEEeqnarray}{rCl}
&I&(S;W_1|T)=\log\left(1+\frac{1}{(1+\gamma\rho_s)\sum_{i=1}^L\sigma_{i}^2}\right). 
\end{IEEEeqnarray}

Substituting (\ref{eq:Vark}) into (\ref{eq:InfExp}), we have
\begin{IEEEeqnarray}{rCl}\label{eq:RHS1}
I(S;W_l|W_1^{l-1},T)=\log\left(\frac{2^{\sum_{i=1}^l\frac{b}{L}R_i-\epsilon}+\gamma\rho_s}{2^{\sum_{i=1}^{l-1}\frac{b}{L}R_i-\epsilon}+\gamma\rho_s}\right) .
\end{IEEEeqnarray}

Then, the outage condition in (\ref{eq:ProgLDOutCond}) is given by
\begin{IEEEeqnarray}{rCl}
\log\left(\frac{2^{\sum_{i=1}^l\frac{b}{L}R_i-\epsilon}+\gamma\rho_s}{2^{\sum_{i=1}^{l-1}\frac{b}{L}R_i-\epsilon}+\gamma\rho_s}\right)\geq \frac{b}{L}\log \prod_{i=1}^{M_*}\left(1+\frac{\rho}{M_*}\lambda_i\right).
\end{IEEEeqnarray}

Therefore, in the high SNR regime, we have, for $l=1,...,L$
\begin{IEEEeqnarray}{rCl}\label{eq:ML_OutageRight}
\frac{2^{\sum_{i=1}^l(\frac{b}{L}R_i-\epsilon)}+\gamma\rho_s}{2^{\sum_{i=1}^{l-1}(\frac{b}{L}R_i-\epsilon)}+\gamma\rho_s}
&\doteq&\frac{\rho^{\sum_{i=1}^l\frac{b}{L}r_i}+\rho^{\nu-\beta}}{\rho^{\sum_{i=1}^{l-1}\frac{b}{L}r_i}+\rho^{\nu-\beta}}\\
&\doteq&\frac{\rho^{\sum_{i=1}^l\frac{b}{L}r_i-(\nu-\beta)}+1}{\rho^{\sum_{i=1}^{l-1}\frac{b}{L}r_i-(\nu-\beta)}+1} \\
&\doteq&\frac{\rho^{(\sum_{i=1}^l\frac{b}{L}r_i-(\nu-\beta))^+}}{\rho^{(\sum_{i=1}^{l-1}\frac{b}{L}r_i-(\nu-\beta))^+}}, 
\end{IEEEeqnarray}
and
\begin{IEEEeqnarray}{rCl}
\frac{b}{L}\log \prod_{i=1}^{M_*}\left(1+\frac{\rho}{M_*}\lambda_i\right)\doteq\rho^{\frac{b}{L}\sum_{i=1}^{M_*}(1-\alpha_i)^+}. 
\end{IEEEeqnarray}
The outage set (\ref{eq:ProgLDOutCond}) in the high SNR regime is equivalent to
\begin{IEEEeqnarray}{rCl}\label{eq:OutageProgrLD}
\mathcal{A}^{ls}_{l}&\triangleq&\left\{(\boldsymbol\alpha,\beta):\frac{b}{L}\sum_
{i=1}^{M_*}[\left(1-\alpha_i\right)^+\right.
<\left(\sum_{i=1}^l\frac{b}{L}r_i-(\nu-\beta)\right)^+
\!-\!\left.\left(\sum_{i=1}^{l-1}\frac{b}{L}r_i-(\nu-\beta) \right)^+\right\}.
\end{IEEEeqnarray}

Now, we study the high SNR behavior of the expected distortion.
It is not hard to see that (\ref{eq:EDLS}) is given by
\begin{IEEEeqnarray}{rCl}\label{eq:EDLSdec}
ED_{ls}(\mathbf{R})\!&=&\!\sum_{l=0}^{L}\mathrm{E}_{\mathcal{O}^{ls}_{l+1}}\left[D_{d}\left(\frac{b}{L}\bar{R}_{1}^l,\gamma\right)\right]-\mathrm{E}_{\mathcal{O}^{ls}_{l}}\left[D_{d}\left(\frac{b}{L}\bar{R}_{1}^l,\gamma\right)\right],
\end{IEEEeqnarray}
where $\mathcal{O}_{0}^{ls}\triangleq \emptyset$ and $\mathcal{O}_{L+1}^{ls}\triangleq \mathds{R}^{M_*+1}$.
For each term in (\ref{eq:EDLSdec}), we have
\begin{IEEEeqnarray}{lCl}
\mathrm{E}_{\mathcal{O}^{ls}_{l+1}}\left[D_{d}\left(\frac{b}{L}\bar{R}_{1}^l,\gamma\right)\right]
&\doteq&\int_{\mathcal{A}^{ls}_{l+1}}\rho^{-\max\{\frac{b}{L}\sum_{i=1}^{l}r_l,(\nu-\beta)^+\}}\rho^{-S_{A}(\boldsymbol\alpha)}\rho^{-\beta}d\boldsymbol\alpha d\beta,\label{eq:LSTems1}\\
\mathrm{E}_{\mathcal{O}^{ls}_{l}}\left[D_{d}\left(\frac{b}{L}\bar{R}_{1}^l,\gamma\right)\right]
&\doteq&\int_{\mathcal{A}^{ls}_{l}}\rho^{-\max\{\frac{b}{L}\sum_{i=1}^{l}r_l,(\nu-\beta)^+\}}\rho^{-S_{A}(\boldsymbol\alpha)}\rho^{-\beta}d\boldsymbol\alpha d\beta,\label{eq:LSTems2}
\end{IEEEeqnarray}
where the outage set in the high SNR regime is given by (\ref{eq:OutageProgrLD}).

Applying Varadhan's lemma to (\ref{eq:LSTems1}), the exponential behavior of (\ref{eq:LSTems1}) for $l=0,...,L-1$, is found as the solution to
\begin{IEEEeqnarray}{rCl}
\tilde{\Delta}^+_l\triangleq \inf_{\mathcal{A}^{ls}_{l+1}}\max\{b/L\bar{r}_1^{l},(\nu-\beta)^+\}+S_A(\boldsymbol\alpha)+\beta ,
\end{IEEEeqnarray}
where we define $\bar{r}_1^{l}\triangleq \sum_{i=1}^{l}r_i$. Similarly, applying Varadhan's lemma to (\ref{eq:LSTems2}),  the exponential behavior of (\ref{eq:LSTems2}) for $l=0,...,L-1$ is given by
\begin{IEEEeqnarray}{rCl}
\tilde{\Delta}_l\triangleq \inf_{\mathcal{A}^{ls}_{l}}\max\{b/L\bar{r}_1^{l},(\nu-\beta)^+\}+S_A(\boldsymbol\alpha)+\beta .
\end{IEEEeqnarray}

Since $r_1\leq r_2\leq \cdots\leq r_L $ we have $\mathcal{A}^{ls}_{l}\subseteq\mathcal{A}^{ls}_{l+1}$, and therefore $\tilde{\Delta}_{l}\geq \tilde{\Delta}^{+}_l$. Then, from (\ref{eq:EDLSdec}) we have
\begin{IEEEeqnarray}{rCl}
ED_{ls}(\mathbf{R})&\doteq&\sum_{l=0}^{L}\rho^{-\tilde{\Delta}^+_l}-\rho^{-\tilde{\Delta}_l}\doteq\!\sum_{l=0}^{L}\rho^{-\Delta^+_l} .
\end{IEEEeqnarray}
We define $\Delta^{ls}_l(\mathbf{r})\triangleq\tilde{\Delta}^+_l$, where $\mathbf{r}\triangleq [r_1,...,r_L]$. Then, the distortion exponent of LS-LD is given as follows:
\begin{IEEEeqnarray}{rCl}
\Delta^*_{ls}(b,\nu)=\max_{\mathbf{r}}\min{\Delta^{ls}_l}(\mathbf{r}). 
\end{IEEEeqnarray}

For $l\!\!=\!0$, i.e., no codeword is successfully decoded, we have
\begin{IEEEeqnarray}{rCl}
\Delta^{ls}_0(\mathbf{r})&=&\inf
 (\nu-\beta)^++\beta+S_A(\boldsymbol\alpha) \\
&&\text{s.t. }\,\frac{b}{L}\sum_{i=1}^{M_*}(1- \alpha_i)^+<\left(\frac{b}{L}r_1-(\nu-\beta)\right)^+ .
\end{IEEEeqnarray}
The infimum is achieved by $\beta= \nu$ and using the DMT in  (\ref{eq:DeltaBoundProblem2}), we have
\begin{IEEEeqnarray}{rCL}
\Delta^{ls}_0(\mathbf{r})&=& \nu +d^*\left(r_1\right). 
\end{IEEEeqnarray}
The distortion exponent when $l$ layers are successfully decoded is found as
\begin{IEEEeqnarray}{rCl}\label{eq:Problem0LS}
\Delta^{ls}_l(\mathbf{r})&=&\inf\max\left\{\frac{b}{L}\bar{r}_1^l,(\nu-\beta)^+\right\}+\beta+S_{A}(\boldsymbol\alpha) \\
&\text{s.t. }& 
\frac{b}{L}\sum_{i=1}^{M_*}\left(1-\alpha_i\right)^+<\left(\frac{b}{L}\bar{r}_1^{l+1}-(\nu-\beta)\right)^+-\left(\frac{b}{L}\bar{r}_1^{l}-(\nu-\beta)\right)^+. 
\end{IEEEeqnarray}

If $\frac{b}{L}\bar{r}_1^l\geq  \nu$, the infimum of (\ref{eq:Problem0LS}) is obtained for $\beta^*=0$ and
 \begin{IEEEeqnarray}{rCl}\label{eq:Problem1LS}
\Delta^{ls}_l(\mathbf{r})&=&\inf\frac{b}{L}\bar{r}_1^l+S_A(\boldsymbol\alpha) \\
&\text{s.t. }&\sum_{i^1}^{M_*}\left(\xi_{l}-\alpha_i\right)^+<r_{k+1} .
\end{IEEEeqnarray}
Using the DMT in (\ref{eq:DeltaBoundProblem2}), (\ref{eq:Problem1LS}) is minimized as
\begin{IEEEeqnarray}{rCl}
\Delta^{ls}_l(\mathbf{r})=\frac{b}{L}\bar{r}_{1}^l+d^*\left(r_{l+1}\right). 
\end{IEEEeqnarray}

If $\frac{b}{L}\bar{r}_1^l\leq  \nu$, we have that the minimum of (\ref{eq:Problem0LS}) is achieved by $\beta^*=\left(x-\frac{b}{L}\bar{r}_1^l\right)^+$ if $\frac{b}{L}\bar{r}_1^{l}>(\nu-\beta)$ and is given by
\begin{IEEEeqnarray}{rCl}
\Delta^{ls}_l(\mathbf{r})= \nu +d^*\left(r_{l+1}\right). 
\end{IEEEeqnarray}
If $\frac{b}{L}\bar{r}_1^{l}\leq(\nu-\beta)<\frac{b}{L}\bar{r}_1^{l+1}$, the optimization problem in (\ref{eq:Problem0LS}) is equivalent to
\begin{IEEEeqnarray}{rCl}\label{eq:Problem3LS}
\Delta^{ls}_l(\mathbf{r})&=&\inf
 (\nu-\beta)^++\beta+S_A(\boldsymbol\alpha)\\
&&\text{s.t. }\,\frac{b}{L}\sum_{i=1}^{M_*}(1- \alpha_i)^+<\left(\frac{b}{L}r_1^{l+1}-(\nu-\beta)\right)^+ ,\\
&&\qquad\frac{b}{L}\bar{r}_1^{l}\leq(\nu-\beta)<\frac{b}{L}\bar{r}_1^{l+1}. 
\end{IEEEeqnarray}
The infimum of (\ref{eq:Problem3LS}) is achieved by the largest $\beta$, since increasing $\beta$ enlarges the range of $\boldsymbol\alpha$. Then, $\beta^*=(x-\frac{b}{L}\bar{r}_1^l)^+$, and we have,
\begin{IEEEeqnarray}{rCl}
\Delta^{ls}_l(\mathbf{r})= \nu +d^*\left(r_{l+1}\right). 
\end{IEEEeqnarray}

Finally, if $\frac{b}{L}\bar{r}_1^{l+1}\leq(\nu-\beta)$, there are no feasible solutions for (\ref{eq:Problem0LS}). Therefore, putting all together we have
\begin{IEEEeqnarray}{rCl}
\Delta^{ls}_l(\mathbf{r})&=&\inf\max\left\{\frac{b}{L}\bar{r}_1^l,x\right\}+d^*(r_{l+1}).   \end{IEEEeqnarray}

Similarly, at layer $L$, the infimum is achieved by $\boldsymbol\alpha^*=0$ and $\beta^*=0$ and is given by
\begin{IEEEeqnarray}{rCl}
\Delta^{ls}_L(\mathbf{r})&=&\max\left\{\frac{b}{L}\bar{r}_1^L,x\right\}, \quad \text{for } r_L\leq M_*. 
\end{IEEEeqnarray}
Note that the condition on $r_L$ always holds.
\subsection{Solution of the distortion exponent}
Assume that for a given layer $\hat{l}$ we have $\bar{r}^{\hat{l}-1}_1\frac{b}{L}\leq x\leq\bar{r}^{\hat{l}}_1\frac{b}{L}$. Then, $\Delta^{ls}_{l}(\mathbf{r})= \nu +d(r_{l+1})$ for $l=0,...,\hat{l}-1$. Using the KKT conditions, the maximum distortion exponent is obtained when all the distortion exponents are equal.

From $\Delta^{ls}_0(\mathbf{r})=\cdots=\Delta^{ls}_{\hat{l}-1}(\mathbf{r})$ we have $r_1=\cdots=r_{\hat{l}}$, and thus, $\bar{r}_{1}^{\hat{l}}=\hat{l}r_1$. Then, the exponents are given by
\begin{IEEEeqnarray}{rCl}
\Delta^{ls}_{0}(\mathbf{r})&=& \nu +d^*(r_{1}) \\
\Delta^{ls}_{\hat{l}}(\mathbf{r})&=&b\frac{\hat{l}}{L}r_1+d^*(r_{\hat{l}+1}) \\
&&\cdots \\
\Delta^{ls}_{L-1}(\mathbf{r})&=&b\frac{\hat{l}}{L}r_1+b\frac{1}{L}\bar{r}_{\hat{l}+1}^{L-1}+d^*(r_{L}) \\
\Delta^{ls}_{L}(\mathbf{r})&=&b\frac{\hat{l}}{L}r_1+b\frac{1}{L}\bar{r}_{\hat{l}+1}^{L}. 
\end{IEEEeqnarray}

Equating all these exponents, we have
\begin{IEEEeqnarray}{rCl}
b\frac{1}{L}r_{L}&=&d^*(r_L) \\
b\frac{1}{L}r_{L-1}+d(r_L)&=&d^*(r_{L-1}) \\
&\cdots& \\
b\frac{1}{L}r_{\hat{l}+1}+d^*(r_{\hat{l}+2})&=&d^*(r_{\hat{l}+1}) \\
b\frac{l}{L}r_1+d^*(r_{\hat{l}+1})&=&d^*(r_{1})+\nu . 
\end{IEEEeqnarray}

A geometric interpretation of the rate allocation for LS-LD satisfying the above equalities is the following: we have $L-\hat{l}$ straight lines of slope $b/L$ and each line intersects in the $y$ axis at a point with the same ordinate as the intersection of the previous line with the DMT curve. The more layers we have the higher the distortion exponent of LS-LD can climb. The remaining $\hat{l}$ layers allow a final climb of slope $\hat{l}b/L$.  Note that the higher $\hat{l}$, the higher the slope, but the lower the starting point $d^*(r_{\hat{l}+1})$.

Next, we adapt Lemma 3 from \cite{gunduz2008joint} to our setup. Let $q$ be a line with equation $y=-\alpha(t-M)$ for some $\alpha>0$ and $M>0$ and let $q_i=1,...,L$ be the set of lines defined recursively from $L$ to $1$ as $y=(b/L)t+d_{i+1}$, where $b>0$, $d_{L+1}\triangleq 0$, and $d_i$ is the $y$ component of the intersection of $q_i$ with $q$.
Then, sequentially solving the intersection points for $i=\hat{l}+1,...,L$ we have:
\begin{IEEEeqnarray}{rCl}
d_{i}-d_{i+1}=M\frac{b}{L}\left(\frac{\alpha}{\alpha+b/L}\right)^{L-i+1}. 
\end{IEEEeqnarray}
Summing all the terms for $i=\hat{l}+1,...,L$ we obtain
\begin{IEEEeqnarray}{rCl}
d_i=M\alpha\left[1-\left(\frac{\alpha}{\alpha+b/L}\right)^{L-i+1}\right]. 
\end{IEEEeqnarray}

In the following we consider a continuum of layers, i.e., we let $L\rightarrow \infty$. Let $\hat{l}=\kappa L$ be the numbers of layers needed so that $b\hat{l}/L r_1=b\kappa r_1= \nu$, that is, from $l=1$ to $l=\kappa L$.

When $M_*=1$, the DMT curve is composed of a single line with $\alpha=M^*$ and $M=1$. In that case, with layers from $\kappa L+1$ to $L$ the distortion exponent increases up to
\begin{IEEEeqnarray}{rCl}
d^*(r_{L\kappa+1})&=&M\alpha\left[1-\left(\frac{\alpha}{\alpha+b/L}\right)^{L(1-\kappa)}\right]. 
\end{IEEEeqnarray}
In the limit of infinite layers, we obtain
\begin{IEEEeqnarray}{rCl}
\lim_{L\rightarrow\infty}d^*(r_{L\kappa+1})=M \alpha \left(1-e^{-\frac{b(1-\kappa)}{\alpha}}\right). 
\end{IEEEeqnarray}

We still need to determine the distortion achieved due to the climb with layers from $l=1$ to $l=\kappa L$ by determining $r_1$, which is found as the solution to $\Delta^{ls}_0(\mathbf{r})=\Delta^{ls}_{L\kappa}(\mathbf{r})$, i.e.,
\begin{IEEEeqnarray}{rCl}\label{eq:LS_SIMO_equation}
b\kappa r_1+d^*(r_{L\kappa+1})&=&\nu-\alpha(r_1-M),
\end{IEEEeqnarray}

Since $\nu =b\kappa r_1$, $r_1=x/b\kappa$, and from (\ref{eq:LS_SIMO_equation}) we get to
\begin{IEEEeqnarray}{rCl}
d^*(r_{L\kappa+1})&=&-\alpha\left(\frac{ \nu}{b\kappa}-M\right), 
\end{IEEEeqnarray}
which, in the limit of infinite layers, solves for
\begin{IEEEeqnarray}{rCl}
\kappa^*=\frac{M^*}{b}\mathcal{W}\left(\frac{e^{\frac{b}{M^*}x}}{M^*}\right), 
\end{IEEEeqnarray}
 where $\mathcal{W}(z)$ is the Lambert $W$ function, which gives the principal solution for $w$ in $z=we^{w}$.
The distortion exponent in the MISO/SIMO case is then found as
\begin{IEEEeqnarray}{rCl}
\Delta^*_{ls}(b,\nu)= \nu +M^*\left(1-e^{-\frac{b(1-\kappa^*)}{M^*}}\right). 
\end{IEEEeqnarray}

For  MIMO channels, the DMT curve is formed by $M_*$ linear pieces, each between $M_*-k$ and $M_*-k+1$ for $k=1,...,M_*$. From the value of the DMT at $M_*-k$ to the value at $M_*-k+1$, there is a gap of $M^*-M_*+2k-1$ in the $y$ abscise. Each piece of the curve can be characterized by $y=-\alpha(t-M)$, where for the $k$-th interval we have $\alpha=\phi_k$ and $M=M_k$ as in (\ref{eq:LS_alpha_and_M}).

We will again consider a continuum of layers, i.e., we let $L\rightarrow\infty$, and we let $l=L\kappa$ be the number of lines required to have $b\kappa r_1= \nu$. Then, for the remaining lines from $l+1$ to $L$, let $L(1-\kappa)\kappa_k$ be the number of lines with slope $b/L$ required to climb up the whole interval $k$. Since the gap in the $y$ abscise from the value at $M_*-k$ to the value at $M_*-k+1$, is $M^*-M_*+2k-1$, climbing the whole $k$-th  interval with $L(1-\kappa)\kappa_k$ lines requires
\begin{IEEEeqnarray}{rCl}
d_{L-L(1-\kappa)\kappa_k}&=& M^*-M_*+2 k -1, 
\end{IEEEeqnarray}
where
\begin{IEEEeqnarray}{rCl}
d_{L-L(1-\kappa)\kappa_k}&=&M\alpha\left[1-\left(\frac{\alpha}{\alpha+b/L}\right)^{L(1-\kappa)\kappa_k+1}\right]. 
\end{IEEEeqnarray}

In the limit we have
\begin{IEEEeqnarray}{rCl}
\lim_{L\rightarrow\infty}d_{L-L(1-\kappa)\kappa_k}&=&M\alpha\left[1-e^{-\frac{b(1-\kappa)\kappa_k}{\alpha}}\right]. 
\end{IEEEeqnarray}

Then, each required portion, $\kappa_k$, is found as
\begin{IEEEeqnarray}{rCl}
\kappa_k=\frac{M^*-M_*+2 k -1}{b(1-\kappa)}\ln\left(\frac{M_*-k+1}{M_*-k}\right). 
\end{IEEEeqnarray}

This gives the portion of lines required to climb up the $k$
-th segment of the DMT curve. In the MIMO case, to be able to go up exactly to the $k$-th segment with lines from $l+1$ to $L$ we need to have
$\sum_{j=1}^{k-1}\kappa_j<1\leq\sum_{j=1}^{k}\kappa_j$.
This is equivalent to the requirement $c_{k-1}<b(1-\kappa)\leq c_k$ using $c_i$ as defined in Theorem \ref{th:DistExpLSLD}. To climb up each line segment we need $\kappa_k(1-\kappa)L$ lines (layers) for $k=1,...,M_*-1$, and for the last segment climbed we have $(1-\sum_{j=1}^{k-1}\kappa_j)L$ lines remaining,
which gives an extra ascent of
\begin{IEEEeqnarray}{rCl}
M\alpha\left(1-e^{-\frac{b(1-\kappa)(1-\sum_{j=1}^{k-1}\kappa_j)}{\alpha}}\right) .
\end{IEEEeqnarray}
Then, we have climbed up to the value
\begin{IEEEeqnarray}{rCl}
d_{L\kappa+1}&=&\sum_{i=1}^{k-1}(M^*-M_*+2 i-1 ) \\
&&+(M_*-k+1)(M^*-M_*+2 k -1) \left(1-e^{-\frac{b(1-\kappa)(1-\sum_{j=1}^{k-1}\kappa_j)}{M^*-M_*+2 k -1}}\right). 
\end{IEEEeqnarray}
With the remaining lines, i.e., from $l=1$ to $l=\kappa L$, the extra climb is given by solving $\Delta^{ls}_{0}(\mathbf{r})=\Delta^{ls}_{\kappa L}(\mathbf{r})$, i.e.,
\begin{IEEEeqnarray}{rCl}
 \nu +d^*(r_{1})&=&b\kappa r_1+d_{L\kappa+1}. 
\end{IEEEeqnarray}

The diversity gain $d^*(r_1)$  at segment $k$ is given by
\begin{IEEEeqnarray}{rCl}
d^*(r_1)=-\alpha(r_1-M)+\sum_{i=1}^{k-1}(M^*-M_*+2 i-1). 
\end{IEEEeqnarray}

Since we have $b\kappa r_1= \nu$,  this equation simplifies to
\begin{IEEEeqnarray}{rCl}
d^*\left(\frac{ \nu}{b\kappa}\right)&=&d_{L\kappa+1}. 
\end{IEEEeqnarray}

Therefore, using $c_{k-1}\triangleq b(1-\kappa)\sum_{j=1}^{k-1}\kappa_j$,  we solve $\kappa$ from
\begin{IEEEeqnarray}{rCl}
-\alpha\left(\frac{ \nu}{b\kappa}-M\right)&=&M\alpha\left(1-e^{-\frac{b(1-\kappa)-c_{k-1}}{\alpha}}\right) ,
\end{IEEEeqnarray}
and find
\begin{IEEEeqnarray}{rCl}
\kappa^*=\frac{\alpha}{b}\mathcal{W}\left(\frac{e^{\frac{b-c_{k-1}}{\alpha}}x}{M \alpha}\right). 
\end{IEEEeqnarray}

The range of validity for each
$k$ is given by $c_{k-1}<b(1-\kappa)\leq c_{k}$.
Since for a given $c$, the solution to $c=b(1-\kappa^*)$ is found as
\begin{IEEEeqnarray}{rCl}
b=\frac{xe^{c_{k-1}-c}}{M}+c, 
\end{IEEEeqnarray}
when $c=c_{k-1}$, we have
\begin{IEEEeqnarray}{rCl}
b>\frac{ \nu}{M}+c_{k-1}=c_{k-1}+\frac{ \nu}{M_*-k+1}. 
\end{IEEEeqnarray}
When $c=c_{k}$, since $c_{k-1}-c_k=\alpha\ln(M/(M_*-k))$, we have
\begin{IEEEeqnarray}{rCl}
b\leq \frac{\nu e^{c_{k-1}-c_k}}{M}+c_{k}=c_{k}+\frac{ \nu}{M_*-k}. 
\end{IEEEeqnarray}

Putting all together, we obtain the condition of the theorem and the corresponding distortion exponent.

\section{ Proof of Theorem \ref{the:MLfintieLayers} }\label{app:ExponentLayerLD_DistExpSol}

We consider the usual change of
variables, $\lambda_i=\rho^{-\alpha_i}$
and $\gamma=\rho^{-\beta}$. Let $r_l$ be the multiplexing gain of the $l$-th layer and $\mathbf{r}\triangleq[r_1,...,r_L]$, such that $R_i=r_i\log \rho$, and define $\bar{r}_{1}^l\triangleq \sum_{i=1}^lr_i$.

First, we derive the outage set $\mathcal{O}^{bs}_l$ for each layer in the high SNR regime, which we denote by $\mathcal{L}_l$.
For the power allocation $\rho_l=\rho^{\xi_{l-1}}-\rho^{\xi_l}$, the
l.h.s. of the inequality in the definition of $\mathcal{O}^{bs}_l$ in (\ref{eq:OutageJointMultiple}) is given by
\begin{IEEEeqnarray}{rCl}\label{eq:ML_OutageLeft}
I(\mathbf{X}_{l};\mathbf{Y}|\mathbf{X}_{1}^{l-1})&=&I(\mathbf{X}_{l}^{L};\mathbf{Y}|\mathbf{X}_{1}^{l-1})-I(\mathbf{X}_{l+1}^{L};\mathbf{Y}|\mathbf{X}_{1}^{l-1}) \\
&=&\log\frac{\det\left(\mathbf{I}+\frac{\rho^{\xi_{l-1}}}{M_*}\mathbf{HH}^H\right)}{\det\left(\mathbf{I}+\frac{\rho^{\xi_{l}}}{M_*}\mathbf{HH}^H\right)} \\
&=&\log\prod_{i=1}^{M_*}\frac{1+\frac{\rho^{\xi_{l-1}}}{M_*}\lambda_i}{1+\frac{\rho^{\xi_{l}}}{M_*}\lambda_i} \\
&\doteq&\rho^{\sum_{i=1}^{M_*}(\xi_{l-1}-\alpha_i)^+-(\xi_{l}-\alpha_i)^+}.
\end{IEEEeqnarray}

The r.h.s. of the inequality in the definition of $\mathcal{O}^{bs}_l$ in (\ref{eq:OutageJointMultiple}) can be calculated
as in (\ref{eq:ML_OutageRight}). Then, from (\ref{eq:ML_OutageLeft}) and (\ref{eq:ML_OutageRight}), $\mathcal{L}_l$ follows as:

\begin{IEEEeqnarray}{rCl}
\mathcal{L}_l&\triangleq&\left\{(\boldsymbol\alpha,\beta):b\sum_
{i=1}^{M_*}[\left(\xi_{l-1}-\alpha_i\right)^+-\left(\xi_{l}-\alpha_i\right)^+]\right. \\
&&\qquad\qquad<\left(\sum_{i=1}^lbr_i-(\nu-\beta)\right)^+
-\left.\left(\sum_{i=1}^{l-1}br_i-(\nu-\beta) \right)^+\right\}. 
\end{IEEEeqnarray}

Since $\mathcal{O}^{bs}_l$ are mutually
exclusive, in the high SNR we have
\begin{IEEEeqnarray}{rCl}
ED_{bs}(\mathbf{R},\boldsymbol\xi)  
&=&\sum_{l=0}^L\int_{\mathcal{O}^{bs}_{l+1}}D_d\left(\sum^l_{i=0}bR_i,\gamma\right)p_{h}(\mathbf{H})
p_{\Gamma}(\gamma)d\mathbf{H} d{\gamma} \\
&\doteq&\sum_{l=0}^L\!\int_{\mathcal{L}_{l+1}}\!\!\rho^{-(\max\{\sum^l_{i=0}br_i,(\nu-\beta)^+\}+\beta+S_A(\boldsymbol\alpha))}d\boldsymbol\alpha d{\beta} \\
&\doteq&\sum_{l=0}^L\rho^{-\Delta_l(\mathbf{r},\boldsymbol\xi)}  \\
&\doteq&\rho^{-\Delta_{bs}^{L}(\mathbf{r},\boldsymbol\xi)},
\end{IEEEeqnarray}
where, from Varadhan's lemma, the exponent for each integral term is given by
\begin{IEEEeqnarray}{rCl}\label{eq:MultilayerNBJD_DistExpProblemLayer}
\Delta^{bs}_l(\mathbf{r},\boldsymbol\xi)&=&\!\inf_{\mathcal{L}_{l+1}}\!
\max\left\{b\bar{r}_{0}^l,(\nu-\beta)^+\right\}+\beta+S_A(\boldsymbol\alpha).
\end{IEEEeqnarray}
Then, the distortion exponent is found as
\begin{IEEEeqnarray}{rCl}\label{eq:MultilayerNBJD_DistExpProblem}
\Delta_{bs}^L(b,\nu)=\max_{\mathbf{r},\boldsymbol\xi}\min_{l=0,...,L}
\left\{\Delta^{bs}_l(\mathbf{r},\boldsymbol\xi)\right\}.
\end{IEEEeqnarray}

Similarly to the DMT, we consider the \emph{successive decoding diversity gain}, defined in
\cite{gunduz2008joint}, as the solution to the
probability of outage with successive decoding of each layer, given by
\begin{IEEEeqnarray}{rCl}\label{eq:succDec1}
d_{ds}(r_l,\xi_{l-1},\xi_{l})&\triangleq &\inf_{\boldsymbol\alpha^+} S_A(\boldsymbol\alpha) \\
&\text{s.t. }& r_{l}>\sum_{i=1}^{M_*}[(\xi_{l-1}
-\alpha_i)^+-(\xi_{l}-\alpha_i)^+]. 
\end{IEEEeqnarray}

 Without loss of generality, consider the multiplexing gain $r_l$ given by
$r_l=k(\xi_{l-1}-\xi_{l})+\delta_l$, where $k\in[0,1,..., M_*-1]$
and $0\leq \delta_l<\xi_{l-1}-\xi_l$. Then, the infimum for
(\ref{eq:succDec1}) is found as
\begin{IEEEeqnarray}{lCl}\label{eq:SucceDecDivGain}
d_{ds}(r_l,\xi_{l-1},\xi_{l})=\Phi_k\xi_{l-1}-\Upsilon_k\delta_l,
\end{IEEEeqnarray}
with
\begin{IEEEeqnarray}{rCl}
\alpha_i^*=\begin{cases} \xi_{l-1},& 1\leq i< M_*-k,\\
\xi_{l-1}-\delta_l,& i=M_*-k,\\
0,& M_*-k<i\leq M_*.
\end{cases} 
\end{IEEEeqnarray}

Now, we solve (\ref{eq:MultilayerNBJD_DistExpProblemLayer}), using (\ref{eq:SucceDecDivGain}) for each layer, as a function of the power allocation $\xi_{l-1}$ and $\xi_{l}$, and the rate
$r_{l}$.

When no layer is successfully decoded, i.e., $l=0$, we have
\begin{IEEEeqnarray}{lCl}
\Delta^{bs}_0(\mathbf{r},\boldsymbol\xi)=\inf
(\nu-\beta)^++\beta+S_A(\boldsymbol\alpha) \\
\quad\text{s.t. }\,b\sum_{i=1}^{M_*}\left[(\xi_0- \alpha_i)^+-(\xi_1-\alpha_i)^+\right]<(br_1-(\nu-\beta))^+ .
\end{IEEEeqnarray}
The infimum is achieved by $\beta^*= \nu$ and using (\ref{eq:succDec1}), we have
\begin{IEEEeqnarray}{rCL}
\Delta^{bs}_0(\mathbf{r},\boldsymbol\xi)&=& \nu +d_{ds}\left(r_1,\xi_{0},\xi_{1}\right). 
\end{IEEEeqnarray}
At layer $l$, the distortion exponent  is given by the solution of the following optim
\begin{IEEEeqnarray}{rCl}
\Delta^{bs}_l(\mathbf{r},\boldsymbol\xi)&=&\inf\max\{b\bar{r}_1^l,(\nu-\beta)^+\}+\beta+S_A(\boldsymbol\alpha)  \\
&\text{s.t. }&
b\sum_{i=1}^{M_*}\left[\left(\xi_l-\alpha_i\right)^+-\left(\xi_{l+1}-\alpha_i\right)^+\right] 
<(b\bar{r}_1^{l+1}- \nu +\beta)^+-(b\bar{r}_1^{l}- \nu +\beta)^+. 
\end{IEEEeqnarray}

If $b\bar{r}_1^l\geq  \nu$, the infimum is obtained for $\beta^*=0$ and solving
\begin{IEEEeqnarray}{rCl}
\Delta^{bs}_l(\mathbf{r},\boldsymbol\xi)&=&\inf\max\{b\bar{r}_1^l, \nu \}+S_A(\boldsymbol\alpha)  \\
&\text{s.t. }&
\sum_{i^1}^{M_*}\left[\left(\xi_{l}-\alpha_i\right)^+-\left(\xi_{l+1}-\alpha_i\right)^+\right]<r_{k+1} .
\end{IEEEeqnarray}
Using (\ref{eq:succDec1}), we obtain the solutio as
\begin{IEEEeqnarray}{rCl}
\Delta^{bs}_l(\mathbf{r},\boldsymbol\xi)=\max\{x,b\bar{r}_{1}^l\}+d_{ds}\left(r_{l+1},\xi_{l},\xi_{l+1}\right). 
\end{IEEEeqnarray}

If $b\bar{r}_1^l\leq  \nu$, the infimum is given by $\beta^*=(x-b\bar{r}_1^l)^+$,
and again, we have a version of (\ref{eq:succDec1}) with the
distortion exponent
\begin{IEEEeqnarray}{rCl}
\Delta^{bs}_l(\mathbf{r},\boldsymbol\xi) =  \nu +d_{ds}\left(r_{l+1},\xi_{l},\xi_{l+1}\right). 
\end{IEEEeqnarray}

At layer $L$, the distortion exponent is the solution to the optimization problem
\begin{IEEEeqnarray}{rCl}
\Delta^{bs}_L(\mathbf{r},\boldsymbol\xi)&=&\inf
\max\left\{b\bar{r}_1^L,(\nu-\beta)^+\right\}+\beta+S_A(\boldsymbol\alpha) \\
&\text{s.t.}&b\sum_{i=1}^{M_*}[\left(\xi_{L-1}-\alpha_i\right)^+-\left(\xi_{L}-\alpha_i\right)^+] 
\geq\left(b\bar{r}_1^L-(\nu-\beta)\right)^+-\left(b\bar{r}^{L-1}_{1}-(\nu-\beta)\right)^+. 
\end{IEEEeqnarray}
The infimum is achieved by $\boldsymbol\alpha^*\!=\!0$ and $\beta^*\!=\!0$, and  is given
by
\begin{IEEEeqnarray}{rCl}
\Delta^{bs}_L(\mathbf{r},\boldsymbol\xi)&=&\max\left\{b\bar{r}_1^L,x\right\}, \quad \text{for } r_L\leq M_*(\xi_{L-1}-\xi_L). 
\end{IEEEeqnarray}
Note that the condition on $r_L$ always holds.

 Gathering all the results, the
distortion exponent problem in (\ref{eq:MultilayerNBJD_DistExpProblem}) is solved as the minimum of the exponent of each layer, $\Delta^{bs}_l(\mathbf{r},\boldsymbol\xi)$, which can be formulated as
\begin{IEEEeqnarray}{rCl}\label{eq:MLSystemMIn}
\Delta_{bs}^L(b,\nu)&=&\max_{\mathbf{r},\boldsymbol\xi} \,t \\
\text{s.t. }&&t\leq  \nu +d_{sd}\left(r_{1},\xi_{0},\xi_{1}\right), \\
&&t\leq \max\{b\bar{r}_1^l, \nu \}+d_{sd}\left(r_{l+1},\xi_{l},\xi_{l+1}\right),  \quad\text{for }l=1,\dots, L-1, \\
&&t\leq\max\{ b\bar{r}_1^L, \nu \}.
\end{IEEEeqnarray}

If $\nu  \geq b\bar{r}^L_1$, then $\max\{x,b\bar{r}_1^l\}= \nu$ for all $l$,
and the minimum distortion exponent is given by $\Delta^{bs}_{L}(\mathbf{r},\boldsymbol\xi)= \nu$, which implies $\Delta_{mj}^L(b,\nu)= \nu$. If
$\nu \leq  br_1$, then $\max\{x,b\bar{r}_1^l\}=b\bar{r}_{1}^l$ for all $l$. In general, if  $b\bar{r}^{q}_1< x \leq b\bar{r}^{q+1}_1$, $q=0,..., L$, and $\bar{r}^{0}_1\triangleq0$, $\bar{r}^{L+1}_1\triangleq\infty$, then (\ref{eq:MLSystemMIn}) can be formulated, using $r_l=k(\xi_{l-1}-\xi_{l})+\delta_l$, $\boldsymbol\delta\triangleq [\delta_1,\cdots,\delta_L]$ and $\boldsymbol\xi$, as the following linear optimization program:
\begin{IEEEeqnarray}{rCl}\label{eq:MlSystem2}
\Delta_{bs}^L(b,\nu)&=&\min_{\substack{1\leq q \leq L,\\0\leq k\leq M_*-1. }}\min_{\boldsymbol\delta,\boldsymbol\xi} \,-t \\
\text{s.t. }&&t\leq  \nu +\Phi_k\xi_{0}-\Upsilon_k\delta_1, \\
&&t\leq  \nu +\Phi_k\xi_{l}-\Upsilon_k\delta_{l+1}, \quad\text{for }\;l=1,\dots, q, \\
&&t\leq b\sum_{i=1}^l[k(\xi_{i-1}-\xi_{i})+\delta_i]+\Phi_k\xi_{l}-\Upsilon_k\delta_{l+1}, \\ &&\quad\text{for }\;l=q,\dots, L-1, \\
&&t\leq b\sum_{i=1}^L[k(\xi_{i-1}-\xi_{i})+\delta_i], \\
&&0\leq \delta_l< \xi_{l-1}-\xi_{l},\quad\text{for }\;l=1,\dots, L, \\
&&0\leq \xi_L\leq...\leq \xi_1\leq \xi_0=1, \\
&&\sum_{l=1}^{l'}[bk(\xi_{l-1}-\xi_{l})+\delta_l]<\nu.
\end{IEEEeqnarray}

The linear program (\ref{eq:MlSystem2}) can be efficiently solved using numerical methods. In Figure \ref{fig:MIMODistortionExponentMultiBad}, the numerical solution is shown. However, in the following we provide a suboptimal yet more compact analytical solution by fixing the multiplexing gains $\mathbf{r}$.  We fix the multiplexing gains  as $\hat{r}_l=[(k+1)(\xi_{l-1}-\xi_{l})-\epsilon_1]$, $\epsilon_1>0$
for $k=0,...,M_*-1$,  and $\delta_l\triangleq(\xi_{l-1}-\xi_{l})-\epsilon_1$, when the bandwidth ratio satisfies
\begin{IEEEeqnarray}{rCl}\label{eq:ML_Regime}
b&\in&\left[\frac{\Phi_{k+1}+\nu }{k+1},\frac{\Phi_k+\nu }{k} \right).
\end{IEEEeqnarray}

Assume $br_1\!\geq\!   \nu$. Then, each distortion exponent is found as
\begin{IEEEeqnarray}{rCl}\label{eq:BSLD-System}
\hat{\Delta}^{bs}_{0}(\mathbf{r},\boldsymbol\xi)&=& \nu +\Phi_k\xi_{l}-\Upsilon_k\delta_{l+1}, \\
\hat{\Delta}^{bs}_{l}(\mathbf{r},\boldsymbol\xi)&=&b\bar{r}_1^l+\Phi_k\xi_{l}-\Upsilon_k\delta_{l+1},\quad \text{for } l=1,...,L-1, \\
\hat{\Delta}^{bs}_{L}(\mathbf{r},\boldsymbol\xi)&=&b\bar{r}_1^L.
\end{IEEEeqnarray}

Similarly to the other schemes, for which the distortion exponent is maximized by equating the exponents, we look for the power allocation $\boldsymbol\xi$, such that all distortion exponent terms $\Delta^{bs}_{l}(\hat{\mathbf{r}},\boldsymbol\xi)$ in (\ref{eq:MultilayerNBJD_DistExpProblem}) are equal.

Equating all distortion
exponents $\hat{\Delta}^{bs}_l(\mathbf{\hat{r}},\boldsymbol\xi)$ for $l=2,...,L-1$, i.e.,
$\hat{\Delta}^{bs}_{l-1}(\mathbf{\hat{r}},\boldsymbol\xi)=\hat{\Delta}^{bs}_{l}(\mathbf{\hat{r}},\boldsymbol\xi)$, we have
\begin{IEEEeqnarray}{rCl}\label{eq:Sim}
d_{sd}\left(\hat{r}_{l},\xi_{l-1},\xi_{l}\right)=
br_{l}+d_{sd}\left(\hat{r}_{l+1},\xi_{l},\xi_{l+1}\right).
\end{IEEEeqnarray}
Since
$\hat{r}_l=[(k+1)(\xi_{l-1}-\xi_{l})-\epsilon_1]$, we have
\begin{IEEEeqnarray}{lCl}
d_{sd}\left(\hat{r}_{l},\xi_{l-1},\xi_{l}\right)&=&
\Phi_k\xi_{l-1}-\Upsilon_k(\xi_{l-1}-\xi_l-\epsilon_1). 
\end{IEEEeqnarray}
Substituting in (\ref{eq:Sim}), we find that the power allocations for $l\geq2$
need to satisfy,
\begin{IEEEeqnarray}{rCl}
(\xi_{l}-\xi_{l+1})=\eta_{k}(\xi_{l-1}-\xi_{l})+\mathcal{O}(\epsilon_1) ,
\end{IEEEeqnarray}
where $\eta_k$ is defined in (\ref{eq:ML_eta}) and $\mathcal{O}(\epsilon_1)$ denotes a term that tends to 0 as $\epsilon_1\rightarrow 0$. Then, for $l=2,...,L-1$ we obtain
\begin{IEEEeqnarray}{rCl}\label{eq:PowAllocRecurs}
\xi_l-\xi_{l+1}=\eta_k^{l-1}(\xi_1-\xi_2)+\mathcal{O}(\epsilon_1),
\end{IEEEeqnarray}
and $\xi_l$ can be found as
\begin{IEEEeqnarray}{rCl}
1-\xi_l&=&(1-\xi_1)+\sum_{i=1}^{l-1}(\xi_i-\xi_{i+1})+\mathcal{O}(\epsilon_1) \\
&=&(1-\xi_1)+\sum_{i=1}^{l-1}\eta_k^{i-1}(\xi_1-\xi_{2})+\mathcal{O}(\epsilon_1) \\
&=&(1-\xi_1)+(\xi_1-\xi_{2})\frac{1-\eta_k^{l-1}}{1-\eta_k}+\mathcal{O}(\epsilon_1) .
\end{IEEEeqnarray}

Then, for $l=2,...,L$, we have
\begin{IEEEeqnarray}{rCl}\label{eq:ML_gamma_l}
\xi_l=\xi_1-(\xi_1-\xi_2)\frac{1-\eta_{k}^{l-1}}{1-\eta_k}+\mathcal{O}(\epsilon_1).
\end{IEEEeqnarray}

From $\hat{\Delta}^{bs}_L(\mathbf{\hat{r}},\boldsymbol\xi)\!=\!b\bar{r}_1^L=b\sum_{i=1}^L(k+1)(\xi_{i-1}-\xi_i)$,
we have
\begin{IEEEeqnarray}{rCl}\label{eq:BSLD_DeltaL}
\hat{\Delta}^{bs}_L(\mathbf{\hat{r}},\boldsymbol\xi) 
&=&b(k+1)(\xi_0-\xi_1)+ b(k+1)
(\xi_2-\xi_1)\sum_{i=1}^{L}\eta_k^{i-1}+\mathcal{O}(\epsilon_1) \\
&=&b(k+1)\left[(\xi_0-\xi_1)+
(\xi_2-\xi_1)\frac{1-\eta_k^{L-1}}{1-\eta_k}\right]\!+\!\mathcal{O}(\epsilon_1).
\end{IEEEeqnarray}

Putting all together, from (\ref{eq:BSLD-System}) we obtain
\begin{IEEEeqnarray}{rCl}
\hat{\Delta}^{bs}_0(\mathbf{\hat{r}},\boldsymbol\xi) &=&  \nu + \Phi_k \xi_0-\Upsilon_k(\xi_0-\xi_1-\epsilon_1), \\
\hat{\Delta}^{bs}_l(\mathbf{\hat{r}},\boldsymbol\xi) &=& b(k+1)(\xi_0-\xi_1)+ \Phi_k \xi_1-\Upsilon_k(\xi_1-\xi_2+\epsilon_1), 
\quad \text{for } l=1,...,L-1, \\
\hat{\Delta}^{bs}_L(\mathbf{\hat{r}},\boldsymbol\xi) &=& b
(k+1)[(\xi_0-\xi_1)+(\xi_2-\xi_1)\Gamma_k]\!+\!\mathcal{O}(\epsilon_1).
\end{IEEEeqnarray}
By solving $\hat{\Delta}_{bs}^L(b,\nu)=\hat{\Delta}^{bs}_0(\mathbf{\hat{r}},\boldsymbol\xi)=\hat{\Delta}^{bs}_1(\mathbf{\hat{r}},\boldsymbol\xi)=\hat{\Delta}^{bs}_L(\mathbf{\hat{r}},\boldsymbol\xi)$, and letting $\epsilon_1\rightarrow 0$, we obtain (\ref{ML_distExpL}), and
\begin{IEEEeqnarray}{lCl}\label{ML_gamma1and2}
\xi_1&=&\frac{(\Upsilon_k+\Phi_k\Gamma_k)(\Upsilon_k+b(k+1)-\Phi_k- \nu )}{(\Upsilon_k+b(1+k))(\Upsilon_k+b(1+k)\Gamma_k)-b(k+1)\Phi_k\Gamma_k}, \\
\xi_1-\xi_2
&=&\frac{\Phi_k(\Upsilon_k+b(k+1)-\Phi_k- \nu )}{(\Upsilon_k+b(1+k))(\Upsilon_k+b(1+k)\Gamma_k)-b(k+1)\Phi_k\Gamma_k}.
\end{IEEEeqnarray}

For this solution to be feasible, the power allocation sequence has to satisfy $1\geq\xi_1\geq...\xi_L\geq0$, i.e., $\xi_l-\xi_{l+1}\geq0$. From (\ref{eq:PowAllocRecurs}) we need $\eta_k\geq0$ and $\xi_1-\xi_2\geq0$.
We have $\eta_k \geq0$ if
$b\geq\frac{\Phi_{k+1}}{k+1}$,
which holds in the regime characterized by (\ref{eq:ML_Regime}). Then, $\xi_1-\xi_2\geq0$ holds if $\Upsilon_k+b(k+1)-\Phi_k- \nu \geq 0$ and $(\Upsilon_k+b(1+k))(\Upsilon_k+b(1+k)\Gamma_k)-b(k+1)\Phi_k\Gamma_k\geq0$. It can be shown that
$(\Upsilon_k+b(1+k))(\Upsilon_k+b(1+k)\Gamma_k)-b(k+1)\Phi_k\Gamma_k$ is monotonically increasing in $b\geq0$, and positive for $k=0,...,M_*-1$. Therefore, we need to check if $\Upsilon_k+b(k+1)-\Phi_k- \nu \geq 0$. This holds since this condition is equivalent to
\begin{IEEEeqnarray}{rCl}
b\geq\frac{\Phi_{k+1}+\nu }{k+1}. 
\end{IEEEeqnarray}
Note that, in this regime, we have $\xi_1\geq0$. In addition, $\xi_l=\xi_1+(\xi_1-\xi_2)\Gamma_k\geq0$. Therefore, for each $k$ the power allocation is feasible in the regime characterized by (\ref{eq:ML_Regime}). It can also be checked that $br_1> \nu$ is satisfied. This completes the proof.



\bibliographystyle{ieeetran}
\bibliography{ref}
\end{appendices}

\end{document}